\newif\ifarxiv 
\definecolor{mylinkcolor}{rgb}{0,0,0.8} 
\newtheorem{lemma}{Lemma}
\newtheorem{corollary}{Corollary}
\newtheorem{conjecture}{Conjecture}
\newtheorem{proposition}{Proposition} 
\theoremstyle{definition} 
\newtheorem{definition}{Definition}
\newcommand{\ket}[1]{| #1 \rangle}
\newcommand{\bra}[1]{\langle #1 |}
\newcommand{\ketbra}[2]{|#1\rangle\!\langle#2|}
\newcommand{\tr}{\mathrm{tr}}
\begin{document}

\title{Expanding bipartite Bell inequalities for maximum multi-partite randomness}

\author{Lewis Wooltorton}
    \email{lewis.wooltorton@ens-lyon.fr}
    \affiliation{Department of Mathematics, University of York, Heslington, York, YO10 5DD, United Kingdom}
    \affiliation{Quantum Engineering Centre for Doctoral Training, H. H. Wills Physics Laboratory and Department of Electrical \& Electronic Engineering, University of Bristol, Bristol BS8 1FD, United Kingdom}
    \affiliation{Inria, ENS de Lyon, LIP, 46 Allee d’Italie, 69364 Lyon Cedex 07, France}
\orcid{0000-0002-7950-3844}
\author{Peter Brown}
    \email{peter.brown@telecom-paris.fr}
    \affiliation{Télécom Paris, LTCI, Institut Polytechnique de Paris,
19 Place Marguerite Perey, 91120 Palaiseau, France}
\orcid{0000-0001-9593-0136}
\author{Roger Colbeck}
    \email{roger.colbeck@kcl.ac.uk}
    \affiliation{Department of Mathematics, King's College London, Strand, London, WC2R 2LS, United Kingdom}
    \affiliation{Department of Mathematics, University of York, Heslington, York, YO10 5DD, United Kingdom}
\orcid{0000-0003-3591-0576}

\date{$1^{\text{st}}$ December 2025}

\begin{abstract}
    Nonlocal tests on multi-partite quantum correlations form the basis of protocols that certify randomness in a device-independent (DI) way. Such correlations admit a rich structure, making the task of choosing an appropriate test difficult. For example, extremal Bell inequalities are tight witnesses of nonlocality, but achieving their maximum violation places constraints on the underlying quantum system, which can reduce the rate of randomness generation. As a result there is often a trade-off between maximum randomness and the amount of violation of a given Bell inequality. Here, we explore this trade-off for more than two parties. More precisely, we study the maximum amount of randomness that can be certified by correlations with a particular violation of the Mermin-Ardehali-Belinskii-Klyshko (MABK) inequality. For any even number of parties, we find that maximum randomness cannot occur beyond a threshold quantum violation, which increases with the number of parties, and we give a conjectured form of the maximum randomness in terms of the MABK value. We also show that maximum randomness can be obtained for any MABK violation for odd numbers of parties. To obtain our results, we derive new families of Bell inequalities certifying maximum randomness from a technique for randomness certification, which we call ``expanding Bell inequalities''. Our technique allows a bipartite Bell expression to be used as a seed, and transformed into a multi-partite Bell inequality tailored for randomness certification, showing how intuition learned in the bipartite case can find use in more complex scenarios.           
\end{abstract}
\maketitle
\twocolumngrid

\section{Introduction}
Distant measurements on a shared quantum system can display correlations inaccessible to any locally realistic model \cite{Bell_book,Brunner_review}. Such correlations are termed nonlocal and provide a device-independent (DI) witness of useful quantum characteristics, opening up a new paradigm of information processing~\cite{BarrettNonlocalResource}. Tasks such as randomness expansion~\cite{ColbeckThesis,PAMBMMOHLMM,CK2,MS1,MS2}, amplification~\cite{CR_free} and key distribution~\cite{Ekert,BHK,ABGMPS,PABGMS,VV2,ADFRV} can all be achieved in the DI regime, where security is proven based on the observed correlations and without trusting the inner workings of the devices. Moreover, nonlocal correlations can be used to prove the presence of a particular state or sets of measurements, known as self-testing~\cite{mayers2004self,McKagueSinglet,YangSelfTest,Coladangelo2017,Supic_2023,SupicSelfTest}.   

Certification of DI randomness typically uses a Bell inequality, and different Bell inequalities give different amounts of certified randomness. It is natural to consider extremal Bell inequalities, which form the minimal set separating local and non-local correlations. The set of correlations compatible with high violation is tightly constrained and can negatively affect the certification of randomness. For example, in the simplest bipartite scenario, correlations with higher violation of the only extremal Bell inequality, the Clauser-Horne-Shimony-Holt (CHSH) inequality, cannot certify maximal DI randomness, while those with a lower violation can~\cite{AcinRandomnessNonlocality,WBC}. Further complications arise when scenarios with more inputs, outputs or parties are considered because the number of different classes of extremal Bell inequalities grows and it becomes more difficult to navigate the trade-offs between extremal Bell violation and maximum certifiable randomness. Alongside foundational interest, this question is also motivated practically, in finding more robust device-independent randomness expansion (DIRE) protocols~\cite{WBC}. 

In this work we study the family of multi-partite Bell inequalities due to Mermin, Ardehali, Belinskii, and Klyshko (MABK)~\cite{Mermin,Ardehali92,Belinskii_1993} and their application to DIRE. These all have two inputs and two outputs per party. The one and two sided randomness given a tripartite MABK violation was bounded in Refs.~\cite{WoodheadMermin,GrasselliMulti21}, and compared to other tripartite inequalities in Ref.~\cite{Grasselli22}. Considering global randomness, Ref.~\cite{Bamps_2018} gave a bipartite construction for certification of $M$ bits of randomness using $M$ copies of almost unentangled states. 

Ref.~\cite{DharaMaxRand} showed how, for an odd number of parties $N$, maximum MABK violation can certify $N$ bits of DI randomness, whilst Ref.~\cite{delaTorreMaxNonlocal} provided an alternative family of Bell inequalities certifying $N$ bits in the even case. This leaves open the following questions: how suitable is MABK for maximum randomness generation when $N$ is even? Does the MABK violation trade-off with the maximum certifiable randomness, as was shown for the CHSH inequality~\cite{WBC}? Similarly, are there correlations with a non-maximal MABK violation that achieve maximum randomness when $N$ is odd? 

To analyse multi-partite nonlocality, we require tools to construct multi-partite Bell tests. To do so, one approach is to generalize intuition from the well understood CHSH scenario; this has been done to extend the number of measurements per party~\cite{Braunstein90,Supic_2016}, or the number of parties~\cite{Curchod_2019}, where it was shown how to construct multi-partite Bell inequalities from a bipartite seed. Self-testing results using projections onto bipartite subsystems~\cite{Supic_2018,Zwerger19,Li18} suggest stronger capabilities of the technique presented in Ref.~\cite{Curchod_2019}, prompting the question: can it be leveraged for multi-partite randomness certification?

Our work seeks to answer these questions. We first show that maximum MABK violation certifies $N + 1/2 - \log_{2}(1+\sqrt{2})/\sqrt{2} \approx N - 0.4$ bits of global randomness when $N$ is even, contrasting the $N$ bits certified when $N$ is odd~\cite{DharaMaxRand}. We then investigate the limits on certifiable randomness from quantum behaviours that violate an MABK inequality. To do so, we construct new families of $N$-partite Bell inequalities whose maximum violation certifies $N$ bits for arbitrary $N$. When $N$ is even, we show that for any MABK violation $m$ up to a non-maximal value $m^{*}$, there exists a quantum behaviour achieving $m$ which maximally violates an inequality from our family, and hence certifies maximum global randomness. For violations above $m^{*}$, we find (using a second family of inequalities) that randomness monotonically decreases with MABK violation. Our results lower bound this trade-off, and we provide numerical evidence of tightness, indicating an extension of the known tight bound when $N=2$~\cite{WBC} to arbitrary $N$. Additionally, we show that this trade-off diminishes (in the sense that the gap between $m^*$ and maximal MABK violation decreases) as the number of parties grows. Finally, for odd $N$, we use our family of Bell inequalities to show that $N$ bits of randomness can be certified for a range of MABK violations.

Our constructions are formed from a type of concatenation of bipartite Bell inequalities, allowing us to lift our intuition derived from the bipartite case~\cite{WBC} to the multipartite setting. Specifically, we extend the technique presented in Ref.~\cite{Curchod_2019}, originally introduced to witness genuinely multi-partite nonlocality. Our extension exploits self-testing properties of the seed to certify randomness. Specifically, our main technical contribution is a decoupling lemma (see \cref{lem:BIent}), which shows that under certain conditions the maximum violation of an expanded Bell expression implies a decoupling between the honest parties and Eve. Such decoupling occurs for extremal quantum correlations, and guarantees security~\cite{Franz11}. We envisage that this technique will be a useful tool in multi-partite DI cryptography and of independent interest. 

The paper is structured as follows. In \cref{sec:back} we provide the necessary background and notation. In \cref{sec:CB} we detail our enhancement of the technique from \cite{Curchod_2019}, along with the decoupling lemma. In \cref{sec:ex1} we describe our constructions, and show that they certify $N$ bits of randomness for a broad range of MABK violations, which we conjecture to be optimal; this result is summarized in \cref{lem:RvM1_even,lem:RvM1_odd}. We additionally show how the conjectured highest value of MABK at which maximum randomness remains possible tends to the maximum quantum value as the number of parties increases (see \cref{thm:asymp}). In \cref{sec:ex2} we consider MABK values beyond the conjectured threshold for maximum randomness, up to the maximum quantum value. Here we give a candidate form for the exact trade-off, which is summarized in \cref{lem:RvM2}. We conclude with a discussion in \cref{sec:dis}. All technical proofs can be found in the appendices, and we provide an accompanying python script for the numerical aspects at the GitHub repository~\cite{code}.     
 
\section{Background}
\label{sec:back}
\subsection{Multiparty DI-scenario}
We consider an $N$ party, 2-input 2-output DI scenario, where $N$ isolated devices are given a random input, labelled $x_{k} \in \{0,1\}$, and produce an output $a_{k} \in \{0,1\}$ 
stored in a classical register $R_{k}$, where $k \in \{1,...,N\}$ indexes the party. We use bold to denote tuples; for example, $\bm{a} = (a_{1},...,a_{N})$ denotes an $N$ bit string of device outputs. The behaviour of the devices is then described by the joint conditional probability distribution $p(\bm{a}|\bm{x})$, which must be no-signalling because the devices are isolated.

A behaviour $p(\bm{a}|\bm{x})$ is quantum if there exists (following Naimark's dilation theorem \cite{paulsen_2003}) a pure state and sets of orthonormal projectors that can reproduce the distribution via the Born rule. Specifically, we consider an adversary Eve, who wishes to guess the device outputs, and let $\ket{\Psi}_{\tilde{\bm{Q}}E}$ denote the global state in the Hilbert space $\mathcal{H}_{\tilde{\bm{Q}}} \otimes \mathcal{H}_{E}$, where $\mathcal{H}_{E}$ is held by Eve, $\mathcal{H}_{\tilde{\bm{Q}}} = \bigotimes_{k=1}^{N}\mathcal{H}_{\tilde{Q}_{k}}$ and $\mathcal{H}_{\tilde{Q}_{k}}$ is held by device $k$. Throughout the text, tildes will denote elements pertaining to physical Hilbert spaces $\mathcal{H}_{\tilde{Q}_{k}}$ (whose dimension is unknown), whilst no tilde will describe qubit Hilbert spaces $\mathcal{H}_{Q_{k}}$. We let $\{\tilde{P}_{a_{k}|x_{k}}^{(k)}\}_{a_{k}}$ be binary-outcome projective measurements on $\mathcal{H}_{\tilde{Q}_{k}}$, and we denote the corresponding observables of each party by $\tilde{A}_{x_{k}}^{(k)} = \tilde{P}_{0|x_{k}}^{(k)} - \tilde{P}_{1|x_{k}}^{(k)}$; here bracketed superscripts will typically keep track of the party to which the object belongs (i.e., the Hilbert space on which it acts). Following measurements $\bm{x}$ by the $N$ honest parties, we obtain the classical quantum state $\rho_{\bm{R}E|\bm{x}} = \sum_{\bm{a}}\ketbra{\bm{a}}{\bm{a}}_{\bm{R}} \otimes \rho_{E}^{\bm{a}|\bm{x}}$, where $\rho_{E}^{\bm{a}|\bm{x}} = \mathrm{Tr}_{\tilde{\bm{Q}}}\big[(\tilde{P}_{\bm{a}|\bm{x}} \otimes \mathbb{I}_{E})\ketbra{\Psi}{\Psi}_{\tilde{\bm{Q}}E}\big]$ and $\tilde{P}_{\bm{a}|\bm{x}} = \bigotimes_{k=1}^{N}\tilde{P}_{a_{k}|x_{k}}^{(k)}$. The behaviour (as seen by the $N$ honest parties) is recovered by $p(\bm{a}|\bm{x}) = \mathrm{Tr}\big[\rho_{E}^{\bm{a}|\bm{x}}\big]$.

\subsection{Multiparty Nonlocality}
Given an observed distribution $p(\bm{a}|\bm{x})$, it will be useful to quantify its distance from the local boundary. When $N=2$, the CHSH inequality, as the only non-trivial facet, is a natural choice. However, the multiparty scenario is more complex, and the number of classes of Bell inequality increases rapidly with $N$~\cite{Werner01}.  Instead, a general way to quantify this distance for an arbitrary scenario can be used, which is related to how much the local polytope needs to be diluted to encompass a given non-local correlation. Computing this can be done using linear programming (see \cref{app:LP}). 

In this work, we choose to study one such CHSH generalization and its relationship to DI randomness; the MABK family \cite{Mermin,Ardehali92,Belinskii_1993}, 
\begin{equation}
     \langle M_{N} \rangle = 2^{\frac{1-N}{2}} \sum_{\bm{x}}\cos \Big[ \frac{\pi}{2}\Big(\frac{N-1}{2} - \sum_{k=1}^{N}x_{k}\Big)\Big]  \langle \tilde{A}_{\bm{x}}  \rangle , \label{eq:MABK}
\end{equation}
where $\langle \tilde{A}_{\bm{x}} \rangle = \sum_{\bm{a}}(-1)^{\sum_{k=1}^{N}a_{k}}p(\bm{a}|\bm{x}) = \bra{\Psi}\tilde{A}_{x_{1}}^{(1)} \otimes \cdots \otimes \tilde{A}_{x_{N}}^{(N)} \otimes \mathbb{I}_{E} \ket{\Psi}$ when the behaviour is quantum. [In effect the prefactor $(-1)^{\sum_{k=1}^{N}a_{k}}$ corresponds to relabelling the outcomes $0\mapsto1$ and $1\mapsto-1$ to match the usual formulation of observables with eigenvalues $\pm1$.] The local bound is given by $\langle M_{N} \rangle \leq 1$, and the maximum quantum value is $2^{(N-1)/2}$. Note that the MABK functional is characterized by the fact that the coefficients $c_{\bm{x}} = \cos \Big[ \frac{\pi}{2}\Big(\frac{N-1}{2} - \sum_{k=1}^{N}x_{k}\Big)\Big]$ are equal for all strings $\bm{x}$ with the same Hamming weight (that is, the number of $1$s in the string $\bm{x}$). When $N$ is even, $c_{\bm{x}} \in \{-1/\sqrt{2},1/\sqrt{2}\}$, and when $N$ is odd $c_{\bm{x}} \in \{-1,0,+1\}$. This is illustrated by writing out the first few cases. To simplify the notation, we let $A_{x}^{(1)} = A_{x}, \ A_{y}^{(2)} = B_{y}, \ A_{z}^{(3)} = C_{z}$ and $A_{t}^{(4)} = D_{t}$. 
\begin{widetext}
\begin{equation}
    \begin{aligned}
        \langle M_{2} \rangle &= \frac{1}{2} \Big[ \langle A_{0} B_{0} \rangle +  \langle A_{0} B_{1} \rangle + \langle A_{1} B_{0} \rangle - \langle A_{1} B_{1} \rangle \Big], \\
        \langle M_{3} \rangle &= \frac{1}{2} \Big[ \langle A_{0} B_{0} C_{1} \rangle + \langle A_{0} B_{1} C_{0} \rangle + \langle A_{1} B_{0} C_{0} \rangle - \langle A_{1}B_{1}C_{1} \rangle \Big], \\
        \langle M_{4} \rangle &= \frac{1}{4}\Big[  -\langle A_{0}B_{0}C_{0}D_{0} \rangle + \langle A_{0}B_{0}C_{0}D_{1} \rangle + \langle A_{0}B_{0}C_{1}D_{0} \rangle + \langle A_{0}B_{1}C_{0}D_{0} \rangle + \langle A_{1}B_{0}C_{0}D_{0} \rangle + \langle A_{0}B_{0}C_{1}D_{1} \rangle\\
        &\hspace{0.9cm}  + \langle A_{1}B_{1}C_{0}D_{0} \rangle + \langle A_{0}B_{1}C_{0}D_{1} \rangle + \langle A_{1}B_{0}C_{1}D_{0} \rangle + \langle A_{0}B_{1}C_{1}D_{0} \rangle + \langle A_{1}B_{0}C_{0}D_{1} \rangle  \\ 
        & \hspace{0.9cm}- \big( \langle A_{1}B_{1}C_{1}D_{0} \rangle + \langle A_{1}B_{1}C_{0}D_{1} \rangle + \langle A_{1}B_{0}C_{1}D_{1} \rangle + \langle A_{0}B_{1}C_{1}D_{1} \rangle \big) - \langle A_{1}B_{1}C_{1}D_{1} \rangle \Big].
    \end{aligned}
\end{equation}
\end{widetext}
Above, $\langle M_{2} \rangle$ is the CHSH functional, whereas $\langle M_{3} \rangle$ is the Mermin functional, associated with the GHZ paradox~\cite{GHZ}. The property that all input strings of the same Hamming weight have the same coefficient makes the MABK expressions invariant under party relabelings. Additionally, note that for $N$ even, every correlator $\langle A_{\bm{x}} \rangle$ for $\bm{x} \in \{0,1\}^{n}$ is included in the expression, whereas for $N$ odd, exactly half of the correlators are included.

\subsection{Self-testing}

It is known that maximum quantum violation of the MABK family is uniquely achieved, up to local isometries, by the GHZ state and pairs of maximally anticommuting observables \cite{KaniewskiSelfTest1,KaniewskiSelfTest2}. This form of uniqueness arising from a Bell expression is known as self-testing. In this work, we take the choice $\ket{\psi_{\mathrm{GHZ}}}=(\ket{0}^{\otimes N} + i\ket{1}^{\otimes N})/\sqrt{2}$, $A_0^{(k)}=\cos \theta_{N}^{+} \, \sigma_{X}+\sin \theta_{N}^{+} \, \sigma_{Y}$, and $A_1^{(k)} = \cos \theta_{N}^{-} \, \sigma_{X} + \sin \theta_{N}^{-} \, \sigma_{Y}$, where $\theta_{N}^{\pm}~=~(\pi/4)(1/N\pm 1)$. 

Whilst self-testing statements can be formally defined between $N$ parties (see, e.g., Refs.~\cite{Supic_2018,Supic_2023}), in our formulation it will only be necessary to use a bipartite definition.

\begin{definition}[Bipartite self-test]
Let $k,l \in \{1,...,N\}$ index two distinct parties. Define the sets of target qubit projectors $\{P_{a_{k}|x_{k}}^{(k)}\}_{a_{k}}$, $\{P_{a_{l}|x_{l}}^{(l)}\}_{a_{l}}$, and a target two qubit state $\ket{\Phi}_{Q_{k}Q_{l}}$. Let $I^{(k,l)}$ be a Bell operator between parties $k$ and $l$, with maximum quantum value $\eta^{\mathrm{Q}}$. The inequality $\langle I^{(k,l)} \rangle \leq \eta^{\text{Q}}$ \emph{self-tests} the target state and measurements if, for all physical quantum strategies $(\tilde{\rho}_{\tilde{Q}_{k}\tilde{Q}_{l}},\tilde{P}_{a_{k}|x_{k}}^{(k)},\tilde{P}_{a_{l}|x_{l}}^{(l)})$ that satisfy $\langle I^{(k,l)} \rangle = \eta^{\mathrm{Q}}$, there exists a local isometery $V:\mathcal{H}_{\tilde{Q}_{k}}\otimes \mathcal{H}_{\tilde{Q}_{l}} \otimes \mathcal{H}_{E} \rightarrow \mathcal{H}_{Q_{k}}\otimes \mathcal{H}_{Q_{l}} \otimes \mathcal{H}_{\mathrm{Junk}}$ and ancillary state $\ket{\xi}_{\mathrm{Junk}}$, such that 
\begin{multline}
    V \Big(\tilde{P}_{a_{k}|x_{k}}^{(k)} \otimes \tilde{P}_{a_{l}|x_{l}}^{(l)} \otimes \mathbb{I}_{E}\Big) \ket{\Psi}_{\tilde{Q}_{k}\tilde{Q}_{l}E}\\ 
    =   \Big(P_{a_{k}|x_{k}}^{(k)} \otimes P_{a_{l}|x_{l}}^{(l)} \Big) \ket{\Phi}_{Q_{k}Q_{l}}   \otimes \ket{\xi}_{\mathrm{Junk}}\,,
\end{multline}
for all $a_k$, $a_l$, $x_k$, $x_l$, where $\ket{\Psi}$ is a purification of $\tilde{\rho}$.
\label{def:selfTest}
\end{definition} 

We define the shifted Bell operator as $\bar{I}^{(k,l)} = \eta^{\mathrm{Q}}\mathbb{I} - I^{(k,l)}$, and we say $\bar{I}^{(k,l)}$ admits a sum-of-squares (SOS) decomposition if there exists a set of polynomials, $\{M^{(k,l)}_{i}\}_{i}$, of the operators $\tilde{P}_{a_{k}|x_{k}}^{(k)},\tilde{P}_{a_{l}|x_{l}}^{(l)}$, satisfying
\begin{equation}
    \bar{I}^{(k,l)} = \sum_{i} M^{(k,l)\dagger}_{i}M^{(k,l)}_{i}.
\end{equation}
SOS decompositions can be used to enforce algebraic constraints on any quantum state $\tilde{\rho}$, and measurements $\tilde{P}_{a_{k}|x_{k}}^{(k)},\tilde{P}_{a_{l}|x_{l}}^{(l)}$, satisfying $\langle \bar{I}^{(k,l)} \rangle = 0$. For example, when $\tilde{\rho} = \ketbra{\psi}{\psi}$ is pure, it must satisfy 
\begin{equation}
    M_{i}^{(k,l)}\ket{\psi} = 0, \ \forall i, \label{eq:const0}
\end{equation}
(or $\mathrm{Tr}[(M_{i}^{(k,l)})^\dagger M_{i}^{(k,l)}\tilde{\rho}]=0$ more generally). Typically, we find that constraints of the form in \cref{eq:const0} are only satisfied by a unique state (up to local isometeries), and when that is the case we call \cref{eq:const0} the self-testing criteria.

\subsection{DI randomness certification} 
We use the conditional von Neumann entropy, $H(\bm{R}|\bm{X}=\bm{x}^{*},E)_{\rho_{\bm{R}E|\bm{x}^*}}$ to measure the asymptotic DI randomness generation rate, where for a bipartite state $\rho_{AB}$, $H(A|B)_{\rho} = H(AB)_{\rho} - H(B)_{\rho}$ and $H(A)_{\rho} = -\tr[\rho_{A}\log \rho_{A}]$ is the von Neumann entropy of a state $\rho_{A}$. $H(\bm{R}|\bm{X}=\bm{x}^{*},E)_{\rho_{\bm{R}E|\bm{x}^*}}$ is the correct quantity for spot-checking DI random number generation, where $\bm{x}^*$ is a specified measurement choice used to generate randomness (see~\cite{Bhavsar2023} for a discussion of this and other possibilities). The asymptotic rate of DI randomness generation is given by a function $R_{f}(\omega)$ defined as
\begin{equation}
 \inf_{\substack{\ket{\Psi}_{\tilde{\bm{Q}}E},\big\{\{\tilde{P}_{a_{k}|x_{k}}^{(k)}\}_{a_{k}} \big\}_{k} \\ \mathrm{compatible \ with} \ f(P_{\mathrm{obs}}) = \omega} } H(\bm{R}|\bm{X}=\bm{x}^{*},E)_{\rho_{\bm{R}E|\bm{x}^*}},\label{eq:ent}
\end{equation}
and we require lower bounds on this quantity over states and measurements compatible with a linear constraint on the observed statistics $P_{\mathrm{obs}}$, $f(P_{\mathrm{obs}}) = \omega$, where $f$ defines a Bell expression and $\omega$ is its observed value. This definition can be straightforwardly extended to multiple functions. The asymptotic rate can be used as a basis for rates with finite statistics using tools such as the entropy accumulation theorem~\cite{DFR,EAT2,MetgerGEAT} (an alternative approach for handling finite statistics is the quantum probability estimation framework~\cite{Zhang20}, but this does not reduce to the single round von Neumann entropy). 

Because we study fundamental limits on certifiable DI randomness we work in the noiseless scenario, using self-testing statements to certify DI randomness. Here, $f(P_{\mathrm{obs}})$ is a Bell expression, and we show that all states and measurements that achieve its maximum violation correspond to a post-measurement state in tensor product with Eve, $\rho_{\bm{R}E|\bm{x}^*} = \rho_{\bm{R}|\bm{x}^*} \otimes \rho_{E}$. This allows us to directly evaluate the conditional entropy $H(\bm{R}|\bm{X}=\bm{x}^{*},E) = H(\bm{R}|\bm{X}=\bm{x}^{*})$, and the infimum becomes trivial since all compatible strategies give rise to the same classical distribution (see Appendix~\ref{app:dec} for details).

\section{Expanding Bell inequalities} \label{sec:CB}
In this section we discuss and enhance the technique presented in~\cite{Curchod_2019}, which allows us to derive the main results of this work. Ref.~\cite{Curchod_2019} introduced a method for building multi-partite Bell inequalities by expanding a bipartite inequality, called the ``seed'', which can be used to witness genuinely multi-partite nonlocality. A new Bell expression is constructed by summing the seed over different subsets of parties, whilst the remaining parties perform some fixed measurement. The more multi-partite nonlocal the correlations, the more bipartite terms are violated. This resembles other recent results that enable multi-partite self-testing by projections onto bipartite subsystems \cite{Li18,Supic_2018,Zwerger19}. 

In this work we extend this technique to make it suitable for DI cryptographic purposes. More precisely, we consider cases where the seed is a bipartite self-test, and use the maximum quantum violation of the expanded Bell expression, constructed in an equivalent manner to \cite{Curchod_2019}, to draw conclusions about the post-measurement state held between the honest parties and Eve. These conclusions allow us to derive rates for randomness certification.

\subsection{Definition}
We begin by introducing a generic formulation for expanding Bell expressions, shortly followed by an example. 

\begin{definition}[Expanded Bell expressions]
    Let $\bm{C}$ be an $N \times N$ nonzero matrix with entries $c_{k,l}$ satisfying $c_{k,l} \in \{0,1\}$ if $k<l$, and $c_{k,l}=0$ otherwise. For each pair $k,l$ for which $c_{k,l} \neq 0$, let $\{I^{(k,l)}_{\bm{\mu}}\}_{\bm{\mu}}$ denote a set of bipartite Bell expressions between parties $k$ and $l$, where $\bm{\mu}\in\{0,1\}^{N-2}$ is a set of $N-2$ measurement outcomes of all parties excluding $k,l$. Further suppose each Bell expression is equivalent up to output relabellings, with a local bound $\eta^{\mathrm{L}}$ strictly less than the maximum quantum value $\eta^{\mathrm{Q}}$. Then the Bell operator $I$ seeded by the set $\{I^{(k,l)}_{\bm{\mu}}\}_{\bm{\mu}}$ is defined as
    \begin{equation}
        I = \sum_{k,l} c_{k,l} \Bigg(\sum_{\bm{\mu}}  \tilde{P}_{\bm{\mu}|\bm{0}}^{\overline{(k,l)}} I^{(k,l)}_{\bm{\mu}}\Bigg), \label{eq:con1}
    \end{equation}
    where $\tilde{P}_{\bm{\mu}|\bm{0}}^{\overline{(k,l)}} = \prod_{k'\in\{1,...,N\}\setminus\{k,l\}} \tilde{P}_{\mu_{k'}|0}^{(k')}$ is the projector for all parties excluding $k$ and $l$, corresponding to the joint input setting $\bm{0}$ and joint outcome $\bm{\mu}$\footnote{We have written the Bell operator in \cref{eq:con1} in terms of projectors, which is convenient since we consider quantum strategies. We can however rewrite it in a theory independent way, by taking the expectation value and making the substitution $p(\bm{a}|\bm{x}) = \langle \tilde{P}_{\bm{a}|\bm{x}} \rangle$.}. \label{def:expandedBI}
\end{definition}
To help understand the definition, consider the following tripartite example. To simplify the notation we label the observables of the three parties $A_{x}=P^{A}_{0|x} - P^{A}_{1|x}$ and similarly for $B$ and $C$. From Ref.~\cite{WBC}, observing saturation of the following Bell expression certifies 2 bits of global randomness,
\begin{equation}
    -3\sqrt{3} \leq  \langle I^{A,B} \rangle  \leq 3\sqrt{3},
\end{equation}
where $I^{A,B} = A_{0}B_{0} + 2\big( A_{0}B_{1} + A_{1}B_{0} - A_{1}B_{1}\big)$. The bound $\langle I^{A,B} \rangle  = \pm 3\sqrt{3}$ is uniquely achieved, up to local isometeries, by the bipartite strategy 
\begin{equation}
    \begin{gathered}
        \rho_{Q_{A}Q_{B}} = \ketbra{\Phi_{\pm}}{\Phi_{\pm}}, \ \ket{\Phi_{\pm}} = \frac{\ket{00}\pm i\ket{11}}{\sqrt{2}},\\
        A_{0} = B_{0} = \sigma_{X}, \ A_{1} = B_{1} = \frac{-\sigma_{X} + \sqrt{3} \sigma_{Y}}{2}.
    \end{gathered}
\end{equation}
Now consider a tripartite extension of the above strategy,
\begin{equation}
    \begin{gathered}
        \rho_{Q_{A}Q_{B}Q_{C}} = \ketbra{\psi_{\mathrm{GHZ}}}{\psi_{\mathrm{GHZ}}}, \\
        A_{0} = B_{0} = C_{0} = \sigma_{X}, \ A_{1} = B_{1} = C_{1} = \frac{-\sigma_{X} + \sqrt{3} \sigma_{Y}}{2}.
    \end{gathered} \label{eq:exampleStr}
\end{equation}
One can verify that when all three parties measure $x=y=z=0$, their outcomes are uniformly distributed, resulting in 3 bits of raw randomness; how can we certify this device-independently? Notice $\ket{\psi_{\mathrm{GHZ}}}$ has the following property: when a single party, say $A$, measures $\sigma_{X}$, the leftover state held by $BC$ is $\ket{\Phi_{\pm}}$ where the sign depends on the parity of $A$'s measurement outcome $\mu \in \{0,1\}$. Parties $B$ and $C$ can now saturate one of the bounds $\langle I^{B,C} \rangle  = \pm 3\sqrt{3}$ conditioned on $A$'s measurement. By using $I^{A,B}_{\mu} = (-1)^{\mu}I^{A,B}$ as the seed, and choosing the matrix $\bm{C}$ with party indices $k,l \in \{A,B,C\}$,
\begin{equation}
    \bm{C} = \begin{bmatrix}
    0 & 0 & 1 \\
    0 & 0 & 1 \\
    0 & 0 & 0
    \end{bmatrix},
\end{equation}
we can construct the expanded Bell expression according to \cref{def:expandedBI}:
\begin{multline}
    I = \underbrace{P_{\mu = 0|x = 0}^{A}I^{B,C} - P_{\mu = 1|x = 0}^{A}I^{B,C}}_{k=A,l=C}  \\ + \underbrace{P_{\mu = 0|y = 0}^{B}I^{A,C} - P_{\mu = 1|y = 0}^{B}I^{A,C}}_{k=B,l=C} \\ = A_{0}I^{B,C} + B_{0}I^{A,C}.
\end{multline}
Due to the properties of $\ket{\psi_{\mathrm{GHZ}}}$ discussed above, we find $\langle I \rangle = 2 \cdot 3\sqrt{3}$ is the maximum quantum value of $\langle I \rangle$, and achieved by the strategy in \cref{eq:exampleStr}. Moreover, it can be shown that this is strictly greater than the maximum local value of $\langle I \rangle$, implying $\langle I \rangle$ is a nontrivial Bell functional. In later sections, we will prove that $\langle I \rangle = 2 \cdot 3\sqrt{3}$ is a sufficient condition for certifying maximum randomness.

Returning to the general form of $I$ from \cref{eq:con1}, we call $I$ an expanded Bell expression, since it is built by combining a bipartite seed, $I_{\bm{\mu}}^{(k,l)}$, conditioned on fixed measurement settings $\bm{0}$ and outcomes $\bm{\mu}$ for the remaining $N-2$ parties. We then sum over all possible outcomes of this $N-2$ party measurement, changing the Bell expression accordingly, and over different combinations of parties $(k,l)$ chosen according to $\bm{C}$. There needs to be a gap between the local and quantum bounds of $\langle I \rangle$ for $I$ to define a nontrivial Bell inequality; typically, we find the following upper bound is achievable, which is strictly greater than the local bound: 
\begin{lemma}
    Let $I$ be an expanded Bell expression according to \cref{def:expandedBI}. The maximum quantum value of $\langle I \rangle$ is upper bounded by $\eta^{\mathrm{Q}}_{N} := \sum_{k,l}c_{k,l}\eta^{\mathrm{Q}}$.\label{lem:conBI}
\end{lemma}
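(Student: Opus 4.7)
The plan is to bound each summand in \cref{eq:con1} separately using the hypothesis that every bipartite seed $I^{(k,l)}_{\bm{\mu}}$ has maximum quantum value $\eta^{\mathrm{Q}}$, and then add the bounds weighted by the $c_{k,l}$.

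Fix an arbitrary quantum strategy $(\ket{\Psi}_{\tilde{\bm{Q}}E}, \{\tilde{P}^{(k)}_{a_{k}|x_{k}}\})$ and write $\rho = \mathrm{Tr}_{E}[\ketbra{\Psi}{\Psi}]$. For a pair $(k,l)$ with $c_{k,l}\neq 0$, the key observation is that $\tilde{P}_{\bm{\mu}|\bm{0}}^{\overline{(k,l)}}$ acts on the subsystems of the $N-2$ parties outside $\{k,l\}$, while $I^{(k,l)}_{\bm{\mu}}$ acts only on parties $k,l$; the two operators therefore commute. Combining this with $(\tilde{P}_{\bm{\mu}|\bm{0}}^{\overline{(k,l)}})^{2} = \tilde{P}_{\bm{\mu}|\bm{0}}^{\overline{(k,l)}}$ and cyclicity of the trace, the expectation of a single term factorizes as $\langle \tilde{P}_{\bm{\mu}|\bm{0}}^{\overline{(k,l)}} I^{(k,l)}_{\bm{\mu}} \rangle = p(\bm{\mu}|\bm{0})\,\mathrm{Tr}[I^{(k,l)}_{\bm{\mu}}\rho^{(k,l)|\bm{\mu}}]$, where $p(\bm{\mu}|\bm{0})$ is the marginal probability of the other parties obtaining $\bm{\mu}$ upon input $\bm{0}$, and $\rho^{(k,l)|\bm{\mu}}$ is the normalized bipartite post-measurement state on parties $k,l$.

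Next I would invoke the seed's quantum bound: the triple $(\rho^{(k,l)|\bm{\mu}}, \tilde{P}^{(k)}_{a_{k}|x_{k}}, \tilde{P}^{(l)}_{a_{l}|x_{l}})$ constitutes a valid bipartite quantum strategy for $I^{(k,l)}_{\bm{\mu}}$, so $\mathrm{Tr}[I^{(k,l)}_{\bm{\mu}}\rho^{(k,l)|\bm{\mu}}] \leq \eta^{\mathrm{Q}}$. Summing over $\bm{\mu}$ and using $\sum_{\bm{\mu}}p(\bm{\mu}|\bm{0})=1$ gives $\sum_{\bm{\mu}}\langle \tilde{P}_{\bm{\mu}|\bm{0}}^{\overline{(k,l)}} I^{(k,l)}_{\bm{\mu}} \rangle \leq \eta^{\mathrm{Q}}$, and finally summing over pairs $(k,l)$ weighted by $c_{k,l}$ yields $\langle I \rangle \leq \sum_{k,l}c_{k,l}\eta^{\mathrm{Q}} = \eta^{\mathrm{Q}}_{N}$, uniformly over all quantum strategies.

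The argument has no serious obstacle: the only delicate points are the tensor-factor bookkeeping (making sure the projectors on the $N-2$ spectator parties really act on disjoint subsystems from the seed, which is guaranteed by \cref{def:expandedBI}), and the fact that the bound $\eta^{\mathrm{Q}}$ applies uniformly across the family $\{I^{(k,l)}_{\bm{\mu}}\}_{\bm{\mu}}$, which follows from the seeds being related by output relabelings, an operation that preserves the maximum quantum value. Note that the argument only yields an upper bound; achievability of $\eta^{\mathrm{Q}}_{N}$ is not claimed here and would need the GHZ-type construction used later for the specific instances of the lemma.
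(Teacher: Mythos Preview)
Your proof is correct and follows essentially the same approach as the paper: both decompose the expanded expression into its bipartite pieces, exploit the commutativity of the spectator projectors $\tilde{P}_{\bm{\mu}|\bm{0}}^{\overline{(k,l)}}$ with the seed together with the completeness relation $\sum_{\bm{\mu}}\tilde{P}_{\bm{\mu}|\bm{0}}^{\overline{(k,l)}}=\mathbb{I}$, then apply the seed's quantum bound $\eta^{\mathrm{Q}}$ term by term before summing over pairs weighted by $c_{k,l}$. The paper phrases this as the operator inequality $\eta^{\mathrm{Q}}_{N}\mathbb{I}-I=\sum_{k,l}c_{k,l}\sum_{\bm{\mu}}\tilde{P}_{\bm{\mu}|\bm{0}}^{\overline{(k,l)}}(\eta^{\mathrm{Q}}\mathbb{I}-I_{\bm{\mu}}^{(k,l)})\succeq 0$ rather than working state by state, but the content is identical.
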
 

A proof of \cref{lem:conBI} is given in \cref{app:conBI}. If achievable, the only strategy that can give $\langle I \rangle = \eta^{\mathrm{Q}}_{N}$ is the one satisfying $\big \langle \sum_{\bm{\mu}}\tilde{P}_{\bm{\mu}|\bm{0}}^{\overline{(k,l)}} I^{(k,l)}_{\bm{\mu}} \big \rangle = \eta^{\mathrm{Q}}$ for all pairing combinations $k,l$ with $c_{k,l} > 0$, i.e., the reduced state held between parties $k$ and $l$, following the projection $\tilde{P}_{\bm{\mu}|\bm{0}}^{\overline{(k,l)}}$ on the global state, must achieve the maximum quantum value of $\langle I^{(k,l)}_{\bm{\mu}} \rangle = \eta^{\mathrm{Q}}$. This explains why we must include a dependence on $\bm{\mu}$ for the Bell expression, since it should be tailored to the bipartite state following the outcome $\bm{\mu}$. For simplicity we assume each bipartite state is equivalent up to local unitaries, meaning we effectively use a single seed $I_{\bm{\mu}}^{(k,l)}$. In principle however, one could use different seeds depending on the value of $\bm{\mu}$, which tailors the construction to non-symmetric states~\cite{Curchod_2019}. For our work we only consider the symmetric case. Additionally, one could further generalize by choosing specific measurement choices $\bm{y}$ dependent on the choice of non-projecting parties $k,l$, instead of fixing $\bm{y}=\bm{0}$ as we have done here.    

\subsection{Expanding Bell expressions and entropy evaluation}
Next we present the characteristic of expanded Bell expressions that allows us to certify DI randomness from witnessing its maximum quantum value, forming the main technical contribution of our work. 
\begin{lemma}[Decoupling lemma]
    Let $I$ be an expanded $N$-party Bell expression defined in \cref{def:expandedBI} with binary inputs and outputs, and $c_{k,l} = 1$ if $l=N$ and $k<N$, and zero otherwise. Suppose for every $I_{\bm{\mu}}^{(k,N)}$, there exists an SOS decomposition that self-tests the same pure bipartite entangled state $\ket{\Phi}$ between parties $k$ and $N$, along with some ideal measurements $P^{(k)}_{a_{k}|x_{k}},P^{(N)}_{a_{N}|x_{N}}$, according to \cref{def:selfTest}, satisfying $\bra{\Phi}P^{(k)}_{a_{k}|0} \otimes P^{(N)}_{a_{N}|0} \ket{\Phi} > 0$ for all $a_{k},a_{N}$. Then for any strategy $\ket{\Psi}_{\tilde{\bm{Q}}E},\big\{\{\tilde{P}_{a_{k}|x_{k}}^{(k)}\}_{x_{k}}\big\}_{k}$ that achieves $\langle I \rangle = \eta^{\mathrm{Q}}_{N}$, the post-measurement state $\rho_{\bm{R}E|\bm{x}}$, for measurement settings $\bm{x} = \bm{0}$, admits the tensor product decomposition,
    \begin{equation}
        \rho_{\bm{R}E|\bm{0}} = \rho_{\bm{R}|\bm{0}} \otimes \rho_{E}.
    \end{equation} \label{lem:BIent}
\end{lemma}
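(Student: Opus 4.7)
Under the hypothesis $c_{k,N}=1$ for $k<N$ and $c_{k,l}=0$ otherwise, the expanded operator reduces to $I=\sum_{k<N}\sum_{\bm{\mu}}\tilde{P}^{\overline{(k,N)}}_{\bm{\mu}|\bm{0}}\, I^{(k,N)}_{\bm{\mu}}$. I would first invoke \cref{lem:conBI} together with the completeness relation $\sum_{\bm{\mu}}\tilde{P}^{\overline{(k,N)}}_{\bm{\mu}|\bm{0}}=\mathbb{I}$ to deduce that saturation $\langle I\rangle=\eta^{\mathrm{Q}}_N=(N-1)\eta^{\mathrm{Q}}$ forces, for every $k<N$ and every $\bm{\mu}$ with $p(\bm{\mu}|\bm{0})>0$, the normalized conditional pure state $\ket{\tilde\psi_{k,\bm{\mu}}}\propto\tilde{P}^{\overline{(k,N)}}_{\bm{\mu}|\bm{0}}\ket{\Psi}$ to attain $\langle I^{(k,N)}_{\bm{\mu}}\rangle=\eta^{\mathrm{Q}}$. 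Plugging this into the SOS decomposition $\bar{I}^{(k,N)}_{\bm{\mu}}=\sum_i M_i^\dagger M_i$ yields the algebraic self-testing relations $(M_i\otimes\mathbb{I})\tilde{P}^{\overline{(k,N)}}_{\bm{\mu}|\bm{0}}\ket{\Psi}=0$ for each $i$.

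From \cref{def:selfTest}, these relations produce a local isometry $V^{\bm{\mu}}_k\otimes V^{\bm{\mu}}_N$ on $\mathcal{H}_{\tilde Q_k}\otimes\mathcal{H}_{\tilde Q_N}$ and an ancillary state $\ket{\xi_k(\bm{\mu})}_{\mathrm{Junk}}$ such that
\begin{equation*}
(V^{\bm{\mu}}_k\otimes V^{\bm{\mu}}_N\otimes \mathbb{I})\,\tilde{P}^{(k)}_{a_k|0}\tilde{P}^{(N)}_{a_N|0}\tilde{P}^{\overline{(k,N)}}_{\bm{\mu}|\bm{0}}\ket{\Psi} = \sqrt{p(\bm{\mu}|\bm{0})}\,P^{(k)}_{a_k|0}P^{(N)}_{a_N|0}\ket{\Phi}_{Q_kQ_N}\otimes\ket{\xi_k(\bm{\mu})}_{\mathrm{Junk}}.
\end{equation*}
Here $\mathrm{Junk}$ absorbs any leftover from $\tilde Q_k,\tilde Q_N$ together with the systems of the $N-2$ projected-out parties and $E$, which collectively play the role of the purifying environment in the bipartite self-test. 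The crucial feature is that the reduced state on $Q_kQ_N$ is the pure state $\ket\Phi$, so its purification to the environment must factorize as displayed.

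Taking norms gives $p(a_k,a_N\mid\bm{\mu},\bm{0})=\bra{\Phi}P^{(k)}_{a_k|0}\otimes P^{(N)}_{a_N|0}\ket{\Phi}$, which is strictly positive by hypothesis. Tracing out $\tilde Q_k\tilde Q_N$ (equivalently $Q_kQ_N$ together with the associated junk ancillas after the isometry) and the other parties' systems then shows that the normalized conditional Eve state $\rho_E^{\bm{a}|\bm{0}}/p(\bm{a}|\bm{0})$, with $\bm{a}$ rewritten as $(\bm{\mu},a_k,a_N)$ and $\bm{\mu}$ the outcomes of parties $\ne k,N$, equals the reduced marginal of $\ket{\xi_k(\bm{\mu})}$ on $\mathcal H_E$ and is therefore \emph{independent of $a_k$ and $a_N$}. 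Repeating the argument with a second pair $(l,N)$, $l\ne k$, yields independence from $a_l$ and $a_N$; iterating over all $k<N$ removes dependence on every coordinate of $\bm{a}$. Hence this normalized Eve state equals some fixed $\rho_E^\star$, which upon summing over $\bm{a}$ coincides with the unconditional marginal $\rho_E=\mathrm{Tr}_{\tilde{\bm{Q}}}\proj\Psi$, and reassembling $\rho_{\bm{R}E|\bm{0}}=\sum_{\bm{a}}\proj{\bm{a}}_{\bm{R}}\otimes p(\bm{a}|\bm{0})\rho_E$ delivers the claimed tensor product.

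The main obstacle I anticipate is the second step: promoting the bipartite self-test of \cref{def:selfTest}, phrased for a physical bipartite state with a specified purifying register $E$, to a statement about the $\bm{\mu}$-dependent conditional vectors $\tilde{P}^{\overline{(k,N)}}_{\bm{\mu}|\bm{0}}\ket{\Psi}$, where the environment now also absorbs the projected-out systems of the remaining $N-2$ parties and where the isometry, junk state and SOS operators are all allowed to depend on $\bm{\mu}$. The positivity hypothesis $\bra{\Phi}P^{(k)}_{a_k|0}P^{(N)}_{a_N|0}\ket{\Phi}>0$ is essential here: it rules out degenerate outcome branches, ensuring that extracting the physical projectors $\tilde{P}^{(k)}_{a_k|0},\tilde{P}^{(N)}_{a_N|0}$ through the isometry meaningfully constrains every slice of Eve's conditional state.
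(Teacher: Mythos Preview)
Your argument is correct and shares its skeleton with the paper's: extract from $\langle I\rangle=\eta^{\mathrm Q}_N$ that each conditional vector $\tilde P^{\overline{(k,N)}}_{\bm\mu|\bm 0}\ket\Psi$ saturates the bipartite seed, apply the self-test to conclude Eve's normalized conditional state does not see $a_k,a_N$, and then cycle $k$ through $\{1,\dots,N-1\}$ to kill dependence on every coordinate. Where you diverge is in \emph{how} the bipartite self-test is deployed. The paper first invokes Jordan's lemma, decomposing each party's space into a direct sum of qubit blocks and handing Eve a second register $E'$ carrying the block label; the self-test then becomes a local \emph{unitary} on each $2\times2$ block, and the final stage separately shows that both the block-support sets $S^k_{\bm\mu}$ and the per-block junk vectors $\ket{\lambda^{\bm f}_{\bm\mu,k}}$ are independent of $\bm\mu$ and $k$. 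This is concrete but heavy, and it detours through the uniqueness of $p(\bm a|\bm 0)$ (the paper's Corollary~1, built on a separate combinatorial lemma), which is not actually needed for the tensor-product conclusion itself.

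You instead apply \cref{def:selfTest} directly at the abstract isometry level, folding the projected-out parties into the purifying environment, and observe that the $E$-marginal of the single junk state $\ket{\xi_k(\bm\mu)}$ already gives what is needed---no block bookkeeping. The positivity hypothesis plays exactly the role you suspect, and a slightly sharper one than you state: since $p(a_k,a_N\mid\bm\mu,\bm 0)=\bra\Phi P^{(k)}_{a_k|0}P^{(N)}_{a_N|0}\ket\Phi>0$, any $\bm a$ with $p(\bm a|\bm 0)>0$ stays positive under a single bit flip, so the support is all of $\{0,1\}^N$ and your ``iterate over $k$'' step legitimately connects every outcome to every other. What your route buys is brevity and a cleaner logical dependency graph; what the paper's Jordan-lemma route buys is an explicit qubit-level picture of the optimal strategies, which is not needed for this lemma but feeds into related computations elsewhere in the paper.
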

Having established that Eve is decoupled it is then straightforward to evaluate the rate in \cref{eq:ent} conditioned on observing the maximum quantum value of $I$ (the quantity $R_I(\eta_{N}^{\mathrm{Q}})$ in the corollary below is an instance of a function of the form $R_{f}(\omega)$ defined in \cref{eq:ent}).
\begin{corollary}
    Let $I,\eta_{N}^{\mathrm{Q}}$ be defined as in \cref{lem:BIent}. Then 
    \begin{equation}
        R_{I}(\eta_{N}^{\mathrm{Q}})
        = H(\{ p(\bm{a}|\bm{0}) \}),
    \end{equation}
    where $H(\{p_{i}\})$ is the Shannon entropy of a distribution $\{p_{i}\}_{i}$. \label{lem:rate}
\end{corollary}
The proof can be found in \cref{app:dec}. \cref{lem:BIent} allows one to relate observing the maximum quantum value of the expanded Bell expression $I$, to a condition on the post-measurement state held by the parties and Eve, namely that it must be in tensor product with the purifying system $E$. This allows one to directly evaluate the conditional entropy according to \cref{lem:rate}. 

\section{Certifying $N$ bits of DI randomness} \label{sec:ex1}

We now consider constructions for generating $N$ bits of DI randomness. As previously mentioned, when $N$ is odd the MABK family can be used~\cite{DharaMaxRand}. For an arbitrary number of parties, we apply the previously outlined techniques to derive a suitable Bell expression for this task. This will involve generalizing a one parameter family of quantum strategies, symmetric under permutation of the parties. 

\subsection{$N$ odd} \label{sec:Nodd}
Using symmetry arguments, the authors of Ref.~\cite{DharaMaxRand} showed that maximum violation of the MABK inequality, for $N$ odd, implies maximum randomness.  
\begin{proposition}[Maximum randomness for $N$ odd~\cite{DharaMaxRand}] When $N$ is odd, maximum quantum violation of the MABK family of Bell inequalities, given by \cref{eq:MABK}, certifies $N$ bits of global DI randomness, i.e.
\begin{equation}
    R_{M_{N}}( 2^{(N-1)/2} ) = N. \label{eq:rateNodd}
\end{equation}
 \label{thm:odd}
\end{proposition}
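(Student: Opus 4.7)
The plan is to combine the known self-testing property of the MABK family at odd $N$ with a direct computation of the ideal output distribution, following the same mechanism as \cref{lem:BIent,lem:rate} but with the MABK operator itself playing the role of the expanded seed.

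First, I would invoke the self-testing theorem for MABK at odd $N$ (see e.g.\ \cite{KaniewskiSelfTest1,KaniewskiSelfTest2}): any quantum strategy achieving $\langle M_N\rangle = 2^{(N-1)/2}$ is related by a local isometry $V$ to the canonical realisation on $\ket{\psi_{\mathrm{GHZ}}}$ with observables $A_0^{(k)}=\cos\theta_N^{+}\sigma_X+\sin\theta_N^{+}\sigma_Y$ and $A_1^{(k)}=\cos\theta_N^{-}\sigma_X+\sin\theta_N^{-}\sigma_Y$. Because the ideal state is pure, applying $V$ to a purification including Eve yields $V\ket{\Psi}_{\tilde{\bm Q}E} = \ket{\psi_{\mathrm{GHZ}}}_{\bm Q}\otimes\ket{\xi}_{\mathrm{Junk}}$, absorbing Eve's Hilbert space into the junk register. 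Consequently, for any randomness-generating input $\bm x^*$, $\rho_{\bm R E \mid \bm x^*}=\rho_{\bm R\mid \bm x^*}\otimes\rho_E$ and $H(\bm R\mid \bm X=\bm x^*,E)=H(\bm R\mid \bm X=\bm x^*)$, in direct analogy with the decoupling argument of \cref{lem:BIent,lem:rate}.

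Second, I would compute the ideal marginal. Using the eigenvectors of $\cos\alpha\,\sigma_X+\sin\alpha\,\sigma_Y$, a short calculation with $\ket{\psi_{\mathrm{GHZ}}}$ gives
\begin{equation}
    p(\bm a\mid \bm x^*) = \frac{1+(-1)^{\sum_k a_k}\sin\!\left(\textstyle\sum_k \alpha^{(k)}\right)}{2^N},
\end{equation}
with $\alpha_0^{(k)}=\theta_N^{+}$ and $\alpha_1^{(k)}=\theta_N^{-}$. This distribution is uniform on $\{0,1\}^N$ precisely when $\sum_k \alpha^{(k)}\in\pi\mathbb{Z}$. Writing $j$ for the number of parties with input $1$, one has $\sum_k \alpha^{(k)}=(1+N-2j)\pi/4$, and for odd $N$ this congruence is solved by taking $j\equiv (N+1)/2\pmod{2}$. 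In particular $\bm x^*=\bm 0$ works when $N\equiv 3\pmod{4}$ and $\bm x^*=\bm 1$ works when $N\equiv 1\pmod{4}$; a parity check shows that the resulting $\bm x^*$ always lies outside the support of the MABK expression (which is supported on inputs with $\sum_k x_k\equiv (N-1)/2\pmod{2}$), so the randomness-generating input does not interfere with Bell estimation.

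Combining the decoupling with the uniform ideal distribution yields $H(\bm R\mid \bm X=\bm x^*,E)=N$ for every strategy saturating MABK, establishing $R_{M_N}(2^{(N-1)/2})=N$. The main obstacle is ensuring that the self-testing statement genuinely lifts from the honest parties' subsystem to the full purification $\ket{\Psi}_{\tilde{\bm Q}E}$: concretely, one needs the shifted operator $\bar M_N = 2^{(N-1)/2}\mathbb{I}-M_N$ to admit an SOS decomposition whose kernel conditions on $\ket{\Psi}_{\tilde{\bm Q}E}$ force a tensor-product structure with Eve, exactly as in the proof of \cref{lem:BIent}. Once such an SOS is in hand the remaining steps are routine; without it, self-testing alone only controls the honest parties' reduced state and is insufficient to bound $H(\bm R\mid \bm X=\bm x^*,E)$.
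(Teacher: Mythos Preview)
Your proposal is correct and follows essentially the same route as the paper: invoke MABK self-testing at odd $N$ to pin down the GHZ state and observables, use purity of the self-tested state to decouple Eve, then compute the ideal output distribution directly (the paper's ``by direct calculation'' for $N=4n-1$ versus $N=4n+1$ is exactly your case split $N\equiv 3$ versus $N\equiv 1\pmod 4$).

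Your closing caveat is unwarranted, however, and the paper does not need any SOS machinery here. Once self-testing certifies that the honest parties' reduced state $\mathrm{Tr}_E\ketbra{\Psi}{\Psi}$ is, up to local isometries, the \emph{pure} state $\ket{\psi_{\mathrm{GHZ}}}$ tensored with junk, the lift to the purification is automatic: any purification of $\ketbra{\psi_{\mathrm{GHZ}}}{\psi_{\mathrm{GHZ}}}\otimes\sigma_{\mathrm{junk}}$ necessarily factors as $\ket{\psi_{\mathrm{GHZ}}}\otimes\ket{\xi}_{\mathrm{junk},E}$, so Eve is decoupled from the ideal register. This is precisely the paper's one-line justification (``the fact that the parties hold a pure state implies Eve must be uncorrelated''); the heavier apparatus of \cref{lem:BIent} is reserved for the situation where only bipartite marginals, not the full $N$-partite state, are self-tested.
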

From a self-testing point of view, maximum violation of the MABK family can only be achieved with a GHZ state and maximally anticommuting observables~\cite{KaniewskiSelfTest1,KaniewskiSelfTest2}. For the form given in \cref{eq:MABK}, recall the following strategy achieves maximum violation:
\begin{equation}
\begin{gathered}
    \rho_{\bm{Q}} = \ketbra{\psi_{\mathrm{GHZ}}}{\psi_{\mathrm{GHZ}}},  \\
    A_{0}^{(k)} = \cos \theta_{N}^{+} \, \sigma_{X}+\sin \theta_{N}^{+} \, \sigma_{Y},  \\ A_{1}^{(k)} = \cos \theta_{N}^{-} \, \sigma_{X} + \sin \theta_{N}^{-} \, \sigma_{Y}, \\ k \in \{1,...,N\},
\end{gathered}\label{eq:oddStrat}
\end{equation}
    with $\theta_{N}^{\pm} = \pi/(4N) \pm \pi/4$, and satisfies $A_{0}^{(k)}A_{1}^{(k)}+A_{1}^{(k)}A_{0}^{(k)}=0$ for each $k$. Following this self-testing property, the infimum in \cref{eq:rateNodd} reduces to a single point up to symmetries, and the fact that the parties hold a pure state implies that Eve must be uncorrelated, i.e., $\rho_{\bm{R}E|\bm{x}^*} = \rho_{\bm{R}|\bm{x}^*} \otimes \rho_{\bm{E}}$. By direct calculation, one can verify for $N=4n-1$, $n =1,2,3,...$, $p(\bm{a}|\bm{0}) = 1/2^{N}, \forall \bm{a}$, hence $H(\{p(\bm{a}|\bm{0})\}) = N$. Similarly for $N= 4n+1$, $n=1,2,3,...$, $p(\bm{a}|\bm{1}) = 1/2^{N}, \forall \bm{a}$, implying $H(\{p(\bm{a}|\bm{1})\}) = N$. 

As an example, taking $N=3$ in the above construction we have $\langle M_3 \rangle =\big( \langle \tilde{A}_{1}^{(0)}\tilde{A}_{0}^{(1)}\tilde{A}_{0}^{(2)} \rangle + \langle \tilde{A}_{0}^{(0)}\tilde{A}_{1}^{(1)}\tilde{A}_{0}^{(2)} \rangle + \langle \tilde{A}_{0}^{(0)}\tilde{A}_{0}^{(1)}\tilde{A}_{1}^{(2)} \rangle - \langle \tilde{A}_{1}^{(0)}\tilde{A}_{1}^{(1)}\tilde{A}_{1}^{(2)} \rangle \big) / 2$, and the strategy in \cref{eq:oddStrat} reaches the algebraic bound of $2$. We therefore have that $\langle M_3 \rangle$ implies $3$ bits of randomness when all three parties use measurement $0$. On the other hand, the correlators in the inequality must take the values $\pm1$ to reach the algebraic bound, so no string of inputs which appears as a correlator in the inequality can generate more than 2 bits of global randomness (in fact they give exactly 2 bits when $\langle M_3 \rangle =2  $).

\subsection{$N$ even}
When $N$ is even, all combinations $\langle \tilde{A}_{x_{1}}^{(1)}\tilde{A}_{x_{2}}^{(2)} \cdots \tilde{A}_{x_{N}}^{(N)}\rangle $ appear in $\langle M_{N} \rangle$, and all must have non-zero weight to achieve the largest possible quantum value. This is incompatible with maximum randomness from one input combination. Hence, achieving the quantum bound does not certify maximum randomness; by direct calculation, maximum MABK violation certifies $N + 1/2 - \log_{2}(1+\sqrt{2})/\sqrt{2} \approx N-0.4$ bits of global DI randomness when all parties use measurement 0.

We now proceed to construct new Bell expressions which certify $N$ bits of randomness in the even case. Specifically, we use the techniques introduced in \cref{sec:CB} to generalize the $N=2$ case, which was addressed in~\cite{WBC}.  

\subsubsection{Bipartite self-tests}
To begin with, we summarize the results obtained for the $N=2$ case. 
\begin{lemma}[$I_{\theta}$ family of self-tests] Define the set $\mathcal{G} = (\pi/4,\pi/2) \cup (\pi/2,3\pi/4) \cup (5\pi/4,3\pi/2) \cup (3\pi/2,7\pi/4)$\footnote{Note that $\mathcal{G}=\left\{\theta\in[0,2\pi]\mid\cos(2\theta)<0,\ \cos(\theta)\neq0\right\}$.}. Let $\theta\in\mathcal{G}$, and define the family of Bell expressions parameterized by $\theta$, for parties $k,l\in \{1,...,N\}$,
\begin{multline}
    \langle I_{\theta}^{(k,l)} \rangle = \cos \theta \cos 2\theta \langle A_{0}^{(k)}A_{0}^{(l)}\rangle - \\ \cos 2\theta \big( \langle A_{0}^{(k)}A_{1}^{(l)} \rangle + \langle A_{1}^{(k)}A_{0}^{(l)}\rangle \big) + \cos \theta \langle A_{1}^{(k)}A_{1}^{(l)} \rangle. \label{eq:2thetaineq}
\end{multline}
Then we have the following:
\begin{enumerate}[(i)]
    \item The local bounds are given by $\pm\eta_{\theta}^{\mathrm{L}}$, where
    \begin{multline}
        \eta_{\theta}^{\mathrm{L}} =  
        \mathrm{max}\Big( | \cos\theta (   1-\cos 2\theta) |, \\
        | \cos\theta\big( 1+\cos 2\theta \big) | + | 2  \cos 2 \theta | \Big).
    \end{multline}
    
    \item The quantum bounds are given by $\pm\eta_{\theta}^{\mathrm{Q}}$, where $\eta_{\theta}^{\mathrm{Q}} = 2\sin^{3}\theta$.
    \item Up to local isometries, there exists a unique strategy that achieves $\langle I_{\theta} \rangle = \eta_{\theta}^{\mathrm{Q}}$:
    \begin{equation}
    \begin{aligned}
        \rho_{Q_{k}Q_{l}} &= \ketbra{\psi}{\psi}, \ \mathrm{where} \ \ket{\psi} = \frac{1}{\sqrt{2}}\big( \ket{00} + i\ket{11}\big),  \\
        A_{0}^{(k)} &= A_{0}^{(l)} = \sigma_{X},  \\
        A_{1}^{(k)} &= A_{1}^{(l)} = \cos \theta \, \sigma_{X} + \sin \theta \, \sigma_{Y}. 
    \end{aligned}\label{eq:thSt}
    \end{equation}
\end{enumerate} \label{lem:n2}
\end{lemma}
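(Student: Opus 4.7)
The plan is to prove (i)--(iii) in sequence, with (iii) resting on the technical core of (ii). For (i), I would maximize $\langle I_\theta\rangle$ directly over deterministic local strategies. For (ii) and (iii), I would construct an SOS decomposition of the shifted operator $\bar I_\theta = 2\sin^3\theta\,\mathbb{I} - I_\theta$; this simultaneously certifies the quantum upper bound and, combined with Jordan's lemma, the self-testing statement. The symmetric form of the bounds $\pm\eta^{\mathrm{L}}_\theta$ and $\pm\eta^{\mathrm{Q}}_\theta$ follows immediately from the local relabelling $A_x^{(k)} \mapsto -A_x^{(k)}$, which flips the sign of $I_\theta$, so it is enough to prove the upper bounds.

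For part (i), abbreviate $A_x = A_x^{(k)}$, $B_y = A_y^{(l)}$ and write
\begin{equation*}
\langle I_\theta\rangle = \alpha\langle A_0 B_0\rangle + \beta\big(\langle A_0 B_1\rangle + \langle A_1 B_0\rangle\big) + \delta\langle A_1 B_1\rangle,
\end{equation*}
with $\alpha = \cos\theta\cos 2\theta$, $\beta = -\cos 2\theta$, $\delta = \cos\theta$. On deterministic strategies $a_x, b_y \in \{\pm 1\}$, factor as $a_0(\alpha b_0 + \beta b_1) + a_1(\beta b_0 + \delta b_1)$ and maximize first over $a_0, a_1$ to get $|\alpha b_0 + \beta b_1| + |\beta b_0 + \delta b_1|$. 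The two residual cases $b_1 = \pm b_0$ produce candidate values $|\alpha+\beta| + |\beta+\delta|$ and $|\alpha-\beta| + |\beta-\delta|$. Using $1\pm\cos 2\theta \in\{2\cos^2\theta,\,2\sin^2\theta\}$ together with the sign information forced by $\theta \in \mathcal{G}$ (in particular $\cos 2\theta < 0$ and $\cos\theta \ne 0$, which fix the signs of $\cos\theta\pm\cos 2\theta$ and $1\pm\cos\theta$), these simplify to $|\cos\theta(1-\cos 2\theta)|$ and $|\cos\theta(1+\cos 2\theta)| + 2|\cos 2\theta|$ respectively, and $\eta^{\mathrm{L}}_\theta$ is their maximum.

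For part (ii), the bound $\eta^{\mathrm{Q}}_\theta \ge 2\sin^3\theta$ follows by direct substitution of the qubit strategy of \cref{eq:thSt}: on $\ket{\psi} = (\ket{00} + i\ket{11})/\sqrt{2}$ one checks $\langle\sigma_X \otimes \sigma_X\rangle = \langle\sigma_Y \otimes \sigma_Y\rangle = 0$ and $\langle\sigma_X \otimes \sigma_Y\rangle = \langle\sigma_Y \otimes \sigma_X\rangle = 1$, giving $\langle I_\theta\rangle = -2\sin\theta\cos 2\theta + 2\sin\theta\cos^2\theta = 2\sin^3\theta$. For the matching upper bound, I would exhibit an SOS
\begin{equation*}
2\sin^3\theta\,\mathbb{I} - I_\theta = \sum_i \lambda_i M_i^\dagger M_i, \qquad \lambda_i > 0,
\end{equation*}
with each $M_i$ a real linear combination of $\mathbb{I}$, $A_x$, $B_y$, $A_x B_y$ designed to annihilate $\ket{\psi}$ under the ideal measurements. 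Natural seeds come from the algebraic identities $(A_0 B_1 - A_1 B_0)\ket{\psi} = \sin\theta(\sigma_X\otimes\sigma_Y - \sigma_Y\otimes\sigma_X)\ket{\psi} = 0$ and $(A_1 B_1 - \cos 2\theta\,A_0 B_0 - \sin 2\theta\,\mathbb{I})\ket{\psi} = 0$, which both hold on the target state. Finding the right convex combination is the principal obstacle; I would start from the tilted-CHSH SOS families in the literature, treat $\theta$ as a deformation parameter, and fix the coefficients by expanding in products $A_x B_y$ and matching against $I_\theta$.

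Part (iii) then follows by a standard SOS plus Jordan's-lemma argument. Any strategy with purification $\ket{\Psi}$ saturating $\langle I_\theta\rangle = 2\sin^3\theta$ satisfies $M_i\ket{\Psi} = 0$ for every SOS term. Jordan's lemma decomposes each party's pair of binary observables into a direct sum of $2\times 2$ blocks, reducing the problem to qubits block-by-block. Within a block, the relations $M_i\ket{\Psi} = 0$ pin the observables into the $XY$-plane with the prescribed relative angle $\theta$ and pin the shared state to $\ket{\psi}$ up to a local unitary and a junk register; assembling the blocks produces the isometry $V$ required by \cref{def:selfTest}. The hardest single step, on which the rest of the argument leans, is constructing and verifying the explicit SOS.
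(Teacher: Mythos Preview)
Your proposal is sound and follows the standard SOS-plus-Jordan's-lemma route to bipartite self-testing. The paper, by contrast, does not prove \cref{lem:n2} directly at all: it records the lemma as a rewriting of \cite[Proposition~1]{WBC} and then recovers it as the $\phi=0$ special case of the two-parameter family $J_{\phi,\theta}$ (\cref{lem:n2_2}), which in turn is obtained in \cref{app:selfTest} as a one-line corollary of \cite[Proposition~1]{WBC3} via the substitution $\beta=-2\phi$, $\gamma=\theta-\phi$, $\alpha=\theta+\phi$. So the paper's proof is citation, whereas yours is a self-contained construction; the underlying methodology in the cited references is exactly the SOS-based one you sketch. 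What your route buys is independence from the external results; what the paper's route buys is economy and, more importantly, the full $J_{\phi,\theta}$ family for free, which is needed later in \cref{sec:ex2}.

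One minor caveat on part (i): your claim that membership in $\mathcal{G}$ fixes the sign of $\cos\theta+\cos 2\theta$ is not quite right---on $(\pi/4,\pi/2)$ this quantity changes sign at $\theta=\pi/3$---so the simplification of the $b_1=-b_0$ candidate requires a further case split. This does not affect the final answer (the $b_1=b_0$ candidate, which cleanly gives $|\cos\theta(1+\cos 2\theta)|+2|\cos 2\theta|$, dominates throughout $\mathcal{G}$), but the reduction is slightly more delicate than you indicate.
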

The above lemma is a rewriting of Proposition~1 in~\cite{WBC}, and can be recovered as a special case of \cref{lem:n2_2} to follow. Importantly, the self-testing property implies that when their inputs are 0, both parties measure $\sigma_{X}$ on $\ket{\psi}$, which results in 2 bits of randomness, $H(\{p(ab|00)\})=2$. We also remark that the state $\ket{\psi'} = (\ket{00} - i\ket{11})/\sqrt{2}$ with the same measurements achieves $\langle I_{\theta} \rangle = - \eta_{\theta}^{\mathrm{Q}}$. This is a symmetry of the case above, since relabelling the measurement outcomes of party $k$ yields the new Bell expression $I_{\theta}' = -I_{\theta}$. Therefore $\langle I_{\theta} \rangle = - \eta_{\theta}^{\mathrm{Q}}$ implies $\langle I_{\theta}' \rangle = \eta_{\theta}^{\mathrm{Q}}$, which self-tests $\ket{\psi}$ and the negated measurements, or equivalently $\ket{\psi'}$ and the original measurements.

\subsubsection{Target strategy}

For the $N$ partite case, we generalize the above bipartite strategy:
\begin{equation}
\begin{gathered}
    \rho_{\bm{Q}} = \ketbra{\psi_{\mathrm{GHZ}}}{\psi_{\mathrm{GHZ}}}, \\
    A_{0}^{(k)} = \sigma_{X}, \\
    A_{1}^{(k)} = \cos \theta \, \sigma_{X} + \sin \theta \, \sigma_{Y}, \ k \in \{1,...,N\},
\end{gathered}\label{eq:evenStrat}
\end{equation}
where $\theta \in \mathcal{G}$ is a free parameter. As before, we find $p(\bm{a}|\bm{0}) = 1/2^{N}, \forall \bm{a}$, hence this is a target strategy for certifying $N$ bits of global randomness. Our objective now will be to design a tailored Bell inequality that can certify $N$ bits by witnessing its maximum value, attained by the strategy in \cref{eq:evenStrat}. 

\subsubsection{Constructing an $N$-partite Bell inequality}
We use the techniques introduced in \cref{sec:CB} to construct the desired Bell expression by expanding the seed $I_{\theta}$. Studying the strategy in \cref{eq:evenStrat}, we notice that when $N-2$ parties perform their first measurement, $\sigma_{X}$, depending on the outcome parity the post-measurement state for the remaining parties is $(\ket{00} \pm i\ket{11})/\sqrt{2}$. Now, performing the measurements in \cref{eq:evenStrat} on that post-measurement state achieves $\langle I_{\theta} \rangle = \pm\eta_{\theta}^{\mathrm{Q}}$, which, according to the self-testing properties of $I_{\theta}$, will certify 2 bits of randomness. Since the strategy is symmetric under party permutations, we can repeat this argument over different subsets of $N-2$ parties, and conclude that, when every bipartite self-test is satisfied, the output setting $\bm{X} = \bm{0}$ must generate uniform DI randomness. The corresponding Bell inequality is constructed according to the following lemma.  

\begin{lemma}[Bell inequality for maximum randomness]
Let $\bm{\mu}$ be a tuple of $N-2$ measurement outcomes for all parties excluding $k,l$, and $n_{\bm{\mu}} \in \{0,1\}$ be the parity of $\bm{\mu}$. Let $\theta \in \mathcal{G}$ and $\{ I_{\bm{\mu}}^{(k,l)}\}_{\bm{\mu}}$ be a set of bipartite Bell expressions between parties $k,l$, where
\begin{equation}
    I_{\bm{\mu}}^{(k,l)} = (-1)^{n_{\bm{\mu}}}I_{\theta}^{(k,l)} .
\end{equation}
Then the expanded Bell expression given by
\begin{equation}
    I_{\theta} = \sum_{k = 1}^{N-1} \Bigg( \sum_{\bm{\mu}} \tilde{P}^{\overline{(k,N)}}_{\bm{\mu}|\bm{0}}I_{\bm{\mu}}^{(k,N)}\Bigg) \label{eq:NthetaBI}
\end{equation}
has quantum bounds $\pm\eta_{N,\theta}^{\mathrm{Q}}$ where $\eta_{N,\theta}^{\mathrm{Q}} = 2(N-1)\sin^{3}\theta$. Moreover, $\langle I_{\theta} \rangle = \eta_{N,\theta}^{\mathrm{Q}}$ is achieved up to relabellings by the strategy in \cref{eq:evenStrat}, and cannot be achieved classically. \label{lem:NevenBI}
\end{lemma}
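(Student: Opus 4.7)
The plan is to establish the three parts of the claim in order: the quantum upper bound, saturation by the target strategy, and the strict separation from the local value. Because $c_{k,N}=1$ for $k<N$ and zero otherwise, and each seed $I_{\bm{\mu}}^{(k,N)}=(-1)^{n_{\bm{\mu}}}I_\theta^{(k,N)}$ differs from $I_\theta^{(k,N)}$ only by an outcome relabeling on a single party, every seed inherits the local bound $\eta_\theta^{\mathrm{L}}$ and quantum bound $\eta_\theta^{\mathrm{Q}}=2\sin^3\theta$ from \cref{lem:n2}. Applying \cref{lem:conBI} then gives $\langle I_\theta\rangle\le(N-1)\eta_\theta^{\mathrm{Q}}=\eta_{N,\theta}^{\mathrm{Q}}$ at once, and the matching lower bound $-\eta_{N,\theta}^{\mathrm{Q}}$ follows by flipping the outputs of any one party, which negates every seed and hence $I_\theta$ itself.

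For achievability I would substitute the strategy (\ref{eq:evenStrat}) into (\ref{eq:NthetaBI}) and use the key observation that measuring $\sigma_X$ on any $N-2$ qubits of $\ket{\psi_{\mathrm{GHZ}}}$ projects the remaining two qubits onto $(\ket{00}+(-1)^{n_{\bm{\mu}}}i\ket{11})/\sqrt{2}$, each outcome $\bm{\mu}$ occurring with probability $1/2^{N-2}$. This is $\ket{\psi}$ when $n_{\bm{\mu}}=0$ and $\ket{\psi'}$ when $n_{\bm{\mu}}=1$; by \cref{lem:n2}(iii) together with the relabeling symmetry noted after it, the measurements in (\ref{eq:evenStrat}) then give $\langle I_\theta^{(k,N)}\rangle=(-1)^{n_{\bm{\mu}}}\eta_\theta^{\mathrm{Q}}$ on these states. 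The built-in sign $(-1)^{n_{\bm{\mu}}}$ in $I_{\bm{\mu}}^{(k,N)}$ cancels this parity dependence, so all $2^{N-2}$ branches contribute equally and the inner sum evaluates to $\eta_\theta^{\mathrm{Q}}$; summing over the $N-1$ pairs $(k,N)$ saturates $\eta_{N,\theta}^{\mathrm{Q}}$.

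For nontriviality, in any deterministic local hidden variable model the projector $\tilde{P}^{\overline{(k,N)}}_{\bm{\mu}|\bm{0}}$ equals $1$ for exactly one string $\bm{\mu}^{\star}(k)$ and $0$ otherwise, so the inner sum collapses to a single signed seed of magnitude at most $\eta_\theta^{\mathrm{L}}$ by \cref{lem:n2}(i). Summing over the $N-1$ pairs and extending to arbitrary local strategies by convexity gives a local bound of $(N-1)\eta_\theta^{\mathrm{L}}<(N-1)\eta_\theta^{\mathrm{Q}}=\eta_{N,\theta}^{\mathrm{Q}}$ for every $\theta\in\mathcal{G}$, so the quantum value cannot be reached classically. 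The main subtlety is the achievability step: one has to track carefully how the parity-dependent sign built into $I_{\bm{\mu}}^{(k,N)}$ exactly cancels the parity-dependent phase of the conditional two-qubit state so that the $2^{N-2}$ branches add constructively rather than averaging away, which is precisely what the definition (\ref{eq:NthetaBI}) is engineered to arrange; the remaining steps are straightforward applications of \cref{lem:conBI} and \cref{lem:n2}.
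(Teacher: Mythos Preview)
Your proposal is correct and follows essentially the same route as the paper: invoke \cref{lem:conBI} for the quantum upper bound $(N-1)\eta_\theta^{\mathrm{Q}}$, verify achievability by observing that measuring $\sigma_X$ on $N-2$ qubits of $\ket{\psi_{\mathrm{GHZ}}}$ leaves the parity-dependent state $(\ket{00}+i(-1)^{n_{\bm{\mu}}}\ket{11})/\sqrt{2}$ on which the seed attains $\pm\eta_\theta^{\mathrm{Q}}$ with the sign cancelled by the $(-1)^{n_{\bm{\mu}}}$ factor, and bound the local value by $(N-1)\eta_\theta^{\mathrm{L}}<\eta_{N,\theta}^{\mathrm{Q}}$. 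Your deterministic-strategy argument for the local bound is a touch more explicit than the paper's one-line remark, but the content is identical.
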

\noindent The proof can be found in \cref{app:lem5proof}.

We can now state our first main result:
\begin{proposition}[Maximum randomness certification]
\textit{Achieving the maximum quantum value of the Bell inequality in \cref{lem:NevenBI} certifies $N$ bits of global randomness, i.e.,      
\begin{equation}
   R_{I_{\theta}}(\eta_{N,\theta}^{\mathrm{Q}}) = N. \label{eq:rateNeven}
\end{equation}
}
\label{thm:even}
\end{proposition}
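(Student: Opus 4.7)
The plan is to reduce the claim to a two-step application of the machinery developed in Section III: first apply the Decoupling Lemma (\cref{lem:BIent}) to show that Eve's system factorises off the honest parties' post-measurement state when $\bm{x}=\bm{0}$, then apply the Rate Lemma (\cref{lem:rate}) to rewrite $R_{I_\theta}(\eta_{N,\theta}^{\mathrm{Q}})$ as a classical Shannon entropy, and finally compute that entropy to be $N$ using the self-testing structure of the seed.

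First I would check that $I_\theta$ from \cref{lem:NevenBI} fits the hypotheses of \cref{lem:BIent}. By construction $I_\theta$ is an expanded Bell expression in the sense of \cref{def:expandedBI} with $c_{k,N}=1$ for $k<N$ and zero otherwise, and seeds $I_{\bm\mu}^{(k,N)}=(-1)^{n_{\bm\mu}}I_\theta^{(k,N)}$. By \cref{lem:n2}, each $I_\theta^{(k,N)}$ (and hence, via the output relabeling that realises the sign flip, each $(-1)^{n_{\bm\mu}}I_\theta^{(k,N)}$) is a bipartite self-test of $\ket\psi=(\ket{00}+i\ket{11})/\sqrt 2$ together with $\sigma_X$ on input $0$, and the quantum bound $2\sin^3\theta$ admits an SOS decomposition (which can be inherited from the proof of \cref{lem:n2}). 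The positivity condition $\bra\Phi P^{(k)}_{a_k|0}\otimes P^{(N)}_{a_N|0}\ket\Phi>0$ is automatic, since the ideal zero-input measurements $\sigma_X\otimes\sigma_X$ on $\ket\psi$ give the uniform distribution $1/4$ on every outcome pair.

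With the hypotheses met, \cref{lem:BIent} gives $\rho_{\bm R E|\bm 0}=\rho_{\bm R|\bm 0}\otimes\rho_E$ for any strategy achieving $\langle I_\theta\rangle=\eta_{N,\theta}^{\mathrm{Q}}$, and \cref{lem:rate} then yields $R_{I_\theta}(\eta_{N,\theta}^{\mathrm{Q}})=H(\{p(\bm a|\bm 0)\})$. To compute this Shannon entropy, I would use that simultaneous saturation of every bipartite self-test forces, via a local isometry, the $(k,N)$ reduced state after projecting the remaining $N-2$ parties with $P^{\overline{(k,N)}}_{\bm\mu|\bm 0}$ to be $(\ket{00}\pm i\ket{11})/\sqrt 2$, with the parties measuring $\sigma_X$ on input $0$. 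Composing these across all $k$ pins the global state and input-$0$ measurements down to the target strategy of \cref{eq:evenStrat}; a direct calculation then shows that measuring $\sigma_X^{\otimes N}$ on $\ket{\psi_{\mathrm{GHZ}}}$ gives $p(\bm a|\bm 0)=1/2^N$ for every $\bm a$, so $H(\{p(\bm a|\bm 0)\})=N$.

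The main obstacle is bookkeeping rather than anything deep: one must carefully reconcile the sign $(-1)^{n_{\bm\mu}}$ in the seed with the ``same $\ket\Phi$'' hypothesis of \cref{lem:BIent}, which is done either by invoking the output-relabeling equivalence built into \cref{def:expandedBI} or by absorbing the parity into a local unitary $\sigma_Z$ on party $N$. Writing out an explicit SOS for $I_\theta^{(k,N)}$ realising the bound $2\sin^3\theta$ (so that the constraints of the form $M_i\ket\psi=0$ in \cref{eq:const0} can be fed into the proof of \cref{lem:BIent}) is the one calculation where care is needed; everything else follows formally from the two lemmas.
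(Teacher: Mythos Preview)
Your proposal is correct and follows essentially the same route as the paper: the paper's proof is a one-line appeal to \cref{lem:BIent} and \cref{lem:rate}, and what you have written is a careful unpacking of exactly those two steps together with the verification that the hypotheses of \cref{lem:BIent} are met by the seed $(-1)^{n_{\bm\mu}}I_\theta^{(k,N)}$. Your remarks on the sign bookkeeping and the need for an explicit SOS are accurate but already absorbed into the framework (the output-relabeling equivalence is built into \cref{def:expandedBI}, and the SOS for $I_\theta^{(k,N)}$ is inherited from the self-testing result of \cref{lem:n2}).
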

The proof is obtained directly by relating maximal Bell violation to decoupling Eve from the post-measurement state, proven in \cref{lem:BIent,lem:rate}. Specifically, using \cref{lem:NevenBI}, $I_{\theta}$ is an expanded Bell expression  with quantum bounds $\pm \eta^{\mathrm{Q}}_{N,\theta}$. Furthermore, for every sub-expression $I_{\bm{\mu}}^{(k,N)}$, there exists an SOS decomposition that self-tests the pure bipartite state and ideal measurements in \cref{eq:thSt}\footnote{See the SOS decompositions~\cite[Lemma 9]{WBC3} and the proof of \cref{lem:n2_2} for their connection to the family $I_{\theta}$.}. This meets the conditions required to apply \cref{lem:BIent} and \cref{lem:rate}. Proposition~\ref{thm:even} then follows from the fact that $p(\bm{a}|\bm{0}) = 1/2^{N}$ for the strategy in \cref{eq:evenStrat}.

\subsection{MABK value and maximum randomness}

Above we gave a new one parameter construction for certifying $N$ bits of device-independent randomness in any $N$-party 2-input 2-output scenario. Now we will study some properties of the correlations used to certify maximum randomness. We are specifically interested in how maximal randomness and MABK violation tradeoff against each other. In particular, how large can the MABK violation of a correlation be whilst certifying maximal randomness?

\subsubsection{Achievable MABK values with maximum randomness}

When $N$ is odd, we have discussed that one can always certify $N$ bits of DI randomness from maximum MABK violation; for the even case, it is unclear how large the MABK value can be whilst certifying maximum randomness. Using the fact that, for the strategy in \cref{eq:evenStrat}, $\langle A_{\bm{x}} \rangle = \sin n \theta$, where $n = \sum_{k=1}^{N}x_{k}$, its MABK value is given by\footnote{This can be established using the identities $\cos(x) = (e^{ix} + e^{-ix})/2, \ \sin(x) = (e^{ix} - e^{-ix})/(2i)$ and $(1+e^{ix})^{N} = 2^{N}e^{ixN/2}\cos^{N}(x/2)$.}
\begin{multline}
    \langle M_{N}(\theta) \rangle = 2^{\frac{N-1}{2}}\Big(  \cos^{N}\big(\theta/2 + \pi/4\big) \sin \big(N\theta/2 + \pi/4 \big) \\ + \cos^{N}\big(\theta/2 - \pi/4\big) \sin \big(N\theta/2 - \pi/4 \big) \Big).\label{eq:mermSth}
\end{multline} 
Due to \cref{thm:even}, for every $\theta \in \mathcal{G}$, $N$ bits of randomness can be certified by a quantum correlation achieving the MABK value $\langle M_{N}(\theta) \rangle$. We are hence interested in finding the maximum of this function over $\theta \in \mathcal{G}$. In the following proposition, we introduce $\theta^*_N$, which, for each $N$, is a candidate value of $\theta$ achieving this maximum, and prove that it belongs to the set $\mathcal{G}$.

\begin{proposition}
For even $N$, let
    \begin{equation}\label{eq:thetanstar}
        \theta^{*}_{N} = \frac{2\pi t_{N}}{N+1},
    \end{equation}
    where $t_{N}$ is the $(N/2)^{\mathrm{th}}$ element of the sequence $1,1,5,7,3,3,11,13,5,5,...$ given by
    \begin{multline}
        t_{N} = \begin{cases}
            N/4 + 1/2, \  \mathrm{if} \ N = 8n + 2, \\
            N/4,\  \mathrm{if} \ N = 8n + 4, \\
            3N/4 + 1/2,\  \mathrm{if} \ N = 8n + 6, \\
            3N/4 + 1,\  \mathrm{if} \ N = 8n + 8, \ n \in \mathbb{N}_{0}.
        \end{cases} \label{eq:m}
    \end{multline}
    Then $\theta^{*}_{N} \in \mathcal{G}$.
    \label{prop:theta}
\end{proposition}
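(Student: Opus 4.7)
The plan is a direct case-by-case verification. Recalling the footnote characterization $\mathcal{G}=\left\{\theta\in[0,2\pi]\mid\cos(2\theta)<0,\ \cos(\theta)\neq 0\right\}$, we can describe $\mathcal{G}$ equivalently as the disjoint union of the four open intervals $(\pi/4,\pi/2)$, $(\pi/2,3\pi/4)$, $(5\pi/4,3\pi/2)$, $(3\pi/2,7\pi/4)$. Since $\theta^{*}_{N}=2\pi t_{N}/(N+1)$ is an explicit rational multiple of $\pi$ in each of the four residue classes of $N$ mod 8, it suffices to show that for each class $\theta^{*}_{N}/\pi$ lies strictly in one of the four intervals $(1/4,1/2)$, $(1/2,3/4)$, $(5/4,3/2)$, $(3/2,7/4)$.

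Concretely, I would handle the four cases in turn. For $N=8n+2$, we have $t_{N}=2n+1$ so $\theta^{*}_{N}=2\pi(2n+1)/(8n+3)$; the inequality $\pi/2<\theta^{*}_{N}$ reduces to $8n+3<8n+4$ and $\theta^{*}_{N}<3\pi/4$ reduces to $16n+8<24n+9$, both trivially true for $n\geq 0$, hence $\theta^{*}_{N}\in(\pi/2,3\pi/4)$. For $N=8n+4$, we have $t_{N}=2n+1$ so $\theta^{*}_{N}=2\pi(2n+1)/(8n+5)$; the bounds $\pi/4<\theta^{*}_{N}<\pi/2$ follow from $-3<8n$ and $8n+4<8n+5$. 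For $N=8n+6$, one computes $t_{N}=6n+5$ and $\theta^{*}_{N}=2\pi(6n+5)/(8n+7)$, and the two inequalities reduce to $-5<8n$ and $24n+20<24n+21$, placing $\theta^{*}_{N}\in(5\pi/4,3\pi/2)$. Finally, for $N=8n+8$, $t_{N}=6n+7$ and $\theta^{*}_{N}=2\pi(6n+7)/(8n+9)$, with bounds reducing to $24n+27<24n+28$ and $-7<8n$, giving $\theta^{*}_{N}\in(3\pi/2,7\pi/4)$.

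Since every inequality in the verification is strict, $\theta^{*}_{N}$ never hits one of the excluded endpoints $\pi/4,\pi/2,3\pi/4,5\pi/4,3\pi/2,7\pi/4$, so the two defining conditions $\cos(2\theta^{*}_{N})<0$ and $\cos(\theta^{*}_{N})\neq 0$ both hold and $\theta^{*}_{N}\in\mathcal{G}$ in every case. There is no real obstacle beyond book-keeping; the only mildly unobvious point is choosing, for each case, the correct one of the four intervals, which is dictated by the sequence $1,1,5,7,3,3,11,13,\dots$ of values of $t_{N}$ and is essentially why the definition of $t_{N}$ is split into four sub-cases in \cref{eq:m}. One could equivalently present the argument as a single calculation showing that $\theta^{*}_{N}/\pi$ differs from the nearest ``forbidden'' rational ($1/2$ or $3/2$) by a quantity of order $1/(N+1)$, but the four-line verification above is shortest and most transparent.
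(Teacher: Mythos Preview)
Your proof is correct, and the arithmetic in all four cases checks out. Both you and the paper split into the four residue classes mod $8$ and identify the same target subinterval of $\mathcal{G}$ in each case, but the arguments diverge after that.

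The paper's proof is more circuitous: for each subsequence it checks that the first term lies in the correct subinterval, then examines ratios of consecutive terms to establish strict monotonicity, and finally computes the limits (which are $\pi/2$ or $3\pi/2$, the inner boundary point of the relevant subinterval). The combination of monotone convergence toward an endpoint, starting from inside the interval, yields the conclusion. Your approach instead verifies the two bounding inequalities directly for general $n\geq 0$, reducing each to a linear inequality in $n$ that is trivially true. This is shorter and more elementary; it also avoids having to check the $n=0$ base cases separately. The paper's monotonicity/limit analysis, on the other hand, yields the extra information (used later in the proof of \cref{thm:asymp}) that $\theta^{*}_{N}\to\pi/2$ or $3\pi/2$ as $N\to\infty$, which your direct verification does not explicitly exhibit, though it is of course implicit in the inequalities you wrote down.
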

\noindent Proof can be found in \cref{app:prop1}. 
Using this result, we are able to prove the following. It will be convenient to define
\begin{equation}
    m_{N}^{*} = \langle M_{N}(\theta^{*}_{N}) \rangle.
\end{equation}

\begin{proposition}
Let $N$ be an even integer. For every MABK value $s$ in the range $(1,m_{N}^{*}]$, there exists a $\theta_{s} \in \mathcal{G}$ that satisfies $s = \langle M_{N}(\theta_{s})\rangle$. \label{prop:even1}   
\end{proposition}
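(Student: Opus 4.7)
The plan is to leverage the continuity of $\theta\mapsto\langle M_N(\theta)\rangle$ and apply the intermediate value theorem (IVT) across the connected components of $\mathcal{G}$. The closed-form expression in \cref{eq:mermSth} is a finite sum of products of continuous trigonometric functions, so $\langle M_N(\theta)\rangle$ is continuous on all of $\mathbb{R}$; its image on any connected component of $\mathcal{G}$ is therefore an interval. By \cref{prop:theta}, $\theta_N^*\in\mathcal{G}$, and by definition $\langle M_N(\theta_N^*)\rangle=m_N^*$.

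First I would identify the connected component $\mathcal{G}_*\subseteq\mathcal{G}$ containing $\theta_N^*$ and compute the two one-sided limits of $\langle M_N(\theta)\rangle$ at the endpoints of $\mathcal{G}_*$. All such endpoints lie in $\{k\pi/4:k\in\mathbb{Z}\}$, and at $\theta\in\{\pi/2,3\pi/2\}$ one of the factors $\cos^N(\theta/2\pm\pi/4)$ vanishes, leaving a single summand and the simple closed form $\pm 2^{(N-1)/2}\sin((N-1)\pi/4)$. Together with the interior value $m_N^*$ (and using that $\theta_N^*$ realises the maximum on $\mathcal{G}_*$, which can be verified from the explicit characterisation in \cref{eq:thetanstar,eq:m}), these two boundary limits pin $\langle M_N\rangle(\mathcal{G}_*)$ down to a concrete interval whose supremum is $m_N^*$.

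If the lower boundary value of $\mathcal{G}_*$ is already at most $1$, IVT applied to $\mathcal{G}_*$ alone yields every $s\in(1,m_N^*]$; this is precisely what happens for $N=2$, where $\mathcal{G}_*=(\pi/2,3\pi/4)$ has boundary limit $\langle M_2(\pi/2)\rangle=1$, exactly the MABK local bound. Otherwise I would pass to a neighbouring component of $\mathcal{G}$ that meets $\mathcal{G}_*$ at a single excluded boundary point such as $\pi/2$. Continuity of $\langle M_N\rangle$ at that excluded point forces the two one-sided limits to agree, so the union of the two images is itself an interval; computing the further boundary of the adjacent component then exhibits a value that is $\leq 1$, and IVT across the union covers $(1,m_N^*]$. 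For example, for $N=4$ one has $\mathcal{G}_*=(\pi/4,\pi/2)$ with image roughly $(3/2,m_4^*]$, while the adjacent component $(\pi/2,3\pi/4)$ has image $(-3/2,2)$; the union contains $(1,m_4^*]$.

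The main technical obstacle is a case analysis on $N\bmod 8$, since the location of $\mathcal{G}_*$ and the relevant boundary values differ across the four cases in \cref{prop:theta}. The identity $\langle M_N(-\theta)\rangle=-\langle M_N(\theta)\rangle$ together with the invariance of $\mathcal{G}$ under $\theta\mapsto-\theta$ and $\theta\mapsto\theta+\pi$ should collapse several sub-cases, reducing the work to only one or two essentially distinct configurations. A secondary subtlety is ruling out that $\langle M_N\rangle$ on $\mathcal{G}_*$ exceeds $m_N^*$ at another interior critical point; this also follows from the characterisation of $\theta_N^*$ as a specific critical point of the closed form.
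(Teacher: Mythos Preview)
Your plan---IVT on connected components of $\mathcal{G}$, boundary evaluations, case split on $N\bmod 8$---is exactly the paper's approach. The one logical gap is the step ``continuity at the excluded point forces the union of the two images to be an interval.'' Agreement of one-sided limits does not by itself guarantee that $L:=\langle M_N(\pi/2)\rangle$ (or $\langle M_N(3\pi/2)\rangle$) is attained on the open set $\mathcal{G}$: if $\pi/2$ were a strict local extremum, $L$ would belong to neither open image and the union would be $(\cdot,L)\cup(L,m_N^*]$, missing precisely $L\in(1,m_N^*]$. The paper closes this by also using the \emph{far} endpoint of $\mathcal{G}_*$---for instance, in the case $N=8n+2$ it checks $|\langle M_{8n+2}(3\pi/4)\rangle|<L$ and then applies IVT on $[\theta_N^*,3\pi/4]$ to hit $L$ strictly inside $\mathcal{G}_*$. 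You already plan to compute both endpoints of $\mathcal{G}_*$, so the ingredient is present; it just has to be deployed to capture $L$ explicitly rather than folded into a union-of-images claim.

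Two smaller remarks. The proposition only requires that each $s\in(1,m_N^*]$ be attained, not that $m_N^*$ is the supremum on $\mathcal{G}_*$; your ``secondary subtlety'' about ruling out larger critical values is unnecessary and the paper never addresses it. Also, the symmetry $\langle M_N(-\theta)\rangle=-\langle M_N(\theta)\rangle$ sends a component of $\mathcal{G}$ to one on which $\langle M_N\rangle$ is negative, so it does not collapse the four positive cases as you hope; the paper handles the four $N\bmod 8$ residues separately, and on the adjacent component it typically exhibits an interior point $\tilde\theta_n$ with $\langle M_N(\tilde\theta_n)\rangle<0$ rather than relying on the outer boundary value at $\pi/4$ or $7\pi/4$ (whose sign alternates with $n$).
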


\noindent The proof of Proposition~\ref{prop:even1} can be found in \cref{app:evi2}. In \cref{fig:merm1}, we plot the MABK value for a given $N$ as a function of $\theta \in \mathcal{G}$, showing the full range of violations $(1,m_{N}^{*}]$ is accessible. Specifically, the dashed lines in \cref{fig:merm1} indicate an MABK value of 1, while the peak of each graph is $m_{N}^{*}$. As $\theta$ varies, the graph shows a continuous curve between these values. We next state the following technical conjecture.
\begin{conjecture}
The maximum MABK value achievable by quantum strategies with uniformly random outputs on input $\bm{0}$, i.e., $p(\bm{a}|\bm{0}) = 1/2^{N}, \ \forall \bm{a}$, is $m_{N}^{*}$.
\label{conj:maxMerm}
\end{conjecture}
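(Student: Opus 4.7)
The plan is to combine a reduction to a structured semidefinite program with a symmetry-based analytic attack. The base case $N=2$ is essentially Proposition~1 of~\cite{WBC}, so I focus on even $N\geq 4$.

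First, I would translate the hypothesis into algebraic constraints. Requiring $p(\bm{a}|\bm{0}) = 1/2^N$ for all $\bm{a}$ is equivalent to $\langle \prod_{k\in S}\tilde{A}_0^{(k)}\rangle = 0$ for every non-empty $S\subseteq\{1,\ldots,N\}$, a set of $2^N-1$ linear equalities on the moments. The conjecture then becomes: maximize $\langle M_N\rangle$ over quantum strategies subject to these equalities. The lower bound is immediate, since the target strategy \eqref{eq:evenStrat} with $\theta=\theta_N^*$ from \cref{prop:theta} satisfies the uniformity constraint (all parties measure $\sigma_X$ on the GHZ state) and attains $\langle M_N\rangle = m_N^*$ by \eqref{eq:mermSth}. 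For the matching upper bound I would formulate the problem as an NPA-hierarchy SDP at level $L$, with the $2^N-1$ linear equality constraints supplementing the standard moment-matrix positivity. Numerical evaluation at low $L$ would furnish evidence for small $N$, and, in favourable cases, the dual of the SDP would supply an explicit sum-of-squares certificate
\[
m_N^*\mathbb{I} - M_N = \sum_j M_j^\dagger M_j + \sum_{\bm{a}\neq \bm{0}}\lambda_{\bm{a}}\Bigl(\tilde{P}_{\bm{a}|\bm{0}} - \tfrac{1}{2^N}\mathbb{I}\Bigr),
\]
with real $\lambda_{\bm{a}}$ and operator polynomials $M_j$, which would upgrade the numerics to a proof.

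A complementary analytic route applies Jordan's lemma to each party's pair of observables, block-diagonalising them into $2\times 2$ blocks. This reduces the problem to a convex combination of qubit strategies parameterised by per-party angles and a multi-qubit state. Writing the MABK expression and the uniformity constraint explicitly in those parameters, and then invoking the permutation invariance of MABK, one can restrict attention to symmetric extremal strategies. The target strategy is a natural candidate extremum, and a KKT-style analysis within the one-parameter family \eqref{eq:evenStrat} identifies $\theta=\theta_N^*$ as the optimal angle — indeed $\theta_N^*$ is defined in \cref{prop:theta} as the argument of the maximum of $\langle M_N(\theta)\rangle$ over the uniformity-preserving family.

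The main obstacle is producing an SOS certificate that is uniform in $N$: the degree of the required operator polynomials appears to grow with $N$, and closed-form recursions for MABK-type expressions are notoriously delicate. Equivalently, in the Jordan-reduced picture, the hard part is ruling out non-permutation-symmetric extremal strategies that might achieve a larger MABK value while still producing uniform statistics on input $\bm{0}$. Without a structural lemma tying uniformity to permutation invariance — plausibly derivable from self-testing results for GHZ-like states — one likely must settle for NPA-level numerical evidence for finite $N$, which is precisely the stance adopted by stating this as a conjecture rather than a theorem.
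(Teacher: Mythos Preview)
This statement is a conjecture that the paper does not prove; it supplies only numerical evidence (Appendix~\ref{app:evC1} and Table~\ref{tab:evidence}). Your proposal correctly recognises this and is candid that the analytic routes you sketch run into obstacles, leaving SDP-level numerical evidence as the realistic outcome. In that sense your stance matches the paper's.

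On the numerical side, the paper's approach differs from yours in one useful respect. Rather than imposing all $2^{N}-1$ vanishing-moment constraints, the paper exploits that the MABK functional depends only on the full $N$-body correlators and is therefore invariant under a symmetrisation channel $\mathcal{E}$ that averages over even-parity and odd-parity output strings (Lemma~\ref{lem:entSimp}). After this symmetrisation the uniformity condition collapses to the single linear constraint $\langle A_{\bm{0}}\rangle = 0$, so the SOS certificate carries only one Lagrange multiplier rather than $2^{N}-1$. This is what lets the paper push the numerics to $N=12$ with a modest monomial basis. Your formulation is valid as an upper-bounding SDP but would scale considerably worse.

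Your analytic sketch via Jordan's lemma and a KKT analysis of the symmetric family goes beyond anything the paper attempts, but, as you yourself identify, the missing piece is a structural lemma ruling out non-permutation-symmetric extremisers under the uniformity constraint. The paper does not supply such a lemma either, which is exactly why the statement is left as a conjecture.
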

Conjecture \ref{conj:maxMerm} is known to hold in the case $N=2$~\cite{WBC}, and we provide numerical evidence that it holds for general $N$ in~\cref{app:evC1,app:conj2E}. 

In \cref{fig:merm1}, we illustrate that $m_{N}^{*}$ is the largest MABK value that can be achieved by the family of strategies in \cref{eq:evenStrat}, all of which generate maximum randomness. However, we have not ruled out the existence of a strategy outside this family, also generating maximum randomness, achieving an MABK value greater than $m_{N}^{*}$. \cref{conj:maxMerm} is that this is not the case, and $m_{N}^{*}$ is indeed the true maximum. Based on this, we can now state the range of MABK violations for which maximum DI randomness can be certified when $N$ is even. 
\begin{lemma}[MABK violations achievable with maximum DI randomness, even case] \label{lem:RvM1_even}
For even $N$, we have:
\begin{enumerate}[(i)]
    \item For every $s\in(1,m_{N}^{*}]$, there exists a quantum correlation with MABK value $s$ that certifies $N$ bits of device-independent randomness.
\end{enumerate}
Suppose \cref{conj:maxMerm} holds. Then we additionally have:
\begin{enumerate}[(ii)]
    \item $m_{N}^{*}$ is the largest MABK value compatible with strategies generating $N$ bits of device-independent randomness. Or conversely, any strategy achieving an MABK value $s > m_{N}^*$ certifies strictly less than $N$ bits of device-independent randomness. 
\end{enumerate}      
\end{lemma}

Part~$(i)$ can be established by the following reasoning: by varying $\theta \in \mathcal{G}$, the family of strategies in \cref{eq:evenStrat} achieves every MABK value in the interval $(1,m_{N}^{*}]$ by \cref{prop:even1}. Since $\theta \in \mathcal{G}$ for all these values, the corresponding bipartite strategy is self-testable according to \cref{lem:n2}; we can hence apply \cref{lem:NevenBI} to expand the Bell expression to $N$ parties, and it follows from \cref{lem:rate} that this Bell expression certifies $N$ bits of DI randomness. 

An implication of \cref{prop:theta} is that the MABK value $m_{N}^{*}$ becomes an achievable lower bound on the maximum MABK value for quantum correlations certifying maximum randomness. This follows from the same reasoning outlined in the previous paragraph, since $\theta_{N}^{*} \in \mathcal{G}$ implies maximum DI randomness can be certified from the correlations generated by the associated strategy in \cref{eq:evenStrat}. Part~(ii) of Lemma~\ref{lem:RvM1_even} is that this is optimal (if \cref{conj:maxMerm} holds), in the sense that one cannot achieve a larger MABK value whilst simultaneously certifying maximum randomness. As evidence we derive a numerical technique that can generate upper bounds on this MABK value, and show these upper bounds match the lower bounds for some values $N$. See \cref{app:evC1} for the details. We also remark that \cref{conj:maxMerm} is known to hold for the case $N=2$~\cite{WBC}.

We now consider the case of odd MABK expressions. 
\begin{proposition}
    Let $N$ be an odd integer. For every $s\in(1,2^{(N-1)/2})$, there exists a $\theta_{s} \in \mathcal{G}$ that satisfies $s = \langle M_{N}(\theta_{s})\rangle$. \label{prop:odd1}
\end{proposition}
The proof can be found in \cref{app:evi2}. Similar to the even case, in \cref{fig:merm2} we plot the MABK value as a function of $\theta$ for different $N$, and see that all quantum achievable values can be obtained by \cref{eq:evenStrat} for some $\theta \in \mathcal{G}$. We then have the following consequence:

\begin{lemma}[MABK violations achievable with maximum DI randomness, odd case] \label{lem:RvM1_odd}
Let $N$ be odd. For every $s\in(1,2^{(N-1)/2}]$ there exists a quantum correlation with MABK value $s$ that certifies $N$ bits of device-independent randomness.
\end{lemma}

The above lemma asserts that when $N$ is odd, for all quantum-achievable MABK values maximum randomness can be realized simultaneously. For the case of $\langle M_{N} \rangle = 2^{(N-1)/2}$, i.e., the maximum quantum value, randomness can be certified according to \cref{thm:odd}. For all other MABK values, \cref{prop:odd1} implies maximum randomness can be certified by $I_{\theta}$ for some $\theta \in \mathcal{G}$. 

We remark on the nontrivial maximal MABK violation compatible with maximum randomness when $N$ is even, contrasting the odd case; as discussed in \cref{sec:Nodd}, this can be seen as a consequence MABK expressions for even $N$ containing every $N$ party correlator. When maximum randomness is being certified, one of these correlators must be zero, which restricts the maximum MABK violation to be strictly less than the optimal quantum value.  

\begin{figure}[h]
\includegraphics[width=8.4cm]{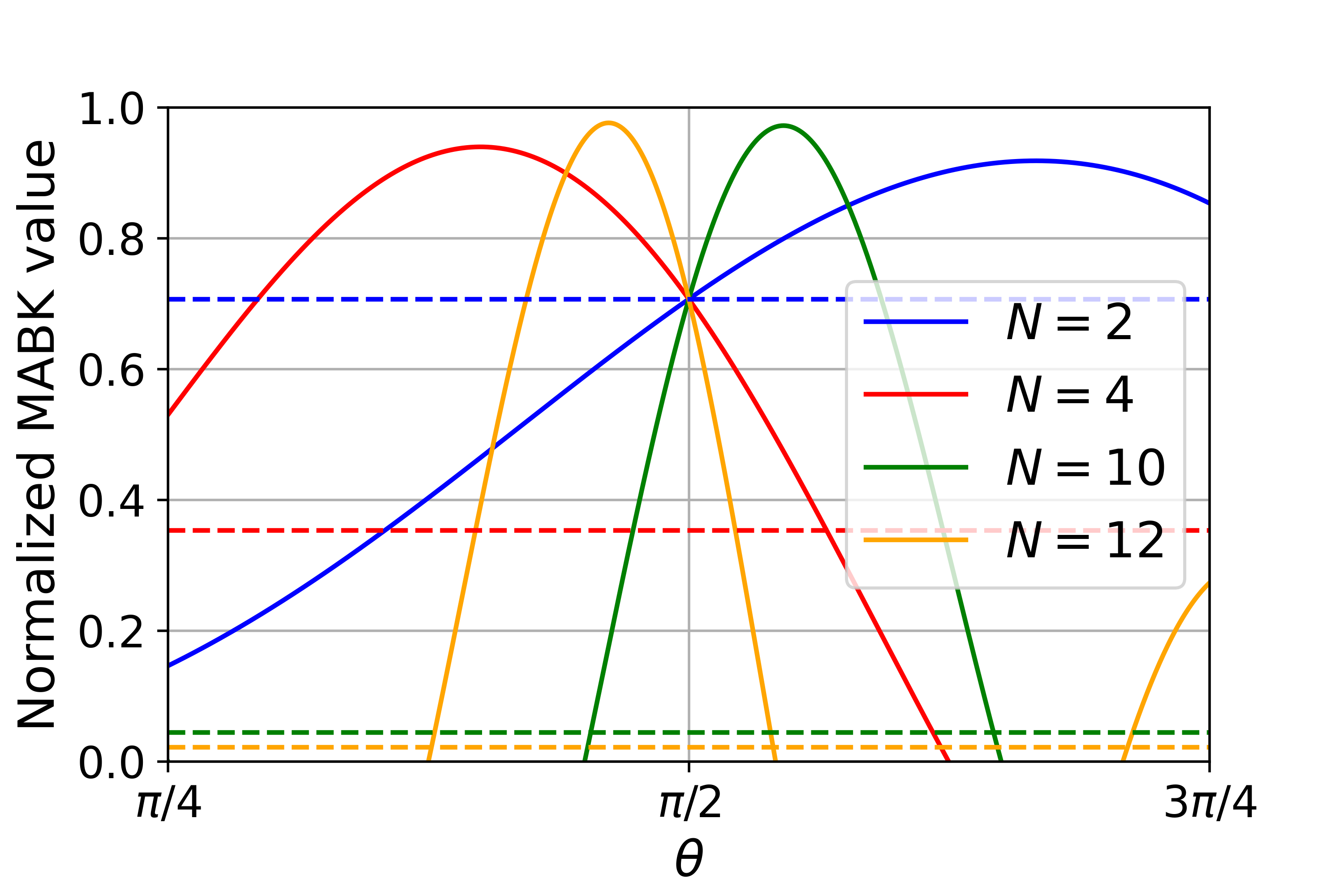}
\includegraphics[width=8.4cm]{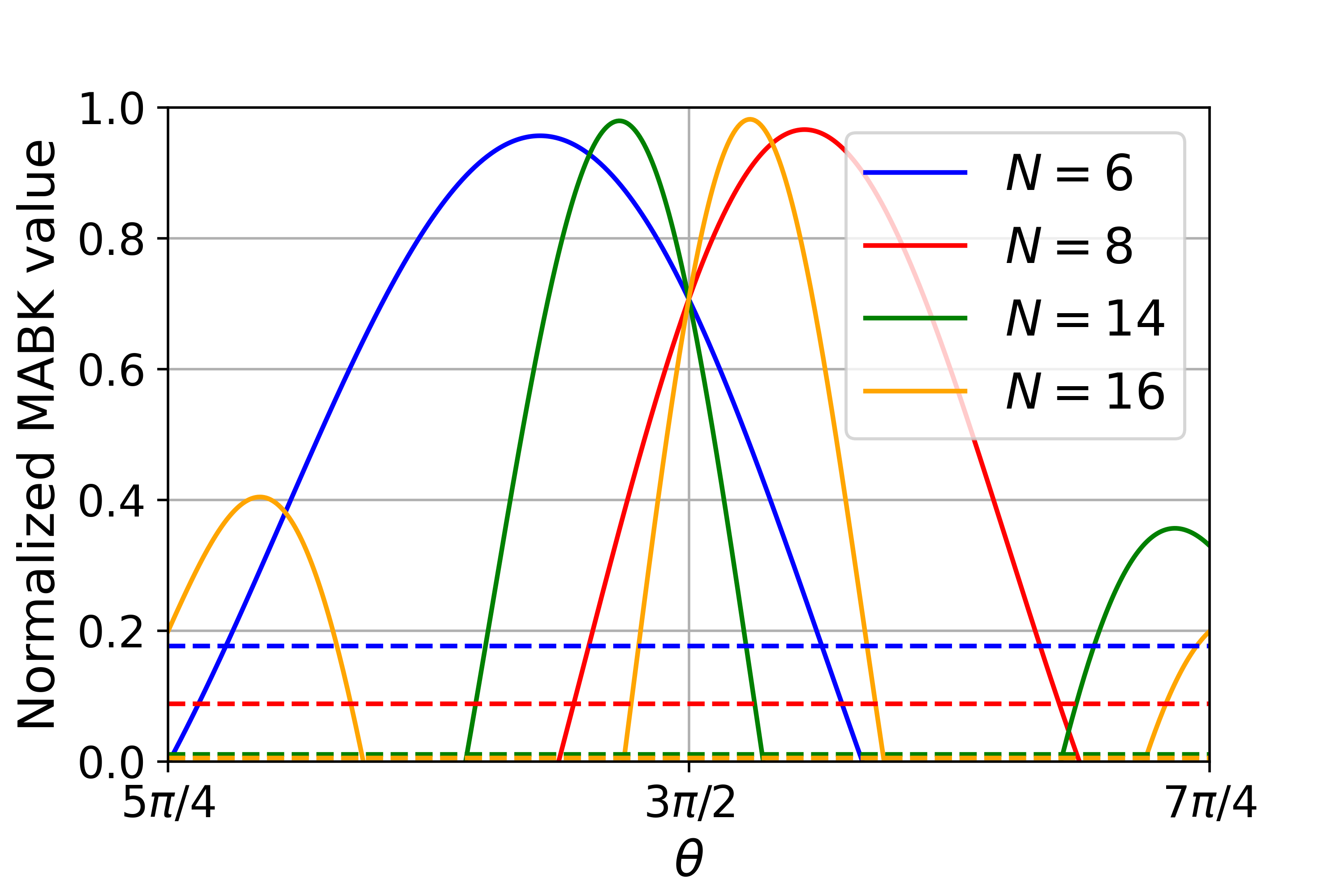}
\centering
\caption{Using the family of strategies in~\eqref{eq:evenStrat} for even $N$ we plot the MABK value renormalized by the maximum quantum value over all strategies (i.e., $2^{(N-1)/2}$) in terms of $\theta \in [\pi/4,3\pi/4]$, and $\theta \in [5\pi/4,7\pi/4]$ respectively. The dashed lines indicate where the strategy becomes local. All points in this interval, excluding the boundaries and center point (since $\pi/2$ and $3\pi/2$ are not in $\mathcal{G}$), correspond to strategies that certify $N$ bits of randomness device-independently using our expanded Bell expressions.}
\label{fig:merm1}
\end{figure}

\begin{figure}[h]
\includegraphics[width=8.4cm]{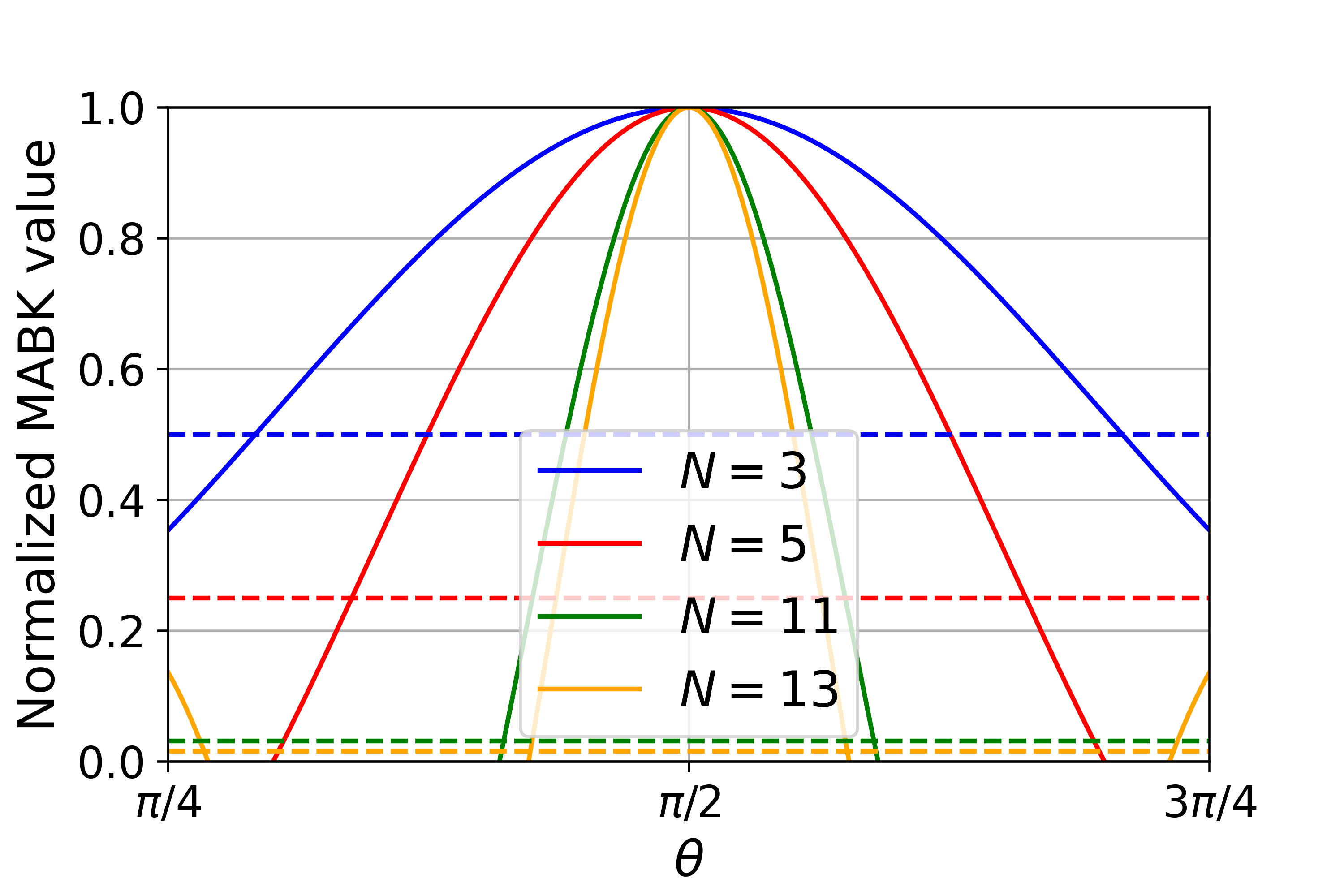}
\includegraphics[width=8.4cm]{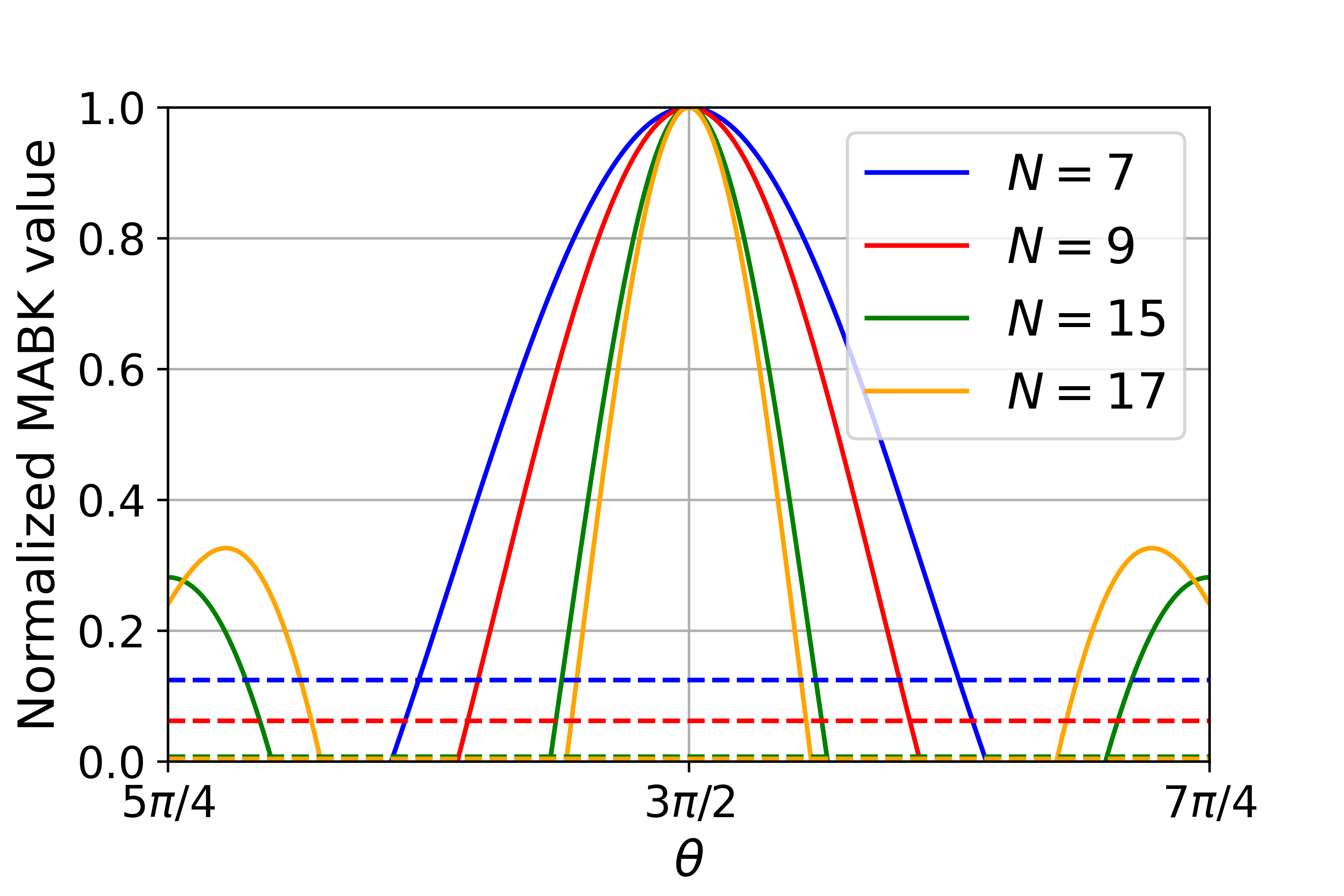}
\centering
\caption{A similar plot to that of Fig.~\ref{fig:merm1} for odd $N$ and second measurement angle of all parties $\theta \in [\pi/4,3\pi/4]$, and $\theta \in [5\pi/4,7\pi/4]$ respectively. All points in this interval, excluding the boundaries, correspond to strategies that can certify $N$ bits of randomness device-independently, using our expanded Bell expressions, or using the MABK inequality for the center points. For $N = 3,7,11,15$, we use the MABK expression given by \cref{eq:MABK}, and for $N = 5,9,13,17$ we use the same expression after relabelling every parties inputs followed by their first measurement's output.}
\label{fig:merm2}
\end{figure}

\subsubsection{Asymptotic behaviour}

We now consider the behaviour of the conjectured maximal MABK value achievable with maximum randomness, $m_{N}^{*}$, for increasingly large even $N$. We show that $m_{N}^{*}$ converges to the largest possible quantum value in this limit. Note this is not based on a conjecture; $m_{N}^{*}$ is an achievable lower bound, and in the following proposition we show that this lower bound tends towards the global upper bound, namely the maximal quantum MABK value.   

\begin{proposition}[Maximum randomness in the asymptotic limit]
In the limit of large even $N$, one can achieve arbitrarily close to the maximum quantum violation of the $N$ party MABK inequality, $2^{(N-1)/2}$, whilst certifying maximum device-independent randomness. \label{thm:asymp}
\end{proposition}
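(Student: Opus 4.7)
The plan is to prove \cref{thm:asymp} by analyzing the ratio $m_N^{*}/2^{(N-1)/2}$ directly, and showing that it tends to $1$ as $N \to \infty$ through even values. Since \cref{lem:RvM1_even}(i) already establishes that every MABK value in $(1,m_N^{*}]$ is achievable while certifying $N$ bits of randomness, convergence of this ratio to $1$ immediately gives the claim.

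First I would substitute $\theta = \theta_N^{*}$ from \cref{prop:theta} into the closed form
\[
\frac{\langle M_N(\theta)\rangle}{2^{(N-1)/2}} = \cos^{N}\!\bigl(\tfrac{\theta}{2}+\tfrac{\pi}{4}\bigr)\sin\!\bigl(\tfrac{N\theta}{2}+\tfrac{\pi}{4}\bigr) + \cos^{N}\!\bigl(\tfrac{\theta}{2}-\tfrac{\pi}{4}\bigr)\sin\!\bigl(\tfrac{N\theta}{2}-\tfrac{\pi}{4}\bigr).
\]
A short computation from \eqref{eq:thetanstar} and \eqref{eq:m} shows that for $N \equiv 2,4 \pmod 8$ one has $\theta_N^{*} = \pi/2 + \varepsilon_N$ with $\varepsilon_N = \pm\pi/(2(N+1))$, and for $N \equiv 6,8 \pmod 8$ one has $\theta_N^{*} = 3\pi/2 + \varepsilon_N$ with $\varepsilon_N$ of the same magnitude. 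In particular $|\varepsilon_N| = O(1/N)$.

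Next I would show that one of the two summands above vanishes exponentially while the other approaches $1$. In the $\theta_N^{*} \to \pi/2$ cases, $\cos\bigl(\theta_N^{*}/2+\pi/4\bigr) = -\sin(\varepsilon_N/2)$, so its $N$-th power is $O((1/N)^N)$ and the first summand is negligible. For the surviving summand, $\cos\bigl(\theta_N^{*}/2-\pi/4\bigr) = \cos(\varepsilon_N/2)$, and
\[
\cos^{N}(\varepsilon_N/2) = \bigl(1 - \tfrac{\varepsilon_N^{2}}{8} + O(\varepsilon_N^{4})\bigr)^{N} = \exp\!\bigl(-\tfrac{N\varepsilon_N^{2}}{8} + o(1)\bigr) \longrightarrow 1,
\]
since $N\varepsilon_N^{2} = O(1/N)$. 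The $\theta_N^{*} \to 3\pi/2$ cases are analogous, with the roles of the two summands swapped; one uses that $N$ is even to absorb the $(-1)^N$ from $\cos^N(\pi + \varepsilon_N/2)$.

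Finally, I would verify that the accompanying sine factor in the surviving term also tends to $1$. This reduces to showing that $N\theta_N^{*}/2 \mp \pi/4 \equiv \pi/2 \pmod{2\pi}$ up to an $O(1/N)$ error. Performing the polynomial division of $N\theta_N^{*}/2 \mp \pi/4$ by $\pi$ in each residue class $N \pmod 8$ yields an expression of the form $\pi k_N/2 + O(1/N)$ with $k_N$ an integer; a direct check using the explicit $t_N$ shows $k_N \equiv 1 \pmod 4$, so the sine equals $\cos(O(1/N)) \to 1$. Combining these three facts gives $m_N^{*}/2^{(N-1)/2} \to 1$, which is the asymptotic statement.

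The main obstacle is the four-way case split on $N \pmod 8$: each case requires separately tracking the parity-dependent sign that arises in the sine's argument and verifying that the dominant term has the correct sign so that the ratio approaches $+1$ rather than $-1$. The underlying analytic estimate, namely $(1-c/N^2)^N \to 1$, is elementary, but the bookkeeping of integer multiples of $\pi/2$ in each residue class is where care is needed.
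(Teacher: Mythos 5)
Your proposal is correct and follows essentially the same route as the paper's proof: both reduce the claim to showing $m_N^*/2^{(N-1)/2}\to 1$, split into the four residue classes of $N$ modulo $8$ according to whether $\theta_N^*\to\pi/2$ or $3\pi/2$, show that one $\cos^N$ factor vanishes while the other tends to $1$, and evaluate the surviving sine factor by tracking the asymptote of $N\theta_N^*/2$ modulo $2\pi$. The only difference is cosmetic bookkeeping (your explicit $\varepsilon_N=\pm\pi/(2(N+1))$ and the $\exp(-N\varepsilon_N^2/8)$ estimate versus the paper's monotone-subsequence limits and oblique asymptotes).
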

\noindent This is proven in \cref{app:asympP}. 

\subsubsection{Nonlocality and maximum randomness}
Whilst we have studied the MABK values achieved by the strategies in \cref{eq:evenStrat}, we also consider how nonlocal they are, quantified by how much the local set needs to be ``diluted'' to contain them. We refer to this measure as the local dilution, which can be computed via a linear program for small $N$, and we give details in \cref{app:LP}. Our findings are presented in \cref{fig:NLvsTH}. 

\begin{figure}[h]
\includegraphics[width=8.4cm]{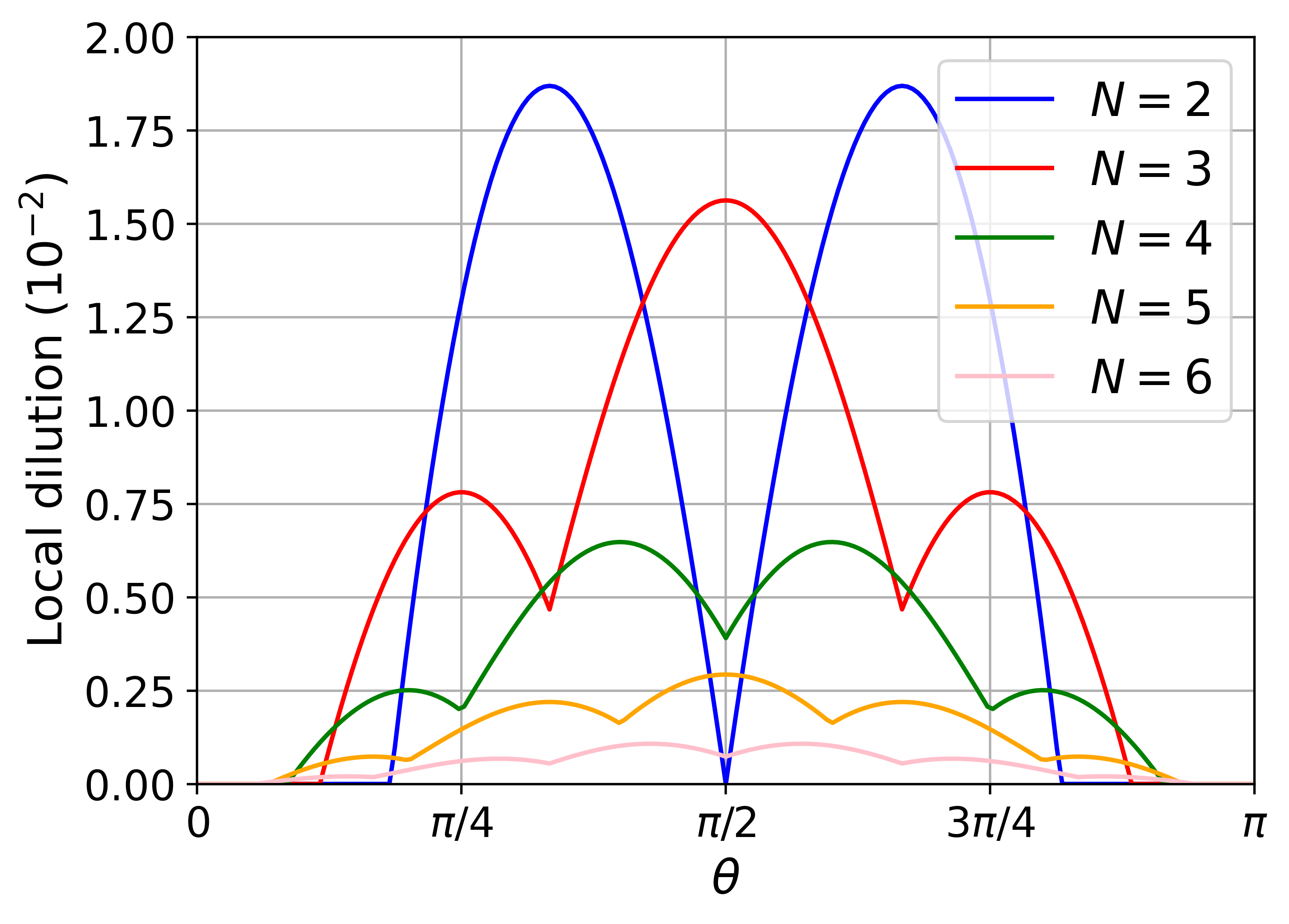}
\centering
\caption{Nonlocality, as measured using local dilution, of the strategies in \cref{eq:evenStrat}, for second measurement angle of all parties $\theta \in [0,\pi]$. All values of $\theta \in (\pi/4,\pi/2) \cup (\pi/2,3\pi/4)$ correspond to strategies which can certify $N$ bits of maximum randomness using our technique for expanding Bell expressions. $\theta = \pi/2$ can also correspond to maximum randomness when $N$ is odd by testing the MABK inequality. }
\label{fig:NLvsTH}
\end{figure}

Interestingly, for $N > 2$ the correlations of the strategy in \cref{eq:evenStrat} are bounded away from the local boundary for any $\theta \in \mathcal{G}$, even for small MABK violation; whilst the violation of a given MABK inequality can be made arbitrarily small, the correlations are still distant from the local boundary. In fact, by comparing \cref{fig:NLvsTH} with \cref{fig:merm1} and \cref{fig:merm2}, one can see how correlations achieving at or below the maximum local MABK value, $\langle M_{N} \rangle \leq 1$, can still be nonlocal and certify maximum randomness. To illustrate this point, consider the $N = 3$ (blue) curve in \cref{fig:merm2}. Provided $\theta \in (\pi/4,\pi/2) \subset \mathcal{G}$, $N$ bits of DI randomness can be certified following previous arguments. However, the blue dashed line in \cref{fig:merm2}, indicating the local bound ($\langle M_{3} \rangle = 1$), intersects the solid curve inside the region $(\pi/4,\pi/2)$, implying that the corresponding strategy in \cref{eq:evenStrat} achieves $\langle M_{3} \rangle \leq 1$ while simultaneously certifying maximum randomness. This is explained by the $N=3$ (red) curve in \cref{fig:NLvsTH}, where it can be seen that in the closed region $[\pi/4,\pi/2]$ the correlations are nonlocal and hence must violate some other Bell inequality. This illustrates the complexity of the tradeoff between randomness versus nonlocality in the multi-partite scenario.  

\section{Trade-off between DI randomness and MABK violation} \label{sec:ex2}
In this section, we present an achievable lower bound on the maximum device-independent randomness that can be generated by correlations achieving any given MABK violation. Moreover, we conjecture this lower bound to be tight based on numerical evidence. This is achieved by introducing a new family of two parameter quantum strategies along with their self-testing Bell expressions. Using this as the seed, we construct a multi-partite Bell expression which certifies their randomness. 

\subsection{Constructing the Bell expression}

\subsubsection{Bipartite self-tests}
We begin by introducing a versatile family of bipartite self-tests which generalize the results in~\cite{WBC}.
\begin{lemma}[$J_{\phi, \theta}$ family of self-tests]\label{lem:n2_2}
Let $(\phi,\theta) \in \mathbb{R}^{2}$ such that $\cos(2\theta)\cos(2\phi)<0$ and $\cos(\theta-\phi)\neq0$\footnote{The reason for this condition is explained in \cref{app:selfTest}.}. Define the family of Bell expressions parameterized by $\phi$ and $\theta$,
\begin{multline}
    \langle J_{\phi,\theta}^{(k,l)} \rangle = \cos 2\theta \cos (\theta - \phi) \langle A_{0}^{(k)}A_{0}^{(l)}\rangle \\- \cos 2\theta \cos 2 \phi \big( \langle A_{0}^{(k)}A_{1}^{(l)} \rangle + \langle A_{1}^{(k)}A_{0}^{(l)} \rangle \big) \\+ \cos 2\phi \cos (\theta - \phi) \langle A_{1}^{(k)}A_{1}^{(l)} \rangle.
\end{multline}
Then we have the following:
\begin{enumerate}[(i)]
    \item The local bounds are given by $\pm\eta_{\phi,\theta}^{\mathrm{L}}$, where 
    \begin{multline}
        \eta_{\phi,\theta}^{\mathrm{L}} = \max \big\{ |\cos(\theta - \phi)\big(\cos 2\theta  - \cos 2\phi\big) | , \\ | \cos(\theta - \phi)\big(\cos 2\theta + \cos 2\phi \big) | + |2\cos 2\phi \cos 2 \theta | \big \}.
    \end{multline}
    \item The quantum bounds are given by $\pm\eta_{\phi,\theta}^{\mathrm{Q}}$, where $\eta_{\phi,\theta}^{\mathrm{Q}} = 2\sin^{2}(\theta+\phi)\sin (\theta - \phi)$.
    \item  $|\eta^{\mathrm{Q}}_{\phi,\theta}| > \eta^{\mathrm{L}}_{\phi,\theta}$.
    \item Up to local isometries, there exists a unique strategy that achieves $\langle J_{\phi,\theta}^{(k,l)} \rangle = \eta_{\phi,\theta}^{\mathrm{Q}}$:
    \begin{equation}
    \begin{aligned}
        \rho_{Q_{k}Q_{l}} &= \ketbra{\psi}{\psi}, \ \mathrm{where} \ \ket{\psi} = \frac{1}{\sqrt{2}}\big( \ket{00} + i\ket{11}\big),  \\
        A_{0}^{(k)} &= A_{0}^{(l)} = \cos (\phi) \, \sigma_{X} - \sin(\phi) \, \sigma_{Y},  \\
        A_{1}^{(k)} &= A_{1}^{(l)} = \cos (\theta)\, \sigma_{X} + \sin(\theta)\, \sigma_{Y}. 
    \end{aligned}
    \end{equation}
\end{enumerate} 
\end{lemma}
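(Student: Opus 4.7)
My plan is to follow the proof strategy used for \cref{lem:n2} (which is Proposition~1 of Ref.~\cite{WBC}) and generalise it from one parameter to two. The four claims fall naturally into two groups: parts~(i) and (iii) are combinatorial and trigonometric, while parts~(ii) and (iv) both flow from a single sum-of-squares (SOS) decomposition of the shifted Bell operator $\bar J_{\phi,\theta} = \eta^{\mathrm Q}_{\phi,\theta}\mathbb{I} - J^{(k,l)}_{\phi,\theta}$.

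For part~(i), the local polytope's vertices are deterministic strategies with $a_x,b_y\in\{\pm 1\}$, so writing $\alpha=\cos 2\theta\cos(\theta-\phi)$, $\beta=\cos 2\theta\cos 2\phi$, $\gamma=\cos 2\phi\cos(\theta-\phi)$, I would rewrite
\begin{equation}
J^{(k,l)}_{\phi,\theta} = b_0\bigl(\alpha a_0 - \beta a_1\bigr) + b_1\bigl(-\beta a_0 + \gamma a_1\bigr),
\end{equation}
maximise first over $b_0,b_1$ to obtain $|\alpha a_0-\beta a_1|+|\beta a_0-\gamma a_1|$, and then split into the two subcases $a_0=a_1$ and $a_0=-a_1$, yielding $|\alpha-\gamma|$ and $|\alpha+\gamma|+2|\beta|$ respectively. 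Factoring out the common $\cos(\theta-\phi)$ and observing $|2\beta|=|2\cos 2\theta\cos 2\phi|$ produces the stated $\eta^{\mathrm L}_{\phi,\theta}$, with the $\pm$ accounted for by the overall sign symmetry of deterministic strategies. For part~(iii), the product-to-sum identities $\cos 2\theta - \cos 2\phi = -2\sin(\theta+\phi)\sin(\theta-\phi)$ and $\cos 2\theta + \cos 2\phi = 2\cos(\theta+\phi)\cos(\theta-\phi)$ recast both branches of $\eta^{\mathrm L}_{\phi,\theta}$ into the same trigonometric language as $\eta^{\mathrm Q}_{\phi,\theta}=2\sin^2(\theta+\phi)\sin(\theta-\phi)$, and the inequality $|\eta^{\mathrm Q}_{\phi,\theta}|>\eta^{\mathrm L}_{\phi,\theta}$ reduces to a short trigonometric check under the hypothesis $\cos 2\theta\cos 2\phi<0$.

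For parts~(ii) and (iv), I would construct an explicit SOS decomposition $\bar J_{\phi,\theta} = \sum_i M_i^{\dagger} M_i$ with each $M_i$ a real linear combination of $A_0,A_1,B_0,B_1$. The ideal strategy motivates the ansatz: by sum-to-product, the ideal $A_0^{*}+A_1^{*}$ and $A_1^{*}-A_0^{*}$ are proportional to spin observables along two orthogonal directions in the $XY$-plane, with scalar weights $2\cos((\theta+\phi)/2)$ and $2\sin((\theta+\phi)/2)$ respectively; one then verifies that suitable linear combinations of $A_0\pm A_1$ and $B_0, B_1$ annihilate $\ket{\psi}$, and that their squares, summed with appropriate positive weights, rearrange to $\bar J_{\phi,\theta}$ after simplification using the identities of the previous paragraph. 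Part~(ii) follows immediately from the manifest positivity. For part~(iv), each summand contributes zero to the expectation on any optimal strategy, forcing $M_i\ket{\Psi}=0$ for all $i$; applying Jordan's lemma separately to $\{A_0,A_1\}$ and $\{B_0,B_1\}$ decomposes each side into qubit blocks, and within each block the SOS relations pin the observables to the target $A_0^{*},A_1^{*}$ (and $B_0^{*},B_1^{*}$) up to local unitaries and the reduced state to $\ket\psi$ up to a local phase. A standard swap-isometry construction then supplies the local isometry $V$ required by \cref{def:selfTest}.

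The principal obstacle is identifying the correct SOS in~(ii): its structure is less symmetric than in the one-parameter case of \cref{lem:n2}, and the exclusion $\cos(\theta-\phi)\neq 0$ is precisely what keeps one of the key annihilator operators from degenerating (cf.\ the comment pointing to \cref{app:selfTest}). Once the SOS is in hand, both~(ii) and~(iv) follow by routine expansion and the standard qubit-block reduction familiar from bipartite self-testing.
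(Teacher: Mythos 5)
Your route is genuinely different from the paper's. The paper disposes of this lemma in a few lines by reduction to an already-published self-testing result: setting $\alpha=\theta+\phi$, $\beta=-2\phi$, $\gamma=\theta-\phi$, the expression $J_{\phi,\theta}$ coincides with the family $B_{\alpha,\beta,\gamma}$ of~\cite{WBC3} up to an overall factor $\cos(\theta-\phi)$, and the hypothesis $\cos 2\theta\cos 2\phi<0$, $\cos(\theta-\phi)\neq0$ is exactly the condition $\cos(\alpha+\beta)\cos\beta\cos(\alpha+\gamma)\cos\gamma<0$ under which that reference establishes the local and quantum bounds, their strict separation, and the self-test; all four parts are then inherited. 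You instead rebuild everything from scratch in the style of the one-parameter case. That is viable --- it is essentially what the cited works do internally --- and it buys a self-contained argument, but the hardest step is left as an ansatz: you never exhibit the SOS decomposition of $\bar J_{\phi,\theta}$, and parts (ii) and (iv) \emph{are} the statement that such a decomposition exists and is rigid. Until the polynomials $M_i$ are written down and verified to sum to $\bar J_{\phi,\theta}$, those parts are not proved; this is where all the real work of the lemma lives.

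Two smaller points. In part (i), maximising over $b_0,b_1$ and then setting $a_0=a_1$ gives $|\alpha-\beta|+|\beta-\gamma|=\max\{|\alpha-\gamma|,\,|\alpha+\gamma-2\beta|\}$, while $a_0=-a_1$ gives $|\alpha+\beta|+|\beta+\gamma|=\max\{|\alpha-\gamma|,\,|\alpha+\gamma+2\beta|\}$; neither subcase alone produces the branch you assign to it, and only the maximum over both collapses to the stated $\max\{|\alpha-\gamma|,\,|\alpha+\gamma|+2|\beta|\}$, so the bookkeeping needs correcting even though the end result is right. In part (iv), an SOS/Jordan argument cannot distinguish a strategy from its complex conjugate; the uniqueness claim survives here only because the conjugate strategy (state $(\ket{00}-i\ket{11})/\sqrt2$ with the same $XY$-plane measurements) attains $-\eta^{\mathrm{Q}}_{\phi,\theta}$ rather than $+\eta^{\mathrm{Q}}_{\phi,\theta}$, a point your sketch should make explicit.
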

Note that when $\phi = 0$ \cref{lem:n2_2} reduces to \cref{lem:n2}. Moreover, this Bell expression retains the same symmetry properties of the $I_{\theta}$ family, namely that $\langle J_{\phi,\theta} \rangle = - \eta^{\mathrm{Q}}_{\phi,\theta}$ for the state $\ket{\psi'}$ and the same measurements, and is hence a self-test. \cref{lem:n2_2} can be obtained as a corollary of the self-testing results from Refs.~\cite{Le2023,barizien2024,WBC3} (see \cref{app:selfTest} for details).

For future convenience, we define the set $ \mathcal{F} =  \Big[ (-\pi/4,\pi/4) \times \mathcal{G} \Big] \cup \Big[ (-\pi/4,\pi/4) \setminus \{0\} \times \{\pi/2,3\pi/2\}  \Big] \subset \mathbb{R}^{2}$. One can verify that points $(\phi,\theta) \in \mathcal{F}$ satisfy $\cos(2\theta)\cos(2\phi) < 0$ and $\cos(\theta - \phi) \neq 0$, and therefore define a valid self-test according to \cref{lem:n2_2}. 

\subsubsection{Target strategy} 
For the $N$-partite case, we will consider the following strategy:
\begin{equation}
\begin{gathered}
    \rho_{\bm{Q}} = \ketbra{\psi_{\mathrm{GHZ}}}{\psi_{\mathrm{GHZ}}},  \\
    A_{0}^{(k)} = \cos \phi \, \sigma_{X} - \sin \phi \, \sigma_{Y},  \\
    A_{1}^{(k)} = \cos \theta \, \sigma_{X} + \sin \theta \, \sigma_{Y}, \ k \in \{1,...,N\}.
\end{gathered}\label{eq:evenStrat_2}
\end{equation}
Using the fact that $\langle A_{\bm{x}} \rangle = \sin( n\theta - (N-n)\phi)$, where $n = \sum_{k}x_{k}$, the MABK value, defined in \cref{eq:MABK}, of the above strategy is given by 
\begin{multline}
    \langle M_{N}(\phi,\theta) \rangle = 2^{\frac{N-1}{2}}\Big(  \cos^{N}\big[(\theta + \phi)/2 + \pi/4\big] \\ \cdot \sin \big[N(\theta-\phi)/2 + \pi/4 \big] \\ + \cos^{N}\big[(\theta+\phi)/2 - \pi/4\big] \sin \big[N(\theta-\phi)/2 - \pi/4 \big] \Big). \label{eq:MABKval2}
\end{multline}

\subsubsection{Constructing an $N$-partite Bell inequality}
Using the previous two building blocks, we construct the following Bell inequality using the $J_{\phi,\theta}$ expressions as the seed. 
\begin{lemma}
Let $\bm{\mu}$ be a tuple of $N-2$ measurement outcomes for all parties excluding $k$, $l$, and $n_{\bm{\mu}} \in \{0,1\}$ be the parity of $\bm{\mu}$. Let $(\phi,\theta) \in \mathbb{R}^{2}$,
\begin{equation}
    \phi':= \frac{\phi N}{2}, \ \theta' := \theta - \frac{N-2}{2}\phi, \label{eq:defphithp}
\end{equation}and 
\begin{equation}
    I_{\bm{\mu}}^{(k,l)} = (-1)^{n_{\bm{\mu}}} J_{\phi',\theta'}^{(k,l)}.
\end{equation}
Define the following Bell polynomial
\begin{equation}
    I_{\phi',\theta'} := \sum_{k = 1}^{N-1} \Bigg( \sum_{\bm{\mu}} \tilde{P}^{\overline{(k,N)}}_{\bm{\mu}|\bm{0}}I_{\bm{\mu}}^{(k,N)}\Bigg).  \label{eq:genGam}
\end{equation}
If $(\phi',\theta') \in \mathcal{F}$, $I_{\phi',\theta'}$ is an expanded Bell expression, and has quantum bounds $\pm\eta_{N,\phi',\theta'}^{\mathrm{Q}}$ where $\eta_{N,\phi',\theta'}^{\mathrm{Q}} = 2(N-1)\sin^{2}(\theta'+\phi')\sin (\theta'-\phi')$, which can be achieved up to relabellings by the strategy
\begin{equation}
\begin{gathered}
    \rho_{\bm{Q}} = \ketbra{\psi_{\mathrm{GHZ}}}{\psi_{\mathrm{GHZ}}},  \\
    A_{0}^{(k)} = \cos \phi \, \sigma_{X} - \sin \phi \, \sigma_{Y},  \\
    A_{1}^{(k)} = \cos \theta \, \sigma_{X} + \sin \theta \, \sigma_{Y}, \ k \in \{1,...,N\}.
\end{gathered} 
\end{equation} 
In addition, this quantum bound cannot be achieved classically.  \label{lem:NevenBI_2}
\end{lemma}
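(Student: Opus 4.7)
The plan is to follow the template established by the proof of \cref{lem:NevenBI}, with $J_{\phi',\theta'}$ now playing the role previously held by $I_{\theta}$. I would proceed in three steps: verifying that the construction defines a legitimate expanded Bell expression, deriving the quantum upper bound from \cref{lem:conBI}, and showing that the stated target strategy saturates this bound.

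First, each seed $I_{\bm{\mu}}^{(k,N)} = (-1)^{n_{\bm{\mu}}}J_{\phi',\theta'}^{(k,N)}$ is a bipartite Bell expression between parties $k$ and $N$, and different $\bm{\mu}$'s differ only by a relabeling of one party's outputs, implicit in the $(-1)^{n_{\bm{\mu}}}$ prefactor. The hypothesis $(\phi',\theta')\in\mathcal{F}$ ensures via \cref{lem:n2_2}(iii) that each seed has a strictly positive gap between its local and quantum bounds, so \cref{def:expandedBI} is satisfied. Applying \cref{lem:conBI} with $c_{k,N}=1$ for $k<N$ and zero otherwise then gives the upper bound $\langle I_{\phi',\theta'}\rangle \leq (N-1)\eta^{\mathrm{Q}}_{\phi',\theta'} = \eta^{\mathrm{Q}}_{N,\phi',\theta'}$. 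Classical unattainability is inherited from the same gap: bounding each conditioned seed by $\eta^{\mathrm{L}}_{\phi',\theta'}$ yields a classical maximum strictly less than $\eta^{\mathrm{Q}}_{N,\phi',\theta'}$, mirroring the analogous step for \cref{lem:NevenBI}.

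The central task is exhibiting saturation by the stated strategy. By permutation symmetry of the GHZ state and measurements, it suffices to analyse the contribution from a single pair $(k,N)$. A direct calculation in the $A_{0}$-eigenbasis shows that when the $N-2$ parties outside $\{k,N\}$ all measure $A_{0}=\cos\phi\,\sigma_X-\sin\phi\,\sigma_Y$ and obtain outcomes $\bm{\mu}$, the post-measurement state on parties $k,N$ is, up to normalisation, $|00\rangle + i(-1)^{n_{\bm{\mu}}}e^{i(N-2)\phi}|11\rangle$. For even-parity $\bm{\mu}$, this state equals $(U_{\beta}\otimes U_{\beta})|\psi\rangle$ up to a global phase, where $U_{\beta}=e^{-i\beta\sigma_Z/2}$, $\beta=(N-2)\phi/2$, and $|\psi\rangle=(|00\rangle+i|11\rangle)/\sqrt{2}$ is the self-test target of $J_{\phi',\theta'}$. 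Pulling these diagonal rotations through the two-qubit Bell operator shifts the angles of $A_{0}$ and $A_{1}$ to $\phi+\beta$ and $\theta-\beta$ respectively, and the definitions in \cref{eq:defphithp} are engineered precisely so that these shifted angles coincide with $(\phi',\theta')$, producing $\eta^{\mathrm{Q}}_{\phi',\theta'}$ via \cref{lem:n2_2}(iv). For odd-parity $\bm{\mu}$, the post-measurement state differs from the even-parity one by $\sigma_Z$ on one qubit; since $\sigma_Z A_{a} \sigma_Z = -A_{a}$ for both $a=0,1$, this conjugation negates every correlator in $J_{\phi',\theta'}^{(k,N)}$ and hence the whole Bell expression, giving $-\eta^{\mathrm{Q}}_{\phi',\theta'}$, which is in turn cancelled by the $(-1)^{n_{\bm{\mu}}}$ prefactor in the seed. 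Each conditioned term thus contributes $+\eta^{\mathrm{Q}}_{\phi',\theta'}$; summing over $\bm{\mu}$ weighted by $p(\bm{\mu}|\bm{0})$ (which sum to unity) and then over the $N-1$ pairs yields the claimed value $\eta^{\mathrm{Q}}_{N,\phi',\theta'}$.

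I expect the main obstacle to lie in the bookkeeping of this last step: the relations \cref{eq:defphithp} and the $(-1)^{n_{\bm{\mu}}}$ prefactor are delicately tuned to absorb both the $(N-2)$-body phase accumulated during projection and the parity of the projection outcomes, so careful sign tracking is essential to confirm that all $(N-1)\cdot 2^{N-2}$ conditioned terms contribute constructively.
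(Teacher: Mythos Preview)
Your proposal is correct and follows essentially the same approach as the paper's proof: verify that $(\phi',\theta')\in\mathcal{F}$ makes the seed a valid self-test with a local-quantum gap, invoke \cref{lem:conBI} for the upper bound, compute the post-projection state on parties $k,N$ as $(\ket{00}+i(-1)^{n_{\bm{\mu}}}e^{i(N-2)\phi}\ket{11})/\sqrt{2}$, and absorb the accumulated phase into a diagonal unitary $U_\phi$ that shifts the measurement angles to exactly $(\phi',\theta')$. The only cosmetic difference is that the paper treats both parities at once by observing that $\ket{\Phi_{\pm}}$ are the extremal eigenvectors of the conjugated operator $J'=(U_\phi\otimes U_\phi)^\dagger J_{\phi',\theta'}(U_\phi\otimes U_\phi)$, whereas you handle the odd-parity branch separately via the $\sigma_Z$-conjugation identity $\sigma_Z A_a\sigma_Z=-A_a$; the two are equivalent.
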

\noindent The proof is given in \cref{app:lem8proof}. Note that the Bell expressions~\eqref{eq:genGam} are written in terms of the shifted parameters $(\phi',\theta')$, instead of the measurement angles $(\phi,\theta)$. This is because the $N-2$ parties performing the projection no longer use $\sigma_{X}$, but use $\cos (\phi) \, \sigma_{X} - \sin (\phi) \, \sigma_{Y}$. This accumulates a phase factor on the state of the remaining parties, $(\ket{00} + i(-1)^{n_{\bm{\mu}}}e^{i(N-2)\phi}\ket{11})/\sqrt{2} =: \ket{\Phi_{\bm{\mu},\phi}}$, which is equivalent to the action of some local unitary $U$ on $(\ket{00}+i(-1)^{n_{\bm{\mu}}}\ket{11})/\sqrt{2}$. To correct for this, we use the Bell expression $J_{\phi',\theta'}^{(k,l)}$ as the seed. See \cref{app:lem8proof} for the full details.    

\subsection{Randomness versus MABK value}

Using the previously derived Bell inequality, we consider the trade-off between maximum device-independent randomness and MABK violation. Let us define
\begin{equation}
    \theta(\phi) = \frac{N-1}{N+1}\phi + \theta^{*}_{N}, \label{eq:thPhi}
\end{equation}
where $\theta_N^*$ is defined in \cref{eq:thetanstar} and let 
\begin{equation}
    \phi_{N}^{*} = \mathrm{sgn}[\sin(2\theta_{N}^{*})]\frac{\pi}{4N}. \label{eq:thPhi_2}
\end{equation}
For the strategy in \cref{eq:evenStrat_2}, direct calculation yields
\begin{equation}
    \langle M_{N}(\phi_{N}^{*},\theta(\phi_{N}^{*})) \rangle = 2^{(N-1)/2},
\end{equation}
and $\langle M_{N}(0,\theta(0)) \rangle = m_{N}^{*}$. Hence, by varying $\phi \in [0,\phi_{N}^{*}]$\footnote{To ease notation, we write $[0,\phi_{N}^{*}]$ when $\phi_{N}^{*} > 0$ and $\phi_{N}^{*} <0$, interpreting the latter case as $[\phi_{N}^{*},0]$.} and choosing $\theta = \theta(\phi)$, one can obtain the desired range of MABK values $[m_{N}^{*},2^{(N-1)/2}]$, from the conjectured maximum MABK value with maximum randomness to the maximum quantum value. Choosing this parameterization, the raw randomness, $H(\bm{R}|\bm{X}=\bm{x}^{*}) = H(\{p_{\phi}(\bm{a}|\bm{0})\})$, of the strategy in \cref{eq:evenStrat_2} (with $\theta = \theta(\phi)$), as a function of $\phi$, is given by
\begin{equation}
    H(\{p_{\phi}(\bm{a}|\bm{0})\}) \equiv r(\phi) = N - 1 + H_{\mathrm{bin}}\Big[ \frac{1 - \sin N \phi}{2} \Big], \label{eq:rphi}
\end{equation}
where $H_{\mathrm{bin}}$ is the binary entropy function and $r(\phi)$ is a smooth, monotonically decreasing function of $\phi$ in the range $[0,\phi_{N}^{*}]$. 

As shown in \cref{sec:ex1} (by combining \cref{lem:NevenBI} with \cref{prop:theta}), when $\phi=0$ we can certify maximum DI randomness. It also follows from the self-testing properties of the MABK family that when $\phi = \phi^{*}_{N}$ we obtain $N-1 + 3/2 - \log_{2}(1+\sqrt{2})/\sqrt{2} \approx N-0.4$ bits of global DI randomness when all parties use measurement 0. What remains then, is to apply the new Bell expression constructed in \cref{lem:NevenBI_2} to certify $r(\phi)$ bits of DI randomness for $\phi \in (0,\phi^{*}_{N})$. To do so, the following proposition shows that for every $\phi \in [0,\phi_{N}^{*}]$, there exists a valid Bell expression given by \cref{lem:NevenBI_2}. 
\begin{proposition}
    Let $\phi \in [0,\phi_{N}^{*}]$, and $\theta = \theta(\phi)$ as defined in \cref{eq:thPhi,eq:thPhi_2}. Then $(\phi',\theta') \in \mathcal{F}$, where $\phi'$ and $\theta'$ are defined in \cref{lem:NevenBI_2}.
 \label{prop:phi}
\end{proposition}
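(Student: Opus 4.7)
The plan is to show $(\phi', \theta') \in \mathcal{F}$ by establishing the stronger inclusion $(\phi', \theta') \in (-\pi/4, \pi/4) \times \mathcal{G}$, which is contained in the first constituent of the union defining $\mathcal{F}$; this bypasses the boundary case $\theta' \in \{\pi/2, 3\pi/2\}$ altogether. The $\phi'$ component is immediate: since $\phi' = N\phi/2$ and $|\phi| \leq |\phi_N^*| = \pi/(4N)$, we get $|\phi'| \leq \pi/8 < \pi/4$.

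For the $\theta'$ component, substituting $\theta = \theta(\phi)$ from \cref{eq:thPhi} into $\theta' = \theta - (N-2)\phi/2$ and collecting terms yields the affine expression
\begin{equation*}
    \theta'(\phi) = \theta_N^* + \frac{N(3-N)}{2(N+1)}\,\phi .
\end{equation*}
This gives $\theta'(0) = \theta_N^*$, which belongs to $\mathcal{G}$ by \cref{prop:theta}. The crux is then to show that the other endpoint $\theta'(\phi_N^*)$ lies in the same connected component of $\mathcal{G}$. Because $\theta'(\phi)$ is affine in $\phi$ and each connected component of $\mathcal{G}$ is a convex open interval, this would force the entire segment $\{\theta'(\phi):\phi\in[0,\phi_N^*]\}$ to sit inside $\mathcal{G}$, finishing the argument.

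To evaluate $\theta'(\phi_N^*)$ I would case-split on $N \bmod 8$. Using \cref{eq:m}, the value $\theta_N^*$ rewrites cleanly as $\pi/2 \pm \pi/(2(N+1))$ when $N \equiv 2,4 \pmod 8$ and as $3\pi/2 \pm \pi/(2(N+1))$ when $N \equiv 6,0 \pmod 8$, with explicit signs per residue. This makes the sign $\mathrm{sgn}[\sin(2\theta_N^*)]$ appearing in \cref{eq:thPhi_2} trivial to read off, since $2\theta_N^* = \pi \pm \pi/(N+1)$ or $3\pi \pm \pi/(N+1)$, so $\sin(2\theta_N^*) = \mp \sin(\pi/(N+1))$. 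Plugging the resulting value of $\phi_N^*$ into the affine formula, the $(N+1)$-dependent pieces in $\theta_N^*$ and in the shift cancel exactly, collapsing to the clean closed form $\theta'(\phi_N^*) \in \{3\pi/8,\,5\pi/8,\,11\pi/8,\,13\pi/8\}$ according to the residue class of $N$. Each of these four values is the midpoint of one of the four intervals constituting $\mathcal{G}$.

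A final direct comparison confirms that in each residue class $\theta_N^*$ and the corresponding midpoint lie in the same interval of $\mathcal{G}$: e.g.\ for $N \equiv 2 \pmod 8$ we have $\theta_N^*$ slightly above $\pi/2$ and $\theta'(\phi_N^*) = 5\pi/8$, both inside $(\pi/2, 3\pi/4)$; the three remaining residues are handled by analogous checks. Convexity of the relevant interval then gives $\theta'(\phi) \in \mathcal{G}$ for every $\phi$ in the range of interest, and combining with the bound on $\phi'$ completes the proof. The only technically nontrivial step is the case-split evaluation of $\theta'(\phi_N^*)$; the cancellations within each case are clean but must be carried out once per residue.
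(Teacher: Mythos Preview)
Your proposal is correct and follows essentially the same route as the paper's proof: bound $|\phi'|\le\pi/8$, express $\theta'$ as an affine function of $\phi$, evaluate it at the two endpoints $\phi=0$ and $\phi=\phi_N^*$ via a case split on $N\bmod 8$ (obtaining exactly the values $3\pi/8,\,5\pi/8,\,11\pi/8,\,13\pi/8$ the paper gets), and conclude by convexity that the whole segment stays in the relevant subinterval of $\mathcal{G}$. Your rewriting $\theta_N^*=\pi/2\pm\pi/(2(N+1))$ or $3\pi/2\pm\pi/(2(N+1))$ is a slightly cleaner way to organise the cancellation than the paper's direct substitution, but the argument is otherwise identical.
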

This is proven in \cref{app:prop2}. \cref{prop:phi} implies that, for every $\phi$ in the range we are interested in, the bipartite expression $(-1)^{n_{\bm{\mu}}}J_{\phi',\theta'}$ is a valid self-test of the strategy
\begin{equation}
\begin{aligned}
        \rho_{Q_{k}Q_{l}} &= \ketbra{\Phi_{\bm{\mu}}}{\Phi_{\bm{\mu}}},  \\
        A_{0}^{(k)} &= A_{0}^{(l)} = \cos \phi' \, \sigma_{X} - \sin \phi' \, \sigma_{Y},  \\
        A_{1}^{(k)} &= A_{1}^{(l)} = \cos \theta'\, \sigma_{X} + \sin \theta'\,\sigma_{Y}. 
\end{aligned}
\end{equation}
After applying the local unitary $U_{\phi} \otimes U_{\phi}$, where $U_{\phi} = \ketbra{0}{0} + e^{i(N-2)\phi/2}\ketbra{1}{1}$, this is equivalent to
\begin{equation}
\begin{aligned}
        \rho_{Q_{k}Q_{l}} &= \ketbra{\Phi_{\bm{\mu},\phi}}{\Phi_{\bm{\mu},\phi}},   \\
        A_{0}^{(k)} &= A_{0}^{(l)} = \cos \phi \, \sigma_{X} - \sin \phi \, \sigma_{Y} , \\
        A_{1}^{(k)} &= A_{1}^{(l)} = \cos \theta\, \sigma_{X} + \sin \theta\, \sigma_{Y}, 
\end{aligned}
\end{equation}
which is exactly the bipartite strategy we wanted to self-test, since it is the one held by parties $k$ and $N$ after the projector $P_{\bm{\mu}|\bm{0}}^{\overline{(k,N)}}$ is applied to the global state $\ket{\psi_{\mathrm{GHZ}}}$. As a result, the correlations generated by the strategy in \cref{eq:evenStrat_2}, by choosing $\theta = \theta(\phi)$ and varying $\phi \in [0,\phi_{N}^{*}]$, maximally violate the Bell inequality $J_{\phi',\theta'}$ with $\phi'$ and $\theta'$ given in \cref{eq:defphithp}. We can therefore employ the decoupling lemma to make the rate unconditioned on Eve, $r(\phi)$, device-independent.
\begin{proposition}
    Achieving the maximum quantum value of the Bell inequality in \cref{lem:NevenBI_2} certifies $r(\phi)$ bits of randomness, i.e., 
    \begin{equation}
        R_{I_{\phi',\theta'}}(\eta^{\mathrm{Q}}_{N,\phi',\theta'}) = r(\phi).
    \end{equation}
\end{proposition}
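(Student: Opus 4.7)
The plan is to apply the decoupling lemma (\cref{lem:BIent}) and the rate lemma (\cref{lem:rate}) to the expanded Bell expression $I_{\phi',\theta'}$ of \cref{lem:NevenBI_2}, and then verify by direct calculation that the Shannon entropy of the input-$\bm{0}$ output distribution of the target strategy in \cref{eq:evenStrat_2} equals $r(\phi)$ as given in \cref{eq:rphi}.

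First I would check the hypotheses of \cref{lem:BIent}. By \cref{prop:phi}, for $\phi \in [0,\phi_N^*]$ and $\theta = \theta(\phi)$ one has $(\phi',\theta') \in \mathcal{F}$, so \cref{lem:n2_2} ensures that each seed $I_{\bm{\mu}}^{(k,N)} = (-1)^{n_{\bm{\mu}}}J_{\phi',\theta'}^{(k,N)}$ is a bipartite self-test of a maximally entangled two-qubit state along with rank-one projective measurements $P_{a_k|0}^{(k)}$ and $P_{a_N|0}^{(N)}$ whose eigenvectors are those of $\cos\phi'\,\sigma_X - \sin\phi'\,\sigma_Y$. A direct computation gives $\bra{\Phi}P_{a_k|0}^{(k)} \otimes P_{a_N|0}^{(N)}\ket{\Phi} = (1 \pm \sin 2\phi')/4$, which is strictly positive for all $a_k,a_N$ since $(\phi',\theta') \in \mathcal{F}$ forces $|\sin 2\phi'| < 1$. \cref{lem:BIent} therefore yields the decoupling $\rho_{\bm{R}E|\bm{0}} = \rho_{\bm{R}|\bm{0}} \otimes \rho_E$ for any strategy saturating $\langle I_{\phi',\theta'} \rangle = \eta^{\mathrm{Q}}_{N,\phi',\theta'}$.

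Next, \cref{lem:rate} collapses the conditional von Neumann entropy to the Shannon entropy of the output distribution, $R_{I_{\phi',\theta'}}(\eta^{\mathrm{Q}}_{N,\phi',\theta'}) = H(\{p(\bm{a}|\bm{0})\})$. To evaluate this I would compute $p_\phi(\bm{a}|\bm{0})$ directly on the target strategy in \cref{eq:evenStrat_2}. Writing the eigenvectors of $A_0^{(k)}$ as $\ket{v_{a_k}} = (\ket{0} + (-1)^{a_k}e^{-i\phi}\ket{1})/\sqrt{2}$ and overlapping with $\ket{\psi_{\mathrm{GHZ}}}$ yields $p_\phi(\bm{a}|\bm{0}) = (1 - (-1)^{\sum_k a_k}\sin N\phi)/2^N$. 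Because this depends only on the parity of the Hamming weight of $\bm{a}$ and each parity class contains $2^{N-1}$ strings, collecting equal terms gives $H(\{p_\phi(\bm{a}|\bm{0})\}) = N - 1 + H_{\mathrm{bin}}[(1 - \sin N\phi)/2] = r(\phi)$.

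The main obstacle is confirming that the pieces of the expanded construction fit together consistently so that \cref{lem:BIent} can actually be invoked: the outcome-parity sign $(-1)^{n_{\bm{\mu}}}$ must compensate the relative phase that $\tilde{P}_{\bm{\mu}|\bm{0}}^{\overline{(k,N)}}$ imprints on the post-projection state, and the shift from $(\phi,\theta)$ to the dressed parameters $(\phi',\theta')$ defined in \cref{eq:defphithp} must absorb the extra $e^{i(N-2)\phi}$ phase introduced by the non-Pauli measurement of the $N-2$ projecting parties, so that in every branch the induced bipartite state really is the one self-tested by $J_{\phi',\theta'}^{(k,N)}$. Once this bookkeeping is in place the randomness bound follows immediately from the decoupling and rate lemmas together with the elementary entropy calculation above.
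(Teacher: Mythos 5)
Your proposal is correct and follows essentially the same route as the paper, which proves this proposition by combining \cref{prop:phi} (to ensure $(\phi',\theta')\in\mathcal{F}$ so that the seed is a valid self-test), the decoupling lemma \cref{lem:BIent} together with \cref{lem:rate}, and the direct evaluation of $H(\{p_\phi(\bm{a}|\bm{0})\})$ giving \cref{eq:rphi}. Your explicit verification of the positivity condition $\bra{\Phi}P^{(k)}_{a_k|0}\otimes P^{(N)}_{a_N|0}\ket{\Phi}=(1\pm\sin 2\phi')/4>0$ and your identification of the phase bookkeeping handled in \cref{app:lem8proof} are both consistent with the paper's argument.
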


We have established that, for every MABK value $s \in [m_{N}^{*} , 2^{(N-1)/2}]$, there exists a $\phi_{s} \in [0,\phi_{N}^{*}]$ which defines a quantum strategy achieving $s$, for which its generated randomness $r(\phi_{s})$ is device-independent. We have therefore related every MABK value with a DI rate. We now conjecture that the curve $(s,r(\phi_{s}))$ is optimal in terms of raw randomness, which can then be made device-independent following the above discussion --- see \cref{lem:RvM2}.   

\begin{conjecture}
For $N$ even, the maximum randomness unconditioned on Eve, $r$, that can be generated by quantum strategies achieving an MABK value, $s$, is given by  
\begin{equation}
    r(s) = \begin{cases}
        N, \ s \in (1,m_{N}^{*}], \\
        r(\phi_{s}), \ s \in (m_{N}^{*} ,2^{(N-1)/2} ],
        \end{cases}\label{eq:maxRand}
\end{equation}
where $r(\phi)$ is defined in \cref{eq:rphi}, and 
\begin{equation}
    \phi_{s}\!=\! \mathrm{arg \, min} \big\{ |\phi| : \phi \in [0,\phi_{N}^{*}], \langle M_{N}(\phi,\theta(\phi))\rangle = s \big\}.
\end{equation}
\label{conj:maxMerm_2}
\end{conjecture}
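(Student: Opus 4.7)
The conjecture asserts that the curve derived from the two-parameter family in \cref{eq:evenStrat_2} is globally optimal. Achievability along this curve is already established via \cref{lem:NevenBI_2}, so what remains is the matching upper bound: no quantum strategy with $\langle M_N \rangle = s$ can yield raw randomness greater than $r(s)$.

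My plan is to combine a dimension reduction with a symmetry argument. First, apply Jordan's lemma independently to each party. Since each party employs two binary projective measurements, the pair $(\tilde{A}_0^{(k)}, \tilde{A}_1^{(k)})$ admits a simultaneous block diagonalization into $2\times 2$ blocks, so any strategy decomposes as a convex mixture of $N$-qubit strategies; both $\langle M_N \rangle$ and $p(\bm{a}|\bm{0})$ are affine in the state, so this reduction preserves the feasible set. Second, exploit the symmetry of the MABK functional and of the uniform entropy under party permutations: twirling over $S_N$ leaves $\langle M_N \rangle$ invariant while weakly increasing $H(\{p(\bm{a}|\bm{0})\})$ by concavity, so the supremum is attained on permutation-symmetric qubit strategies. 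The aim is then to show that such strategies, when maximizing entropy at fixed MABK, reduce to the family in \cref{eq:evenStrat_2} up to local isometries.

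Once inside this family, stationarity conditions for maximizing the entropy at fixed MABK value should yield the one-parameter curve $\theta = \theta(\phi)$ of \cref{eq:thPhi}, producing the randomness $r(\phi)$ of \cref{eq:rphi} at MABK value $\langle M_N(\phi,\theta(\phi))\rangle$, which becomes a bounded calculus exercise in two variables. To rule out that convex mixtures of qubit strategies with different MABK values produce a higher entropy at some intermediate $s$, I would additionally verify concavity of the conjectured curve $r(s)$ on $(m_N^*, 2^{(N-1)/2}]$, so that any chord lies weakly below it.

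The hard step will be the symmetry-based reduction to the two-parameter family. The analogous rigidity statement in self-testing fixes the state and measurements only at the maximum of a single Bell functional; here I need structural information about strategies that are not at such a maximum but merely achieve a prescribed MABK value and a prescribed entropy. A fully analytic reduction likely requires either a tailored sum-of-squares certificate for the combined constraints, or a parameterization of permutation-symmetric qubit strategies (for example via spin coherent state decompositions) rich enough to encompass the putative optimum yet still tractable for calculus-based optimization. Failing that, SDP hierarchies for the conditional von Neumann entropy could supply rigorous numerical upper bounds matching $r(s)$ for small $N$, confirming the conjecture case by case and lending further credence to it in general.
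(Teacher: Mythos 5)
First, note that this statement is a \emph{conjecture} in the paper: no complete proof exists there either. The paper's supporting evidence consists of (a) achievability of the curve $r(s)$ via the explicit strategies of \cref{eq:evenStrat,eq:evenStrat_2} together with \cref{prop:theta,prop:phi}, and (b) numerical upper bounds (\cref{app:conj2E,app:numMethod}) obtained by a quite different reduction from the one you propose: instead of twirling over party permutations, the paper applies the parity-symmetrizing channel $\mathcal{E}$ of \cref{lem:entSimp}, which uniformizes $p(\bm{a}|\bm{0})$ within each outcome-parity class. This leaves every full correlator $\langle A_{\bm{x}}\rangle$ (hence the MABK value) invariant and cannot decrease the Shannon entropy, and it collapses the entropy constraint to a single linear constraint on $\langle A_{\bm{0}}\rangle$. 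Maximizing $\langle M_N\rangle$ subject to that one correlator constraint is then certified by an SOS/SDP dual. That route avoids entirely the need to characterize optimal strategies, which is where your plan runs into trouble.

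The genuine gap in your proposal is the reduction from ``permutation-symmetric qubit strategy'' to ``member of the family in \cref{eq:evenStrat_2}''. Symmetrizing the correlation over $S_N$ does preserve $\langle M_N\rangle$ and weakly increase $H(\{p(\bm{a}|\bm{0})\})$, but a permutation-invariant quantum correlation need not be realized by a permutation-invariant state with identical single-qubit measurements: the symmetrized behaviour is a convex mixture of permuted strategies, realizable only as a direct sum, and even restricting to genuinely symmetric pure $N$-qubit states leaves the full $(N+1)$-dimensional symmetric subspace plus arbitrary equatorial-or-not qubit observables, not just GHZ with two fixed angles $(\phi,\theta)$. Your subsequent ``stationarity conditions yield $\theta=\theta(\phi)$'' step presupposes membership in the two-parameter family, which is precisely the content that would need to be proved; self-testing rigidity is unavailable here because the strategies in question do not maximize any single fixed Bell functional. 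Your fallback --- rigorous numerical upper bounds for small $N$ --- is essentially what the paper does, but as stated your plan does not close the conjecture, and the analytic steps preceding the fallback contain this unproven (and likely hardest) reduction. A smaller point: concavity of the \emph{true} optimal curve $\bar{R}(s)$ follows for free from concavity of entropy, convexity of $\mathcal{Q}$ and linearity of $\mathcal{M}$ (cf.\ \cref{lem:concavity}), so the mixture issue you raise is handled at that level without needing concavity of the conjectured curve itself.
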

\noindent For the range $s \in (1,m_{N}^{*}]$, and $s\in (m_{N}^{*} ,2^{(N-1)/2} ]$, the rate $r(s)$ and MABK value $s$ is achieved by the family of quantum strategies in \cref{eq:evenStrat} and \cref{eq:evenStrat_2}, and certified device-independently by the Bell expressions in \cref{lem:NevenBI} and \cref{lem:NevenBI_2}, respectively. Similarly to \cref{conj:maxMerm}, \cref{conj:maxMerm_2} is known to hold for the case $N=2$~\cite{WBC}. 

The minimization in the definition of $\phi_{s}$ is included since we have not shown the set $\{ \ \phi \in [0,\phi_{N}^{*}] \ : \ \langle M_{N}(\phi,\theta(\phi))\rangle = s \}$ is unique. Intuitively, the closer $\phi_{s}$ is to zero the closer we are to maximum randomness; hence we expect $r(\phi)$ to be monotonically decreasing with $|\phi|$ for $\phi \in [0,\phi_{N}^{*}]$, and minimization guarantees the best rate at a given $s$. For the examples we have computed, the minimization turns out to be trivial, i.e., $\phi_{s}$ is the unique solution to $\langle M_{N}(\phi_{s},\theta(\phi_{s}))\rangle = s$.   

\begin{lemma} \label{lem:RvM2}
Suppose \cref{conj:maxMerm_2} holds. Then the maximum amount of device-independent randomness that can be certified for the range of MABK values $(1,2^{(N-1)/2}]$ is given by \cref{eq:maxRand}. Moreover, this rate is achievable.
\end{lemma}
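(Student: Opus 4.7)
The plan is to argue the two regimes in the piecewise definition \cref{eq:maxRand} separately, handling achievability and a matching upper bound in each. The common upper-bound tool is the entropy monotonicity $H(\bm{R}\mid\bm{X}=\bm{x}^{*},E)\le H(\bm{R}\mid\bm{X}=\bm{x}^{*})$, applied to every quantum strategy compatible with the observed MABK value; this is what turns \cref{conj:maxMerm_2}, phrased as an upper bound on raw randomness, into a ceiling on DI rates.

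For $s\in(1,m_{N}^{*}]$, achievability is immediate from \cref{lem:RvM1_even}(i): the one-parameter family \cref{eq:evenStrat}, combined with the expanded Bell inequality of \cref{lem:NevenBI} and the decoupling content of \cref{lem:BIent,lem:rate}, already certifies $N$ DI bits at every such MABK value. The matching upper bound is trivial, since no protocol with $N$ binary outputs can deliver more than $N$ bits.

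For $s\in(m_{N}^{*},2^{(N-1)/2}]$, I would take $\phi=\phi_{s}$, $\theta=\theta(\phi_{s})$ in the two-parameter family \cref{eq:evenStrat_2}. \cref{prop:phi} guarantees that the shifted parameters $(\phi',\theta')$ defined in \cref{eq:defphithp} lie in $\mathcal{F}$, so \cref{lem:NevenBI_2} makes $I_{\phi',\theta'}$ a bona fide expanded Bell expression whose maximum quantum value $\eta^{\mathrm{Q}}_{N,\phi',\theta'}$ is saturated by this strategy; applying the decoupling lemma \cref{lem:BIent} together with \cref{lem:rate} then gives $R_{I_{\phi',\theta'}}(\eta^{\mathrm{Q}}_{N,\phi',\theta'}) = H(\{p(\bm{a}\mid\bm{0})\}) = r(\phi_{s})$, exhibiting $r(s)$ DI bits at MABK value $s$. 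For the matching upper bound, \cref{conj:maxMerm_2} asserts that every strategy compatible with $\langle M_{N}\rangle=s$ has raw randomness at most $r(\phi_{s})$; combined with the monotonicity inequality above, $H(\bm{R}\mid\bm{X},E)\le r(\phi_{s})$ for every compatible strategy, and hence the infimum defining the DI rate in \cref{eq:ent} is also at most $r(\phi_{s})$.

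Splicing the two regimes gives \cref{eq:maxRand}. The only non-routine ingredient is \cref{conj:maxMerm_2} itself; all other pieces — the self-testing content of \cref{lem:n2_2}, the expansion/decoupling machinery of \cref{lem:NevenBI_2,lem:BIent,lem:rate}, and the parameter-range control in \cref{prop:phi} — are already in place. I expect the main bookkeeping concern to be making sure the quantifiers line up correctly when converting the supremum of raw randomness appearing in the conjecture into a bound on the DI infimum of \cref{eq:ent}; this conversion rests on the uniform, strategy-wise monotonicity $H(\bm{R}\mid\bm{X},E)\le H(\bm{R}\mid\bm{X})$, which makes the derivation essentially routine once the conjecture is assumed.
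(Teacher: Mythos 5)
Your proposal is correct and follows essentially the same route as the paper: the conjectured raw-randomness ceiling is converted into a DI upper bound via $H(\bm{R}|\bm{X},E)\le H(\bm{R}|\bm{X})$, and achievability in each regime comes from the strategies in \cref{eq:evenStrat} and \cref{eq:evenStrat_2} together with \cref{prop:theta,prop:phi}, the expanded Bell inequalities of \cref{lem:NevenBI,lem:NevenBI_2}, and the decoupling machinery of \cref{lem:BIent,lem:rate}. The paper's own justification is just a terser statement of exactly this argument.
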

The above lemma tells us that if the rate in \cref{eq:maxRand} is optimal without conditioning on Eve, it will also be optimal conditioned on Eve, i.e., the rate can be made device-independent. This is because \cref{prop:theta,prop:phi} ensure that, for every MABK value, the corresponding quantum strategy achieving it, given in \cref{eq:evenStrat} or \cref{eq:evenStrat_2}, can generate certifiable randomness using the corresponding Bell expression $I_{\theta}$ or $I_{\phi',\theta'}$. A rewriting of \cref{lem:RvM2} that does not rely on Conjecture~\ref{conj:maxMerm_2}, would say that~\eqref{eq:maxRand} corresponds to an achievable lower bound on the maximum DI randomness as a function of MABK value. Our results are summarized in \cref{fig:RvMerm}. 

\begin{figure}[h]
\includegraphics[width=8.4cm]{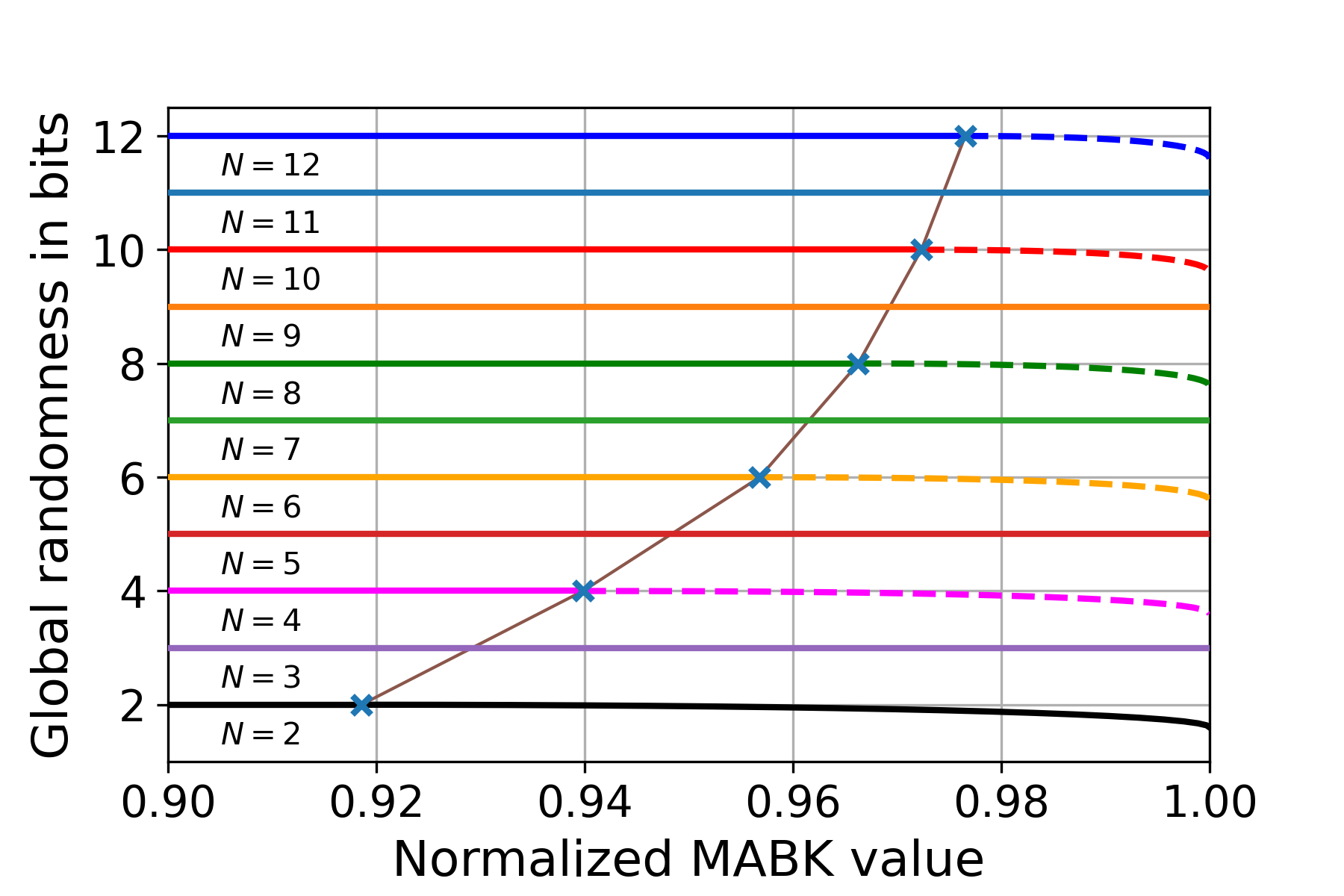}
\centering
\caption{The conjectured curves of maximum device-independent randomness versus MABK value, where the MABK value is normalized by its maximum quantum value $2^{(N-1)/2}$. When $N$ is odd, maximum global randomness can be achieved for any MABK value between the local and quantum bound, indicated by solid lines. The blue crosses indicate the conjectured maximum MABK value for which maximum randomness can be achieved when $N$ is even, $m_{N}^{*}$, which tends to the maximum quantum value as $N \rightarrow \infty$. To the left of the blue crosses, $N$ bits of randomness can be achieved for MABK values between the local bound (not shown on this plot) and $m_{N}^{*}$; since this is the global maximum, it is optimal, indicated by solid lines. To the right of the blue crosses, dashed lines indicate lower bounds on the trade-off between maximum randomness and MABK value from $m_{N}^{*}$ to the maximum quantum value, which are conjectured to be tight. The case of $N=2$ was proven in Ref~\cite{WBC}, which we reproduce with the results of the present paper.}
\label{fig:RvMerm}
\end{figure}

We provide evidence for Conjectures~\ref{conj:maxMerm} and~\ref{conj:maxMerm_2} in \cref{app:conj2E}, where we derive a numerical technique for upper bounding the maximum randomness that can be generated by quantum strategies achieving a given MABK value $s$. An implementation of this technique in python can be found in our GitHub repository~\cite{code}. By studying the numerical results, we find our analytical lower bound agrees with the numerical upper bound to at least 16 digits.   

\subsection{Other extremal Bell inequalities}
In the $N$-partite 2-input 2-output scenario, the MABK family is just one class of extremal Bell inequalities, and the techniques developed in this section can be readily applied to others. For example, when $N=3$ and restricting to Bell inequalities containing only three party correlators, there are a total of 3 non-trivial, new classes, one being the MABK inequality $M_{3}$, and two being of the form~\cite{Werner01}
\begin{align}
    S_{1} &= \frac{1}{4} \sum_{x,y,z}A_{x}B_{y}C_{z} - A_{1}B_{1}C_{1}, \label{eq:S1}\\
    S_{2} &= A_{0}B_{0}(C_{0} + C_{1}) - A_{1}B_{1}(C_{0} - C_{1}), \label{eq:S2}
\end{align}
where we used $A_{x}$ for $A_{x_{1}}^{(1)}$, $B_{y}$ for $A_{x_{2}}^{(2)}$ etc.\ for legibility. $S_{1}$ and $S_{2}$ have local bounds of $1$ and $2$, and maximum quantum values $5/3$ and $2\sqrt{2}$, respectively. We applied our numerical technique (see \cref{app:conj2E}) to find upper bounds on the maximum amount of randomness certifiable whilst achieving a given violation of $S_{1}$ and $S_{2}$. A comparison can be found in \cref{fig:3comp}, and we provide additional figures and details in \cref{app:S1S2}.
\begin{figure}[h]
\includegraphics[width=8.4cm]{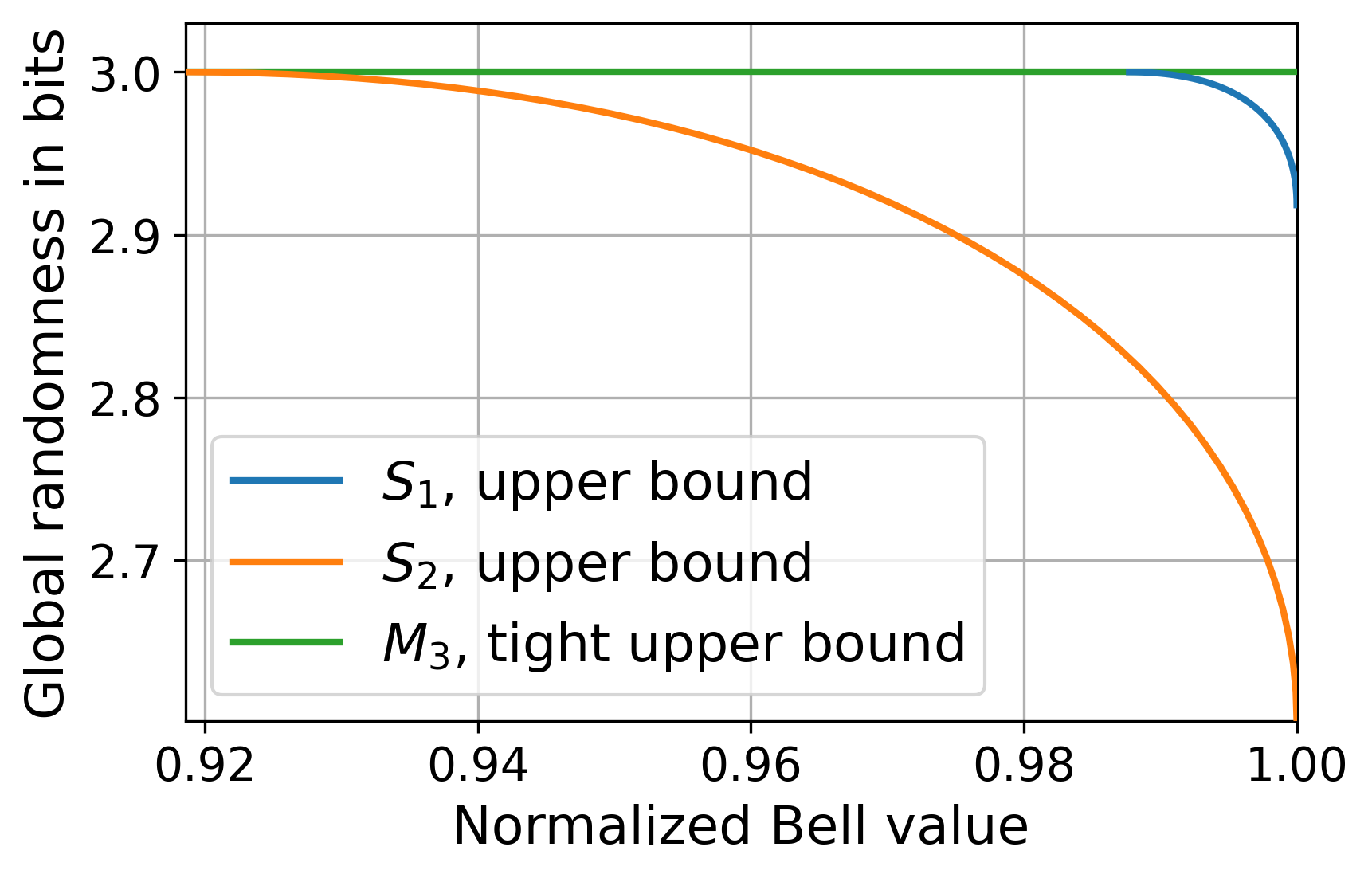}
\centering
\caption{Upper bounds on the trade-off between asymptotic global DI randomness and violation of non-trivial extremal Bell inequalities, for the tripartite scenario with two binary measurements per party. The violation has been normalized by the maximum quantum value. $S_{1}$ and $S_{2}$ are given by \cref{eq:S1} and \cref{eq:S2} respectively, and $M_{3}$ is the MABK expression. }  
\label{fig:3comp}
\end{figure}
We find that both $S_{1}$ and $S_{2}$ exhibit a trade-off for violations close to the quantum maximum. Due to its structure, $S_{2}$ exhibits identical trade-off characteristics to the CHSH inequality found in~\cite{WBC}, with the maximum CHSH value achievable with maximum randomness being numerically close to $3\sqrt{3}/2$. On the other hand, we find that more randomness can be generated from maximum violation of $S_{1}$, which is still strictly less than the maximum certified by MABK. Exact values are given in \cref{app:S1S2}.   

\section{Discussion} \label{sec:dis}
In this work, we studied how MABK violation constrains the generation of global DI randomness for an arbitrary number of parties. Whilst there is no trade-off in the odd case, we conjectured the precise trade-off for the even case. This conjecture is supported by an analytical lower bound and a numerical upper bound that agree up to at least the first 16 digits and have been checked for up to $N=12$. We additionally showed that, in the asymptotic limit, this trade-off vanishes. Our main technical contribution was the extension of a recent tool, originally introduced to witness genuinely multi-partite nonlocality, to randomness certification. 

The difference between the odd and even cases is explained by the odd MABK expressions containing half the full party correlators, allowing for input combinations not included in the Bell expression to be uniform. For even $N$, all correlators are included and must be non-zero to obtain maximum violation, resulting in a trade-off between randomness and MABK value. The number of correlators is $2^N$, whereas the quantum bound is $2^{(N-1)/2}$. Maximum violation is achieved when the product of all correlators and their MABK coefficients have the same value, which in this case will be $2^{-(N+1)/2}$. Hence we find the penalty for fixing one correlator to zero (which is necessary for maximum randomness) diminishes as $N$ grows, and vanishes in the asymptotic limit.   

We also quantified the nonlocality (measured using the dilution) of our constructions for low values of $N$, calculated via a linear program. We found our correlations are always separated from the local boundary (except for the case $N=2$). It would be interesting to find correlations that lie arbitrarily close to the local boundary whilst still certifying maximum randomness, as was found in the $N=2$ case~\cite{AcinRandomnessNonlocality,WBC}. This could entail breaking the symmetry between the second measurements of each party, and the techniques presented here can be applied. Moreover, one could hope to go further and bound the trade-off between randomness and nonlocality in this setting; however, the challenge becomes finding a suitable measure of nonlocality that is efficient to work with. Specifically, for the case $N=2$, constraining the amount of nonlocality of a quantum correlation is equivalent to constraining its CHSH value, which is linear in the correlators. When used as a constraint for maximizing randomness, this can then be handled using the techniques presented in \cref{app:numMethod}, which rely on the Navascués-Pironio-Acín hierarchy~\cite{NPA,NPA2}. For $N > 2$, the dilution measure is calculated using a linear program, therefore requiring a different approach.     

We hope that the results found here will inform future experiments in DIRE; initial works have already made progress in this direction~\cite{seguinard2023}. In such experiments, one should consider the optimal protocol for DIRE. In this work, we have proposed protocols that rely on spot-checking~\cite{MS1,MS2,Bhavsar2023}, where randomness is generated by a single input combination. An alternative approach has been explored in Ref.~\cite{Bhavsar2023}, where it was found that generating randomness from all inputs and averaging the result boosts the rate when the CHSH inequality is used. In the case of maximally violating a self-testing Bell inequality, spot-checking will always be optimal when a specific input combination leads to maximum randomness; having one maximally random setting necessarily implies that at least one other setting combination will not be maximally random for the correlations to be nonlocal, hence averaging will only decrease the overall rate. Whether this continues to hold when the presence of noise reduces the Bell value remains an open question, and the trade-off between using spot-checking and weighted averaging for our constructions deserves future investigation, with the aim of further improving practical rates.

Another interesting direction for future research is to connect expanded Bell inequalities to multi-partite self-testing. Our decoupling result allows the structure of the post-measurement state to be certified following the maximum violation of an expanded Bell expression; the open question we pose is whether an expanded Bell inequality can constitute a self-test of the $N$-party state, opening up a new range of useful applications beyond randomness certification. Intuitively, if each pair of parties can self-test all of their measurements along with a projected state, the marginal information could be sufficient to make a statement about the global state. Such a strategy has already been adopted for self-testing certain multi-partite strategies~\cite{Supic_2018}. However, there are subtleties to consider. For example, expanded Bell expressions tailored to some quantum strategies can result in trivial inequalities (i.e., ones that exhibit no classical-quantum gap), despite the underlying strategy being nonlocal, and furthermore self-testable. We therefore believe that an additional ingredient is needed to self-test generic multi-partite states using the techniques in this work.

Finally, whilst we studied the MABK family of inequalities, one could consider a similar analysis for other families of extremal multi-partite Bell expressions~\cite{Werner01}. For example, we explored upper bounds on the trade-offs for all extremal inequalities when $N=3$, and found the MABK inequality to be optimal for global randomness; it would be interesting to find out if the upper bounds presented for the other inequalities are achievable. One might try to find explicit constructions that match these bounds, and use our techniques to make them device-independent. Analysing other extremal inequalities will build a more complete picture of how DI randomness can be generated in the multi-partite scenario, better informing the way forward for future multi-partite experiments.   


\acknowledgements
This work was supported by the UK's Engineering and Physical Sciences Research Council (EPSRC) via the Quantum Communications Hub (Grant No.\ EP/T001011/1), the Integrated Quantum Networks Hub (Grant No.\ EP/Z533208/1), Grant No.\ EP/SO23607/1 and the European Union’s Horizon Europe research and innovation programme under the project “Quantum Secure Networks Partnership” (QSNP, grant agreement No.\ 101114043).

\bibliographystyle{quantum}


\appendix

\onecolumngrid

\newpage 

\section{Notation}

\begin{table}[h!]
\centering
\begin{tabular}{|c | c | } 
 \hline
 Symbol & Meaning \\ [0.5ex] 
 \hline\hline
 $a_{k}$ & Single bit associated to party $k$\\ \hline
 $\bm{a}$ & $N$ bit string $(a_{1},...,a_{N})$\\ \hline 
 $\bm{\mu}$ & $N-2$ bit string \\
 \hline
 $\bm{\mu}_{\overline{k_{1}k_{2}...k_{m}}}$ & $N-m$ bit string of measurement outcomes for parties $\{1,...,N\} \setminus \{k_{1},k_{2},...,k_{m}\}$ \\ \hline 
 $A_{k}$ & System associated to party $k$\\
 \hline 
 $\bm{A}$ & $N$ partite system $A_{1}...A_{N}$\\
 \hline
 $\bm{A}_{\overline{k_{1}k_{2}...k_{m}}}$ & $N-m$ partite system $A_{1}...A_{N}$ excluding $A_{k_{1}}A_{k_{2}}...A_{k_{m}}$\\
 \hline
 $\tilde{Q}_{k}$ & Quantum system of party $k$ \\
 \hline
 $Q_{k}$ & Qubit system of party $k$ \\
 \hline
 $\tilde{P}_{a_{k}|x_{k}}^{(k)}$ & Projector of party $k$ associated with input $x_{k} \in \{0,1\}$ and output $a_{k} \in \{0,1\}$\\
 \hline
 $\tilde{A}_{x_{k}}^{(k)}$ & Observable of party $k$  associated with input $x_{k}$\\
 \hline
 $\tilde{P}_{\bm{a}|\bm{x}}$ & $\bigotimes_{k=1}^{N} \tilde{P}_{a_{k}|x_{k}}^{(k)}$ where $\bm{a} = (a_{1},...,a_{k})$ and $\bm{x} = (x_{1},...,x_{k})$ \\
 \hline
 $\tilde{P}_{\bm{\mu}|\bm{0}}^{\overline{(k,l)}}$ & $\bigotimes_{k' \in \{1,...,N\} \setminus \{k,l\}} \tilde{P}_{a_{k'}|x_{k'}}^{(k')}$ where $\bm{\mu} = (a_{1},...,a_{k-1},a_{k+1},...,a_{l-1},a_{l+1},..a_{N})$ \\ \hline 
 $R_{f}(\omega)$ & Asymptotic rate defined in \cref{eq:ent} with the constraint $f(P_{\text{obs}}) = \omega$\\ \hline
 $\mathcal{G}$ &  The set $(\pi/4,\pi/2) \cup (\pi/2,3\pi/4) \cup (5\pi/4,3\pi/2) \cup (3\pi/2,7\pi/4)$ \\ \hline
 $\mathcal{F}$ & The set $\Big[ (-\pi/4,\pi/4) \times \mathcal{G} \Big] \cup \Big[ (-\pi/4,\pi/4) \setminus \{0\} \times \{\pi/2,3\pi/2\}  \Big]$ \\ \hline
 $\langle M_{N} \rangle $ & $N$ partite MABK functional (see Eq.~\eqref{eq:MABK}) \\ \hline
 
\end{tabular}
\caption{Summary of notation.}
\label{tab:notation}
\end{table}

\twocolumngrid

\section{Proofs for expanded Bell expressions} 

\subsection{Proof of \cref{lem:conBI}} \label{app:conBI}

\noindent \textbf{Lemma 1.} \textit{Let $I$ be an expanded Bell expression according to \cref{def:expandedBI}. The maximum quantum value of $\langle I \rangle$ is upper bounded by $\eta^{\mathrm{Q}}_{N} := \sum_{k,l}c_{k,l}\eta^{\mathrm{Q}}$.}

\begin{proof}
To prove the above, we show that the operator expression $\bar{I} = \eta^{\mathrm{Q}}_{N} \mathbb{I} - I$ is non-negative. Since each bipartite expression satisfies $\langle I_{\bm{\mu}}^{(k,l)} \rangle \leq \eta^{\mathrm{Q}}$ for all $\bm{\mu},k,l$, we have positive expressions
\begin{equation}
    \bar{I}_{\bm{\mu}}^{(k,l)} = \eta^{\mathrm{Q}}\mathbb{I} - I_{\bm{\mu}}^{(k,l)} \succeq 0 , \ \forall \bm{\mu},k,l.
\end{equation}
We then have
\begin{align}
    \bar{I}_{k,l} &:= \eta^{\mathrm{Q}}\mathbb{I} - \sum_{\bm{\mu}} \tilde{P}_{\bm{\mu}|\bm{0}}^{\overline{(k,l)}} I^{(k,l)}_{\bm{\mu}} \nonumber \\
    &= \sum_{\bm{\mu}} \tilde{P}_{\bm{\mu}|\bm{0}}^{\overline{(k,l)}} \Big( \eta^{\mathrm{Q}}\mathbb{I} - I_{\bm{\mu}}^{(k,l)} \Big) \nonumber \\
    &= \sum_{\bm{\mu}} \tilde{P}_{\bm{\mu}|\bm{0}}^{\overline{(k,l)}} \bar{I}_{\bm{\mu}}^{(k,l)}  \succeq 0. \label{eq:psd1}
\end{align}
Finally, it follows from the above that $\bar{I} \succeq 0$, which proves the upper bound on the maximum quantum value. We can also upper bound the maximum local value by $\sum_{k<l}c_{k,l}\eta^{\mathrm{L}} < \sum_{k<l}c_{k,l}\eta^{\mathrm{Q}} = \eta_{N}^{\mathrm{Q}}$ since $\eta^{\mathrm{L}} < \eta^{\mathrm{Q}}$.
\end{proof}

\subsection{Uniqueness of binary distributions with fixed marginals} 
Before proving \cref{lem:BIent}, we establish the following fact about classical distributions of bit strings when their conditional distributions are fixed. We will later use this result to justify that the distribution used for randomness achieving the quantum bound of an expanded Bell inequality is unique when the seed is a self-test. 
\begin{lemma}
    Let $\bm{A} = A_{1}...A_{N}$, $N\geq 3$, be a random $N$ bit string, which takes values $\bm{a} \in \{0,1\}^{N}$ according to the distribution $p_{\bm{A}}(\bm{a})$. Let $\bm{A}_{\overline{k}}$ be an $N-2$ bit partition of the string $\bm{A}$, excluding bits $N$ and $k\in \{1,...,N-1\}$, taking values $\bm{a}_{\overline{k}} \in \{0,1\}^{N-2}$. Then the distribution $p_{\bm{A}}(\bm{a})$ is entirely determined by the set of conditional distributions $\{p_{A_{k}A_{N}|\bm{A}_{\overline{k}}}(a_{k}a_{N}|\bm{a}_{\overline{k}})\}_{k \in \{1,...,N-1\}}$, provided $p_{A_{k}A_{N}|\bm{A}_{\overline{k}}}(a_{k}a_{N}|\bm{a}_{\overline{k}}) > 0$ for all $ k,a_{k},a_{N},\bm{a}_{\overline{k}}$. \label{lem:uni2}
\end{lemma}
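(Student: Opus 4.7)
The plan is to show that the supplied conditionals pin down every ratio $p_{\bm{A}}(\bm{a})/p_{\bm{A}}(\bm{a}')$, after which the normalization $\sum_{\bm{a}}p_{\bm{A}}(\bm{a})=1$ fixes the full distribution. The key observation is that each conditional encodes local ratio information on the Boolean hypercube: whenever $\bm{a}$ and $\bm{a}'$ agree outside a single coordinate, one of the supplied conditionals has the common $(N-2)$-bit sub-string as its conditioning argument, so the common marginal $p_{\bm{A}_{\overline{k}}}(\bm{a}_{\overline{k}})$ cancels and the ratio reads off directly.

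Concretely, I would first handle a single-bit flip. If $\bm{a}$ and $\bm{a}'$ differ only in coordinate $j\neq N$, then $\bm{a}_{\overline{j}}=\bm{a}'_{\overline{j}}$ and $a_N=a_N'$, and writing $p_{\bm{A}}(\bm{a}) = p_{A_j A_N\mid \bm{A}_{\overline{j}}}(a_j a_N \mid \bm{a}_{\overline{j}})\,p_{\bm{A}_{\overline{j}}}(\bm{a}_{\overline{j}})$ together with the analogous identity for $\bm{a}'$ gives
\begin{equation}
\frac{p_{\bm{A}}(\bm{a})}{p_{\bm{A}}(\bm{a}')}=\frac{p_{A_j A_N\mid \bm{A}_{\overline{j}}}(a_j a_N\mid \bm{a}_{\overline{j}})}{p_{A_j A_N\mid \bm{A}_{\overline{j}}}(a_j' a_N\mid \bm{a}_{\overline{j}})},
\end{equation}
which is known. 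If instead $j=N$, I would invoke the hypothesis $N\geq 3$ to pick any $k\in\{1,\dots,N-1\}$; then $\bm{a}_{\overline{k}}=\bm{a}'_{\overline{k}}$ and $a_k=a_k'$, and the same cancellation gives the ratio in terms of the conditional indexed by $k$. Positivity of the conditionals ensures no ratio is of the indeterminate form $0/0$ and that the relevant marginals $p_{\bm{A}_{\overline{k}}}$ are strictly positive wherever they appear.

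Next I would promote this to arbitrary pairs. Since any two strings in $\{0,1\}^N$ are connected by a sequence of single-bit flips, multiplying the single-bit ratios along such a path yields $p_{\bm{A}}(\bm{a})/p_{\bm{A}}(\bm{a}')$ for all $\bm{a},\bm{a}'$; the product is path-independent automatically because the data are drawn from a genuine distribution, so no extra consistency check is required. Fixing a reference string $\bm{a}_0$, every $p_{\bm{A}}(\bm{a})$ is then determined up to the scalar $p_{\bm{A}}(\bm{a}_0)$, which is pinned by $\sum_{\bm{a}}p_{\bm{A}}(\bm{a})=1$.

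There is no real obstacle here beyond bookkeeping: the argument is essentially the observation that the supplied conditionals give the edge-ratios of $p_{\bm{A}}$ on the Boolean cube, and those determine a positive distribution up to scale. The only points worth being careful about are (i) the $j=N$ case, where the existence of a helper index $k$ requires $N\geq 3$, and (ii) the positivity hypothesis, which guarantees all ratios used in the telescoping product are well-defined.
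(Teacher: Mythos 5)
Your proof is correct, and it reaches the result by a genuinely different and more transparent route than the paper's. The paper fixes a reference index $l$, treats the $2^{N-2}$ marginals $p_{\bm{A}_{\overline{l}}}(\bm{a}_{\overline{l}})$ as unknowns, and for each helper index $k\neq l$ equates the two factorizations $p_{\bm{A}}=p_{\bm{A}_{\overline{k}}}\,p_{A_kA_N|\bm{A}_{\overline{k}}}=p_{\bm{A}_{\overline{l}}}\,p_{A_lA_N|\bm{A}_{\overline{l}}}$ to derive linear dependences that halve the number of independent unknowns; after $N-2$ rounds a single unknown remains and is fixed by normalization. You instead work directly with the values $p_{\bm{A}}(\bm{a})$: the $k$-th conditional gives the edge-ratios of $p_{\bm{A}}$ along hypercube direction $k\in\{1,\dots,N-1\}$, any one of them gives the ratios in direction $N$, connectivity of $\{0,1\}^{N}$ under single-bit flips then yields every ratio by telescoping, and normalization fixes the scale. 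The core cancellation (the common marginal drops out of a single-bit-flip ratio) is the same in both arguments, but your organization replaces the recursive elimination over partitions with a one-shot connectivity argument, and your observation that path-independence comes for free is right, since only uniqueness is claimed and the ratios are those of an actual distribution. Two minor remarks: choosing a helper index for the flip in direction $N$ needs only $N\geq 2$, so $N\geq 3$ is not really what that step relies on (it just ensures the conditioning string is nonempty); and the positivity hypothesis is doing double duty, making the conditionals (hence the marginals $p_{\bm{A}_{\overline{k}}}$) well defined and positive, which forces $p_{\bm{A}}(\bm{a})>0$ everywhere so that every ratio in the telescoping product is legitimate.
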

\begin{proof}
    The proof can be established using a recursive argument to solve for a set of marginal terms $p_{\bm{A}_{\overline{l}}}(\bm{a}_{\overline{l}})$, for a fixed $l \in \{1,...,N-1\}$, from which the full distribution can be recovered. Let us choose another partition $k \in \{1,...,N-1\}$ with $k \neq l$. Then we have 
    \begin{equation}
        \begin{aligned}
            p_{\bm{A}}(\bm{a}) &= p_{\bm{A}_{\overline{k}}}(\bm{a}_{\overline{k}}) p_{A_{k}A_{N}|\bm{A}_{\overline{k}}}(a_{k}a_{N}|\bm{a}_{\overline{k}}) \\
            &= p_{\bm{A}_{\overline{l}}}(\bm{a}_{\overline{l}}) p_{A_{l}A_{N}|\bm{A}_{\overline{l}}}(a_{l}a_{N}|\bm{a}_{\overline{l}}). \label{eq:marg1}
        \end{aligned}
    \end{equation}
    By choosing different values of $a_{k}$, this implies the two sets of equations
    \begin{equation}
        \begin{aligned}
            p_{\bm{A}_{\overline{k}}}(\bm{a}_{\overline{k}}) \, & p_{A_{k}A_{N}|\bm{A}_{\overline{k}}}(a_{k}=0,a_{N}|\bm{a}_{\overline{k}}) \\ &= p_{\bm{A}_{\overline{l}}}(a_{k}=0,\bm{a}_{\overline{kl}}) \, p_{A_{l}A_{N}|\bm{A}_{\overline{l}}}(a_{l}a_{N}|a_{k}=0,\bm{a}_{\overline{kl}}), \\ 
            p_{\bm{A}_{\overline{k}}}(\bm{a}_{\overline{k}}) \, & p_{A_{k}A_{N}|\bm{A}_{\overline{k}}}(a_{k}=1,a_{N}|\bm{a}_{\overline{k}}) \\ &= p_{\bm{A}_{\overline{l}}}(a_{k}=1,\bm{a}_{\overline{kl}}) \, p_{A_{l}A_{N}|\bm{A}_{\overline{l}}}(a_{l}a_{N}|a_{k}=1,\bm{a}_{\overline{kl}}).
        \end{aligned}
    \end{equation}
    Since $p_{A_{k}A_{N}|\bm{A}_{\overline{k}}}(a_{k}a_{N}|\bm{a}_{\overline{k}}) \neq 0$, we can rearrange both for $p_{\bm{A}_{\overline{k}}}$ and equate:
    \begin{multline}
        \frac{p_{\bm{A}_{\overline{l}}}(a_{k}=0,\bm{a}_{\overline{kl}}) \, p_{A_{l}A_{N}|\bm{A}_{\overline{l}}}(a_{l}a_{N}|a_{k}=0,\bm{a}_{\overline{kl}})}{p_{A_{k}A_{N}|\bm{A}_{\overline{k}}}(a_{k}=0,a_{N}|\bm{a}_{\overline{k}})} \\
        = \frac{p_{\bm{A}_{\overline{l}}}(a_{k}=1,\bm{a}_{\overline{kl}}) \, p_{A_{l}A_{N}|\bm{A}_{\overline{l}}}(a_{l}a_{N}|a_{k}=1,\bm{a}_{\overline{kl}})}{p_{A_{k}A_{N}|\bm{A}_{\overline{k}}}(a_{k}=1,a_{N}|\bm{a}_{\overline{k}})},
    \end{multline}
    where we have explicitly written in the $a_{k}$ variable, and $\overline{kl}$ denotes a tuple excluding parties $k$ and $l$. The above equation tells us, for every choice of $\bm{a}_{\overline{kl}} \in \{0,1\}^{N-3}$, the unknowns $p_{\bm{A}_{\overline{l}}}(a_{k}=0,\bm{a}_{\overline{kl}})$ and $p_{\bm{A}_{\overline{l}}}(a_{k}=1,\bm{a}_{\overline{kl}})$ are linearly dependent. Hence we consider the $2^{N-3}$ unknowns $p_{\bm{A}_{\overline{l}}}(a_{k}=0,\bm{a}_{\overline{kl}})$, since the $a_{k}=1$ terms can be computed via this linear dependence. 

    Consider choosing $k' \in \{1,...,N-1\} \setminus \{k,l\}$. We can find a new set of equations by repeating the above process with $k'$ instead of $k$, to find:
    \begin{equation}
    \begin{aligned}
        &\frac{p_{\bm{A}_{\overline{l}}}(a_{k} = 0,a_{k'}=0,\bm{a}_{\overline{kk'l}})}{p_{A_{k'}A_{N}|\bm{A}_{\bar{k'}}}(a_{k'}=0,a_{N}|a_{k} = 0,\bm{a}_{\overline{kk'}})} \\ & \hspace{0.8cm}\cdot p_{A_{l}A_{N}|\bm{A}_{\overline{l}}}(a_{l}a_{N}|a_{k} = 0,a_{k'}=0,\bm{a}_{\overline{kk'l}})
        \\ &= \frac{p_{\bm{A}_{\overline{l}}}(a_{k} = 0,a_{k'}=1,\bm{a}_{\overline{kk'l}})}{p_{A_{k'}A_{N}|\bm{A}_{\bar{k'}}}(a_{k'}=1,a_{N}|a_{k} = 0,\bm{a}_{\overline{kk'}})} \\ &\hspace{0.8cm}\cdot p_{A_{l}A_{N}|\bm{A}_{\overline{l}}}(a_{l}a_{N}|a_{k} = 0,a_{k'}=1,\bm{a}_{\overline{kk'l}}),
    \end{aligned}
    \end{equation}
    where we have only included the $a_{k}=0$ case, and written $\overline{kk'l}$ to denote a tuple excluding parties $k,k',l$. Notice that we have now identified linear dependence between $p_{\bm{A}_{\overline{l}}}(a_{k} = 0,a_{k'}=0,\bm{a}_{\overline{kk'l}})$ and $p_{\bm{A}_{\overline{l}}}(a_{k} = 0,a_{k'}=1,\bm{a}_{\overline{kk'l}})$. Neither of these equations contain a marginal with $a_{k}=1$, so they relate pairs of unknowns distinctly to the parings from the previous equations, and are hence linearly independent. As before, we can proceed with the $2^{N-4}$ unknowns $p_{\bm{A}_{\overline{l}}}(a_{k} = 0,a_{k'}=0,\bm{a}_{\overline{kk'l}})$ and recover the others by linear dependence. 

    We can apply the above procedure iteratively a total of $N-2$ times (for every $k \in \{1,...,N-1\} \setminus \{l\}$), halving the number of unknowns on every iteration by establishing linear dependence. Starting with $2^{N-2}$ marginals $p_{\bm{A}_{\overline{l}}}(\bm{a}_{\overline{l}})$, this leaves us with 1 unknown, which can be found from normalization, completing the proof. 
\end{proof}

Now we can present a corollary which will be needed for the proof of \cref{lem:BIent}, showing the conditional distribution used for randomness, $p(\bm{a}|\bm{0})$, achieving the quantum bound of an expanded Bell expression is unique.

\begin{corollary}
    Let $I$ be an expanded Bell inequality according to \cref{def:expandedBI} with $c_{k,N} = 1$ if $k<N$ and zero otherwise, and $\eta_{N}^{\mathrm{Q}}$ be as defined in \cref{lem:conBI}. Let $I$ be constructed using a seed with the self-testing properties described in \cref{lem:BIent}, and $p(\bm{a}|\bm{x})$ denote a quantum behaviour that achieves $\langle I \rangle = \eta^{\mathrm{Q}}_{N}$. Then provided $p(a_{k}a_{N}|\bm{b}_{\overline{k}},\bm{x} =\bm{0}) > 0$ for all $k,a_{k},a_{N},\bm{b}_{\overline{k}}$, the conditional the distribution $p(\bm{a}|\bm{0})$ is unique. \label{cor:uni}       
\end{corollary}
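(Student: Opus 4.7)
The plan is to reduce the claim to \cref{lem:uni2}: saturation of the expanded Bell value forces every bipartite conditional distribution $p(a_k a_N \mid \bm{a}_{\overline{k}}, \bm{0})$ to be uniquely determined via self-testing of the seed, and \cref{lem:uni2} then assembles these conditionals into a unique joint $p(\bm{a}|\bm{0})$.

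First I would unpack the positive decomposition used in the proof of \cref{lem:conBI}. Since $c_{k,N}=1$ for $k<N$ and $0$ otherwise,
\[
\bar I \;=\; \eta_N^{\mathrm{Q}}\mathbb{I} - I \;=\; \sum_{k=1}^{N-1}\sum_{\bm{\mu}} \tilde{P}_{\bm{\mu}|\bm{0}}^{\overline{(k,N)}}\,\bar{I}_{\bm{\mu}}^{(k,N)},
\]
where each summand is a positive operator because the projector and $\bar{I}_{\bm{\mu}}^{(k,N)}\succeq 0$ act on disjoint tensor factors and hence commute. The assumption $\langle I\rangle_{\Psi} = \eta_N^{\mathrm{Q}}$ gives $\langle\bar I\rangle_\Psi=0$, so positivity forces each summand to vanish:
\[
\langle \tilde{P}_{\bm{\mu}|\bm{0}}^{\overline{(k,N)}}\,\bar{I}_{\bm{\mu}}^{(k,N)} \rangle_{\Psi} = 0 \qquad \forall\, k,\bm{\mu}.
\]
For every $\bm{\mu}$ with $p(\bm{\mu}|\bm{0})>0$ (guaranteed by the strict positivity hypothesis in the Corollary), this is equivalent to saying that the renormalised bipartite post-measurement state of parties $k$ and $N$ --- conditional on the other $N-2$ parties obtaining outcomes $\bm{\mu}$ on inputs $\bm{0}$ --- attains the maximum quantum value $\eta^{\mathrm{Q}}$ of the seed $I_{\bm{\mu}}^{(k,N)}$.

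Next I would invoke the self-testing hypothesis on the seed (\cref{def:selfTest}): any strategy saturating $I_{\bm{\mu}}^{(k,N)}$ is locally isometric to the target $(\ket{\Phi}, \{P_{a_k|x_k}^{(k)}\}, \{P_{a_N|x_N}^{(N)}\})$, so the bipartite conditional distribution is uniquely fixed to
\[
p(a_k a_N \mid \bm{a}_{\overline{k}} = \bm{\mu},\bm{0}) \;=\; \bra{\Phi}P_{a_k|0}^{(k)}\otimes P_{a_N|0}^{(N)}\ket{\Phi},
\]
which by assumption is strictly positive for all $a_k,a_N$. This holds for every $k\in\{1,\dots,N-1\}$ and every $\bm{\mu}$. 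Identifying $A_k$ in \cref{lem:uni2} with device $k$'s output on input $0$, the hypotheses of that lemma are met, and we conclude that $p(\bm{a}|\bm{0})$ is uniquely determined.

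The main delicate step is passing from the algebraic statement $\langle \tilde{P}_{\bm{\mu}|\bm{0}}^{\overline{(k,N)}}\bar{I}_{\bm{\mu}}^{(k,N)}\rangle_{\Psi}=0$ on the global state to the self-testing statement on the bipartite reduced state of parties $(k,N)$. I would handle this by purifying the conditional state: writing the (unnormalised) subnormalised state $\tilde{P}_{\bm{\mu}|\bm{0}}^{\overline{(k,N)}}\ket{\Psi}$ as a vector in $\cH_{\tilde{Q}_k}\otimes\cH_{\tilde{Q}_N}\otimes\cH_{\text{rest}}$, the vanishing expectation of $\bar{I}_{\bm{\mu}}^{(k,N)}$ on this vector is exactly the saturation condition on a purification of the bipartite reduced state, to which \cref{def:selfTest} applies. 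Once this is in place, the remainder of the proof is routine assembly via \cref{lem:uni2}.
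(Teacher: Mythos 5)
Your proposal is correct and follows essentially the same route as the paper: force each term $\langle \tilde{P}_{\bm{\mu}|\bm{0}}^{\overline{(k,N)}}\bar{I}_{\bm{\mu}}^{(k,N)}\rangle$ to vanish individually, normalize the projected state, invoke the bipartite self-test to fix each conditional $p(a_k a_N|\bm{a}_{\overline{k}},\bm{0})$, and assemble via \cref{lem:uni2}. The only cosmetic difference is that you obtain the per-$\bm{\mu}$ saturation directly from positivity of the commuting product $\tilde{P}_{\bm{\mu}|\bm{0}}^{\overline{(k,N)}}\bar{I}_{\bm{\mu}}^{(k,N)}$, whereas the paper reaches the same conclusion via a trace-norm/H\"older bound; both are valid.
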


\begin{proof}
    If the quantum value $\langle I \rangle = \eta_{N}^{\mathrm{Q}}$ is achieved, we must have
\begin{align}
    \sum_{k=1}^{N-1} \big \langle \eta^{\mathrm{Q}}\mathbb{I} - \sum_{\bm{\mu}} \tilde{P}_{\bm{\mu}|\bm{0}}^{\overline{(k,N)}} I_{\bm{\mu}}^{(k,N)} \big \rangle = 0, \label{eq:psd2}
\end{align}
which implies $\big \langle \eta^{\mathrm{Q}}\mathbb{I} - \sum_{\bm{\mu}} \tilde{P}_{\bm{\mu}|\bm{0}}^{\overline{(k,N)}} I_{\bm{\mu}}^{(k,N)} \big \rangle = 0$ from \cref{eq:psd1}. Next, we observe
\begin{multline}
    \langle \tilde{P}_{\bm{\mu}|\bm{0}}^{\overline{(k,N)}}I_{\bm{\mu}}^{(k,N)}\rangle = \mathrm{Tr}\Big[ \tilde{P}_{\bm{\mu}|\bm{0}}^{\overline{(k,N)}} \rho_{\tilde{\bm{Q}}E} \tilde{P}_{\bm{\mu}|\bm{0}}^{\overline{(k,N)}} I_{\bm{\mu}}^{(k,N)}\Big] \\ \leq | \mathrm{Tr}\Big[ \tilde{P}_{\bm{\mu}|\bm{0}}^{\overline{(k,N)}} \rho_{\tilde{\bm{Q}}E} \tilde{P}_{\bm{\mu}|\bm{0}}^{\overline{(k,N)}} I_{\bm{\mu}}^{(k,N)}\Big]| \\ \leq \|   \tilde{P}_{\bm{\mu}|\bm{0}}^{\overline{(k,N)}} \rho_{\tilde{\bm{Q}}E} \tilde{P}_{\bm{\mu}|\bm{0}}^{\overline{(k,N)}} I_{\bm{\mu}}^{(k,N)}\|_{1}, \label{eq:ineq1}
\end{multline}
where the second inequality follows from the fact $|\mathrm{Tr}[A]| \leq \|A \|_{1}$ for any operator $A$ (see e.g.~\cite[Lemma 3.3]{Tomamichel_2016}). Above, $\rho_{\tilde{\bm{Q}}E} = \ketbra{\Psi}{\Psi}$, where $\ket{\Psi}$ is the quantum state that realizes the correlations, and identities on $E$ are implicit. Next we combine \cref{eq:ineq1} with H\"older's inequality to obtain  
\begin{multline}
    \langle \tilde{P}_{\bm{\mu}|\bm{0}}^{\overline{(k,N)}}I_{\bm{\mu}}^{(k,N)}\rangle  \leq \|   \tilde{P}_{\bm{\mu}|\bm{0}}^{\overline{(k,N)}} \rho_{\tilde{\bm{Q}}E} \tilde{P}_{\bm{\mu}|\bm{0}}^{\overline{(k,N)}} I_{\bm{\mu}}^{(k,N)}\|_{1} \\ \leq \| \tilde{P}_{\bm{\mu}|\bm{0}}^{\overline{(k,N)}}\rho_{\tilde{\bm{Q}}E}\tilde{P}_{\bm{\mu}|\bm{0}}^{\overline{(k,N)}}  \|_{1} \|I_{\bm{\mu}}^{(k,N)}\|_{\infty} = \langle \tilde{P}_{\bm{\mu}|\bm{0}}^{\overline{(k,N)}} \rangle \eta^{\mathrm{Q}}.    
\end{multline}
Since 
\begin{equation}
    \sum_{\bm{\mu}}\Big( \langle \tilde{P}_{\bm{\mu}|\bm{0}}^{\overline{(k,N)}} \rangle \eta^{\mathrm{Q}} - \langle \tilde{P}_{\bm{\mu}|\bm{0}}^{\overline{(k,N)}}I_{\bm{\mu}}^{(k,N)}\rangle \Big) = 0
\end{equation}
we conclude $\langle \tilde{P}_{\bm{\mu}|\bm{0}}^{\overline{(k,N)}}I_{\bm{\mu}}^{(k,N)}\rangle = \langle \tilde{P}_{\bm{\mu}|\bm{0}}^{\overline{(k,N)}} \rangle \eta^{\mathrm{Q}}$. Now we can write
\begin{equation}
   \bra{\Psi'} I_{\bm{\mu}}^{(k,N)} \ket{\Psi'} = \eta^{\text{Q}},
\end{equation}
where we define the normalized state $\ket{\Psi'} = \tilde{P}_{\bm{\mu}|\bm{0}}^{\overline{(k,N)}} \ket{\Psi} / \sqrt{\langle \tilde{P}_{\bm{\mu}|\bm{0}}^{\overline{(k,N)}} \rangle}$. The self-testing properties of $I_{\bm{\mu}}^{(k,N)}$ (defined in \cref{lem:BIent}) then determine the unique correlations $\bra{\Psi'}\tilde{P}_{a_{k}|x}^{(k)}\tilde{P}_{a_{N}|y}^{(N)} \ket{\Psi'} = p^{k,N}_{\bm{\mu}}(a_{k}a_{N}|xy)$. Using the definition of $\ket{\Psi'}$, we find
\begin{equation}
    \big \langle \tilde{P}_{\bm{\mu}|\bm{0}}^{\overline{(k,N)}} \tilde{P}_{a_{k}|0}^{(k)}\tilde{P}_{a_{N}|0}^{(N)} \big \rangle = \langle \tilde{P}_{\bm{\mu}|\bm{0}}^{\overline{(k,N)}} \rangle p^{k,N}_{\bm{\mu}}(a_{k}a_{N}|00).
\end{equation}
We can now directly apply \cref{lem:uni2}. Explicitly, we have $p_{\bm{A}}(\bm{a}) = \big \langle \tilde{P}_{\bm{\mu}|\bm{0}}^{\overline{(k,N)}} \tilde{P}_{a_{k}|0}^{(k)}\tilde{P}_{a_{N}|0}^{(N)} \big \rangle $, and the $N-1$ fixed conditional distributions $p_{A_{k}A_{N}|\bm{A}_{\overline{k}}}(a_{k}a_{N}|\bm{a}_{\overline{k}}) = p(a_{k}a_{N}|\bm{a}_{\overline{k}},\bm{x} =\bm{0}) = p^{k,N}_{\bm{\mu}}(a_{k}a_{N}|00)$, where we have identified $\bm{\mu} = \bm{a}_{\overline{k}}$.

\end{proof}
\color{black}
\subsection{Proof of \cref{lem:BIent}} \label{app:dec}

We can now proceed to prove \cref{lem:BIent} in the main text, which is restated below:

\vspace{0.2cm}

\noindent \textbf{Lemma 2.}  \textit{Let $I$ be an expanded $N$-party Bell expression defined in \cref{def:expandedBI} with binary inputs and outputs, and $c_{k,l} = 1$ if $l=N$ and $k<N$, and zero otherwise. Suppose that for every $I_{\bm{\mu}}^{(k,N)}$, there exists an SOS decomposition that self-tests the same pure bipartite entangled state $\ket{\Phi}$ between parties $k$ and $N$, along with some ideal measurements $P^{(k)}_{a_{k}|x_{k}},P^{(N)}_{a_{N}|x_{N}}$, according to \cref{def:selfTest}, satisfying $\bra{\Phi}P^{(k)}_{a_{k}|0} \otimes P^{(N)}_{a_{N}|0} \ket{\Phi} > 0$ for all $a_{k},a_{N}$. Then for any strategy $\ket{\Psi}_{\tilde{\bm{Q}}E},\big\{\{\tilde{P}_{a_{k}|x_{k}}^{(k)}\}_{x_{k}}\big\}_{k}$ that achieves $\langle I \rangle = \eta^{\mathrm{Q}}_{N}$, the post-measurement state $\rho_{\bm{R}E|\bm{x}}$, for measurement settings $\bm{x} = \bm{0}$, admits the tensor product decomposition,
    \begin{equation}
        \rho_{\bm{R}E|\bm{0}} = \rho_{\bm{R}|\bm{0}} \otimes \rho_{E}.
    \end{equation}} 

Before proceeding, we summarize the proof. First, we provide an SOS decomposition for the shifted Bell operator $\bar{I}$. This is constructed using the SOS of every bipartite Bell expression $I_{\bm{\mu}}^{(k,N)}$, which exist by assumption. As a result, any state and set of measurements achieving $\bar{I} = 0$ (equivalently, the optimal quantum bound of $I$) must satisfy the same algebraic relations imposed by the SOS polynomials of $I_{\bm{\mu}}^{(k,N)}$ after an appropriate projection is performed by the other parties. In the next step, we employ Jordan's lemma, reducing the problem to qubits and simplifying the following analysis. In the third step we show that, for each $N$ qubit system in the Jordan decomposition, the SOS constraints self-test the bipartite state shared by parties $k$ and $N$ after the other $N-2$ parties have performed a measurement. Finally, we write down the global post-measurement state and show how the aforementioned self-testing properties imply that it must be in a tensor product with $E$.

\begin{proof} 
We proceed in the four steps described above.

\vspace{0.2cm}

\noindent \textbf{Step 1: SOS decompositions.}
By assumption, there exist SOS decompositions for the bipartite Bell operators $\bar{I}_{\bm{\mu}}^{(k,N)}$, i.e., we can write 
\begin{equation}
    \bar{I}_{\bm{\mu}}^{(k,N)} = \sum_{i} M_{\bm{\mu},i}^{(k,N)\dagger}M_{\bm{\mu},i}^{(k,N)},
\end{equation}
where $M_{\bm{\mu},i}^{(k,N)}$ is a polynomial of the operators $A_{x_{k}}^{(k)}$, $A_{x_{N}}^{(N)}$. We now use this to build an SOS decomposition for $\bar{I}$,
\begin{align}
    \bar{I} &= \sum_{k =1}^{N-1} \bar{I}_{k,N} \nonumber \\
    &= \sum_{k =1}^{N-1} \sum_{\bm{\mu}} \tilde{P}_{\bm{\mu}|\bm{0}}^{\overline{(k,N)}} \bar{I}_{\bm{\mu}}^{(k,N)} \nonumber \\
    &= \sum_{k = 1}^{N-1} \sum_{\bm{\mu}} \sum_{i} \Big( \tilde{P}_{\bm{\mu}|\bm{0}}^{\overline{(k,N)}}  M_{\bm{\mu},i}^{(k,N)} \Big)^{\dagger} \Big( \tilde{P}_{\bm{\mu}|\bm{0}}^{\overline{(k,N)}} M_{\bm{\mu},i}^{(k,N)} \Big).
\end{align}
Now, for some physical state $\ket{\Psi}$ and observables, observation of the maximum quantum value of the Bell expression $\langle I \rangle$, or equivalently $\langle \bar{I} \rangle = 0$, implies the algebraic constraints
\begin{equation}  \tilde{P}_{\bm{\mu}|\bm{0}}^{\overline{(k,N)}} M_{\bm{\mu},i}^{(k,N)} \ket{\Psi} = 0, \ \forall k\in \{1,...,N-1\},\bm{\mu},i.
\end{equation}
Since $\tilde{P}_{\bm{\mu}|\bm{0}}^{\overline{(k,N)}}$ and $M_{\bm{\mu},i}^{(k,N)}$ commute, we define the post-measurement states 
\begin{equation}
    \ket{\Psi_{\bm{\mu},k}} = \frac{\tilde{P}_{\bm{\mu}|\bm{0}}^{\overline{(k,N)}} \ket{\Psi} }{\sqrt{\bra{\Psi}\tilde{P}_{\bm{\mu}|\bm{0}}^{\overline{(k,N)}} \ket{\Psi}}},  
\end{equation}
and arrive at the set of algebraic constraints, for a fixed $k,\bm{\mu}$,
\begin{equation}
    M_{\bm{\mu},i}^{(k,N)} \ket{\Psi_{\bm{\mu},k}} = 0 , \quad \forall i. \label{eq:const1}
\end{equation}

\vspace{0.2cm}

\noindent \textbf{Step 2: Jordan's lemma.} Next, we employ Jordan's lemma~\cite{Jordan}, which allows us to reduce our analysis to a convex combination of $N$-qubit systems. Specifically, we will use the version presented in~\cite[Lemma 4]{Bhavsar2023}, and apply~\cite[Lemma 6]{Bhavsar2023} to consider (without loss of generality) the set of post-measurement states $\rho_{\bm{R}E}$ arising from block diagonal measurement operators with block size $2\times 2$, and a state which has support only on each $2 \times 2$ block. We write $\mathcal{H}_{Q_{k}}$ to denote the qubit Hilbert space of party $k$, and introduce a flag system $\mathcal{H}_{F_{k}}$ which indicates the $2 \times 2$ block. Then for every party $\mathcal{H}_{\tilde{Q}_{k}} = \mathcal{H}_{Q_{k}} \otimes \mathcal{H}_{F_{k}}$, and the purified state takes the form
\begin{equation}
    \ket{\Psi}_{\tilde{\bm{Q}}E} = \sum_{\bm{f}} \sqrt{p_{\bm{f}}} \ket{\bm{f}}_{\bm{F}} \otimes \ket{\varphi^{\bm{f}}}_{\bm{Q}E} \otimes \ket{\bm{f}}_{E'},
\end{equation}
where $\rho^{\bm{f}} = \mathrm{Tr}_{E}\big[\ketbra{\varphi^{\bm{f}}}{\varphi^{\bm{f}}}\big]$ is an $N$-qubit state corresponding to block $\bm{f}=(f_{1},...,f_{N})$, where $f_{k}$ indexes the block for party $k$ and $p_{\bm{f}} > 0$ form a probability distribution. Tracing out $EE'$, the total state received by the parties is then a classical quantum state, where the classical register indexes which $N$-qubit state is in the quantum system. Eve then holds a purification of the state in register $E$ along with a label in register $E'$ telling her which block it pertains to. Similarly, the projectors decompose
\begin{equation}
    \tilde{P}_{a_{k}|x_{k}}^{(k)} = \sum_{f_{k}} \ketbra{f_k}{f_k}_{F_{k}} \otimes P_{a_{k}|x_{k}}^{(f_{k})}, \label{eq:opDecomp}
\end{equation}
where $P_{a_{k}|x_{k}}^{(f_{k})}$ is a rank one single qubit projector on $\mathcal{H}_{Q_{k}}$ corresponding to block $f_{k}$~\cite{Jordan,PABGMS,Bhavsar2023}. The projector performed on all systems except $k$ and $N$ for inputs $0$ is 
\begin{equation}
    \tilde{P}_{\bm{\mu}|\bm{0}}^{\overline{(k,N)}} = \sum_{\bm{f}_{\overline{kN}}} \ketbra{\bm{f}_{\overline{kN}}}{\bm{f}_{\overline{kN}}}_{\bm{F}_{\overline{kN}}} \otimes P_{\bm{\mu}|\bm{0}}^{(\bm{f}_{\overline{kN}})},
\end{equation}
where the notation $\overline{kN}$ is understood to denote a tuple excluding entries $k,N$, e.g., $\bm{f}_{\overline{kN}}$ is a tuple of indices $f_{k'}$ for $k' \neq k,N$. $P_{\bm{\mu}|\bm{0}}^{(\bm{f}_{\overline{kN}})} = \bigotimes_{k'\neq k,N}P^{(f_{k'})}_{\mu_{k'}|0}$ is then a rank one projector on the $N-2$ qubit Hilbert space shared by all parties excluding $k$ and $N$, corresponding to the block $\bm{f}_{\overline{kN}}$. 

We can now apply Jordan's lemma to the projected states $\ket{\Psi_{\bm{\mu},k}} := \tilde{P}_{\bm{\mu}|\bm{0}}^{\overline{(k,N)}} \ket{\Psi} / \sqrt{\langle \tilde{P}_{\bm{\mu}|\bm{0}}^{\overline{(k,N)}} \rangle}$, by computing
\begin{align}
     \ket{\Psi_{\bm{\mu},k}} &= \sum_{\bm{f}} \sqrt{p_{\bm{f}}} \ket{\bm{f}}_{\bm{F}} \nonumber \\
    &  \ \ \ \ \otimes \frac{\Big(\mathbb{I}_{Q_{k}Q_{N}E} \otimes P_{\bm{\mu}|\bm{0}}^{(\bm{f}_{\overline{kN}})} \Big)\ket{\varphi^{\bm{f}}}_{\bm{Q}E}}{\sqrt{\bra{\Psi}\tilde{P}_{\bm{\mu}|\bm{0}}^{\overline{(k,N)}} \ket{\Psi}}} \otimes \ket{\bm{f}}_{E'} \nonumber \\
    &=  \sum_{\bm{f} \in S_{\bm{\mu}}^{k}} \sqrt{p_{\bm{f}}}\ket{\bm{f}}_{\bm{F}} \otimes \ket{\phi_{\bm{\mu}}^{\bm{f}_{\overline{kN}}}}_{\bm{Q}_{\overline{kN}}}  \nonumber \\
    & \ \ \ \ \ \ \ \ \otimes \ket{\psi_{\bm{\mu},k}^{\bm{f}}}_{Q_{k}Q_{N}E} \otimes \ket{\bm{f}}_{E'}
\end{align}
where $S_{\bm{\mu}}^{k}$ is the set of blocks for which the projection is non-zero.
Here $\bm{f}$ is understood to be the concatenation of $f_{k},f_{N},\bm{f}_{\overline{kN}}$, and we used the fact that the qubit projectors are rank one to write the state as a tensor product,
\begin{multline}
    \frac{\Big(\mathbb{I}_{Q_{k}Q_{N}E} \otimes P_{\bm{\mu}|\bm{0}}^{(\bm{f}_{\overline{kN}})} \Big)\ket{\varphi^{\bm{f}}}_{\bm{Q}E}}{\sqrt{\bra{\Psi}\tilde{P}_{\bm{\mu}|\bm{0}}^{\overline{(k,N)}} \ket{\Psi}}} \\ = 
    \begin{cases}
        \ket{\phi_{\bm{\mu}}^{\bm{f}_{\overline{kN}}}}_{\bm{Q}_{\overline{kN}}}  \otimes \ket{\psi_{\bm{\mu},k}^{\bm{f}}}_{Q_{k}Q_{N}E} \ \mathrm{if} \ \bm{f} \in S_{\bm{\mu}}^{k},\\
        0 \ \mathrm{otherwise},
    \end{cases} 
\end{multline}
where $\ket{\phi_{\bm{\mu}}^{\bm{f}_{\overline{kN}}}} = \bigotimes_{k' \neq k,N} \ket{\phi_{\mu_{k'}}^{f_{k'}}}$, and $P^{(f_{k'})}_{\mu_{k'}|0} = \ketbra{\phi_{\mu_{k'}}^{f_{k'}}}{\phi_{\mu_{k'}}^{f_{k'}}}$. $\ket{\psi_{\bm{\mu},k}^{\bm{f}}}_{Q_{k}Q_{l}E}$ is then the state held by parties $k,N$ and Eve following the projection. By writing $\ket{\tilde{\phi}_{\bm{\mu}}^{\bm{f}_{\overline{kN}}}}_{\tilde{\bm{Q}}_{{\overline{kN}}}} = \ket{\bm{f}_{\overline{kN}}}_{\bm{f}_{\overline{kN}}} \otimes \ket{\phi_{\bm{\mu}}^{\bm{f}_{\overline{kN}}}}_{\bm{Q}_{\overline{kN}}}$, and $\ket{\tilde{\psi}_{\bm{\mu},k}^{\bm{f}}}_{\tilde{Q}_{k}\tilde{Q}_{N}E} = \ket{f_{k}f_{N}}_{F_{k}F_{N}}\otimes \ket{\psi_{\bm{\mu},k}^{\bm{f}}}_{Q_{k}Q_{N}E}$, we can re-write the constraints in \cref{eq:const1} as (suppressing the identity operator on Eve's systems)
\begin{multline}
    \sum_{\bm{f}\in S_{\bm{\mu}}^{k}} \sqrt{p_{\bm{f}}} \ket{\tilde{\phi}_{\bm{\mu}}^{\bm{f}_{\overline{kN}}}}_{\tilde{\bm{Q}}_{{\overline{kN}}}} \\ \otimes M_{\bm{\mu},i}^{(k,N)} \ket{\tilde{\psi}_{\bm{\mu},k}^{\bm{f}}}_{\tilde{Q}_{k}\tilde{Q}_{N}E} \otimes  \ket{\bm{f}}_{E'} = 0,
\end{multline}
implying, for a fixed $k,\bm{\mu}$, 
\begin{equation}
    M_{\bm{\mu},i}^{(k,N)} \ket{\tilde{\psi}_{\bm{\mu},k}^{\bm{f}}}_{\tilde{Q}_{k}\tilde{Q}_{N}E} = 0, \ \forall i,\forall \bm{f}\in S_{\bm{\mu}}^{k}. \label{eq:const2}
\end{equation}
This subsequently implies 
\begin{equation}
    M_{\bm{\mu},i}^{(k,N)|f_{k},f_{N}} \ket{\psi_{\bm{\mu},k}^{\bm{f}}}_{Q_{k}Q_{N}E} = 0, \ \forall i,\forall \bm{f}\in S_{\bm{\mu}}^{k}, \label{eq:constNew}
\end{equation}
where we wrote 
\begin{equation}
    M_{\bm{\mu},i}^{(k,N)} = \sum_{f_{k},f_{N}} \ketbra{f_{k},f_{N}}{f_{k},f_{N}}_{F_{k}F_{N}} \otimes M_{\bm{\mu},i}^{(k,N)|f_{k},f_{N}}
\end{equation}
using the decomposition in \cref{eq:opDecomp}.

\vspace{0.2cm}

\noindent \textbf{Step 3: Bipartite self-tests.} For the next part of the proof, we use the self-testing properties of the bipartite Bell expression $I^{(k,l)}_{\bm{\mu}}$, which are in turn derived from the SOS polynomials, and the constraints they impose on any state that achieves the maximum quantum value. Explicitly, for any state which satisfies the polynomial constraints, there exists a local isometry which transforms that state into an ideal state in tensor product with some junk. Due to Jordan's lemma, the constraints in \cref{eq:constNew} certify the the existence of a pair of local unitaries on the qubit registers of parties $k$ and $N$, such that, for every measurement outcome $\bm{\mu}$ and block combination $\bm{f} \in S_{\bm{\mu}}^{k}$, $U_{\bm{\mu},k}^{\bm{f}}$\footnote{Note that the notation $U_{\bm{\mu},k}^{\bm{f}}$ suppresses the tensor product of local unitaries to ease notation, that is, $U_{\bm{\mu},k}^{\bm{f}} = V \otimes W$ where $V$ is a unitary on system $Q_{k}$ and $W$ is a unitary on system $Q_{N}$.}, there exists a fixed two qubit state $\ket{\Phi}$ and projectors $P_{a_{k}|x_{k}}^{(k)}$, $P_{a_{N}|x_{N}}^{(N)}$ satisfying
\begin{multline}
    \Big(\mathbb{I}_{F_{k}F_{N}} \otimes U_{\bm{\mu},k}^{\bm{f}} \otimes \mathbb{I}_{E} \Big) \Big(\tilde{P}_{a_{k}|x_{k}}^{(k)} \otimes \tilde{P}_{a_{N}|x_{N}}^{(N)} \Big) \ket{\tilde{\psi}_{\bm{\mu},k}^{\bm{f}}}_{\tilde{Q}_{k}\tilde{Q}_{N}E} \\ 
    = \ket{f_{k}f_{N}}_{F_{k}F_{N}} \otimes \Big(P_{a_{k}|x_{k}}^{(k)} \otimes P_{a_{N}|x_{N}}^{(N)} \Big)\ket{\Phi}_{Q_{k}Q_{N}} \otimes \ket{\lambda_{\bm{\mu},k}^{\bm{f}}}_{E},
\end{multline}
where $\ket{\lambda_{\bm{\mu},k}^{\bm{f}}}_{E}$ is the junk system left over in the Eve register, which a priori may depend on $\bm{\mu},k$ and $\bm{f}$. Since we required that all bipartite expressions self-test the same state up to local unitaries, the above equation holds for all $\bm{\mu}$. 

\vspace{0.2cm}

\noindent \textbf{Step 4: Post-measurement state.} Finally, we use the existence of the local unitaries $U_{\bm{\mu},k}^{\bm{f}}$ to analyze the global post-measurement state held by all parties following their zero measurement, and storing their outcomes in the register $\bm{R}$, which takes the form 
\begin{multline}
    \rho_{\bm{R}EE'|\bm{0}} = \sum_{a_{k},a_{N},\bm{\mu}} \ketbra{a_{k}a_{N}\bm{\mu}}{a_{k}a_{N}\bm{\mu}}_{\bm{R}} \otimes \\ \mathrm{Tr}_{\tilde{\bm{Q}}}\Big[ \Big( \tilde{P}_{a_{k}|0}^{(k)} \otimes \tilde{P}_{a_{N}|0}^{(N)} \otimes \tilde{P}_{\bm{\mu}|\bm{0}}^{\overline{(k,N)}}  \otimes \mathbb{I}_{EE'}\Big) \ketbra{\Psi}{\Psi} \Big].
\end{multline}
We can rewrite the term inside the partial trace
\begin{align}
    &\Big( \tilde{P}_{a_{k}|0}^{(k)}  \otimes \tilde{P}_{a_{N}|0}^{(N)} \otimes \tilde{P}_{\bm{\mu}|\bm{0}}^{\overline{(k,N)}}  \otimes \mathbb{I}_{EE'}\Big) \ket{\Psi} \nonumber \\
    &= \sqrt{\bra{\Psi}\tilde{P}_{\bm{\mu}|\bm{0}}^{\overline{(k,N)}} \ket{\Psi}} \Big( \tilde{P}_{a_{k}|0}^{(k)} \otimes \tilde{P}_{a_{N}|0}^{(N)} \otimes \mathbb{I}_{\tilde{\bm{Q}}_{(\overline{k,N})}}  \otimes \mathbb{I}_{EE'}\Big)\ket{\Psi_{\bm{\mu},k}} \nonumber \\
    &= \sqrt{\bra{\Psi}\tilde{P}_{\bm{\mu}|\bm{0}}^{\overline{(k,N)}} \ket{\Psi}} \sum_{\bm{f}\in S_{\bm{\mu}}^{k}} \sqrt{p_{\bm{f}}} \ket{\tilde{\phi}_{\bm{\mu}}^{\bm{f}_{\overline{kN}}}}_{\tilde{\bm{Q}}_{{\overline{kN}}}}  \nonumber \\ 
    & \ \ \ \ \ \ \ \ \otimes \Big( \tilde{P}_{a_{k}|0}^{(k)} \otimes \tilde{P}_{a_{N}|0}^{(N)} \Big) \ket{\tilde{\psi}_{\bm{\mu},k}^{\bm{f}}}_{\tilde{Q}_{k}\tilde{Q}_{N}E} \otimes  \ket{\bm{f}}_{E'} \nonumber \\
    &= \sqrt{\bra{\Psi}\tilde{P}_{\bm{\mu}|\bm{0}}^{\overline{(k,N)}} \ket{\Psi}} \sum_{\bm{f}\in S_{\bm{\mu}}^{k}} \sqrt{p_{\bm{f}}} \ket{\tilde{\phi}_{\bm{\mu}}^{\bm{f}_{\overline{kN}}}}_{\tilde{\bm{Q}}_{{\overline{kN}}}} \nonumber \\ 
    & \otimes \ket{f_{k}f_{N}}_{F_{k}F_{N}} \otimes U_{\bm{\mu},k}^{\bm{f}\dagger}\Big( P_{a_{k}|0}^{(k)} \otimes P_{a_{N}|0}^{(N)} \Big) \ket{\Phi}_{Q_{k}Q_{N}} \nonumber \\ & \ \ \ \ \ \ \ \  \otimes \ket{\lambda_{\bm{\mu},k}^{\bm{f}}}_{E} \otimes \ket{\bm{f}}_{E'}.
\end{align}

Now, tracing out $\tilde{\bm{Q}}$ equates to tracing out systems $\bm{F}\bm{Q}$, which yields 
\begin{align}
    &\mathrm{Tr}_{\tilde{\bm{Q}}}\Big[ \Big( \tilde{P}_{a_{k}|0}^{(k)} \otimes \tilde{P}_{a_{N}|0}^{(N)} \otimes \tilde{P}_{\bm{\mu}|\bm{0}}^{\overline{(k,N)}}  \otimes \mathbb{I}_{EE'}\Big) \ketbra{\Psi}{\Psi} \Big] \nonumber \\
    &= \bra{\Psi}\tilde{P}_{\bm{\mu}|\bm{0}}^{\overline{(k,N)}} \ket{\Psi} \bra{\Phi} \Big( P_{a_{k}|0}^{(k)} \otimes P_{a_{N}|0}^{(N)} \Big)\ket{\Phi} \nonumber \\ & \ \ \ \ \ \ \ \ \cdot \sum_{\bm{f}\in S_{\bm{\mu}}^{k}}p_{\bm{f}} \ketbra{\lambda_{\bm{\mu},k}^{\bm{f}}}{\lambda_{\bm{\mu},k}^{\bm{f}}}_{E}\otimes \ketbra{\bm{f}}{\bm{f}}_{E'}.
\end{align}
By employing \cref{cor:uni}, 
we know that, following observation of $\langle I \rangle = \eta^{\mathrm{Q}}$, the marginals $\bra{\Psi}\tilde{P}_{\bm{\mu}|\bm{0}}^{\overline{(k,l)}} \ket{\Psi}$ are unique; hence the following equality holds:
\begin{equation}
    \bra{\Psi}\tilde{P}_{\bm{\mu}|\bm{0}}^{\overline{(k,N)}} \ket{\Psi} \bra{\Phi} \Big( P_{a_{k}|0}^{(k)} \otimes P_{a_{N}|0}^{(N)}\Big) \ket{\Phi} = p(a_{k},a_{N},\bm{\mu}|\bm{0}),
\end{equation}
where $\{p(a_{k},a_{N},\bm{\mu}|\bm{0})\}_{a_{k},a_{N},\bm{\mu}}$ are those unique correlations needed to achieve $\langle I \rangle = \eta^{\mathrm{Q}}$. We then find
\begin{multline}
    \rho_{\bm{R}EE'|\bm{0}} =  \sum_{a_{k},a_{N},\bm{\mu}} p(a_{k},a_{N},\bm{\mu}|\bm{0}) \ketbra{a_{k}a_{N}\bm{\mu}}{a_{k}a_{N}\bm{\mu}}_{\bm{R}}  \\ \otimes  \sum_{\bm{f}\in S_{\bm{\mu}}^{k}}p_{\bm{f}} \ketbra{\lambda^{\bm{f}}_{\bm{\mu},k}}{\lambda^{\bm{f}}_{\bm{\mu},k}}_{E}\otimes \ketbra{\bm{f}}{\bm{f}}_{E'}. \label{eq:finalRho}
\end{multline}

For a fixed $k \in \{1,...,N-1\}$, Eve can still establish correlations with the $N$ parties excluding $k$ and $N$, by choosing the sets $\{S_{\bm{\mu}}^{k}\}_{\bm{\mu}}$ to be disjoint. She could then distinguish the different outcomes $\bm{\mu}$ by a projective measurement on $E'$. She can also establish correlations by making the set $\{\ket{\lambda^{\bm{f}}_{\bm{\mu},k}}\}_{\bm{\mu}}$ distinguishable and measuring $E$. To show that Eve is in fact uncorrelated with all parties, we use the fact that there are $N-1$ choices of $k$, i.e., one can choose different combinations of parties to maximally violate the bipartite self-tests. 

We now include extra notation $\overline{kN}$, $\bm{\mu} \rightarrow \bm{\mu}_{\overline{kN}}$, which denotes a tuple of $N-2$ outcomes for all parties excluding $k$ and $N$, and define the following state, which is just a rewriting of \cref{eq:finalRho} making the choice $k$ explicit:
\begin{multline}
    \rho_{\bm{R}EE'|\bm{0}}^{k} :=  \sum_{a_{k},a_{N},\bm{\mu}_{\overline{kN}}} p(a_{k},a_{N},\bm{\mu}_{\overline{kN}}|\bm{0}) \\ \cdot \ketbra{a_{k}a_{N}\bm{\mu}_{\overline{kN}}}{a_{k}a_{N}\bm{\mu}_{\overline{kN}}}_{\bm{R}} \\  \otimes  \sum_{\bm{f}\in S_{\bm{\mu}_{\overline{kN}}}^{k}}p_{\bm{f}} \ketbra{\lambda^{\bm{f}}_{\bm{\mu}_{\overline{kN}},k}}{\lambda^{\bm{f}}_{\bm{\mu}_{\overline{kN}},k}}_{E}\otimes \ketbra{\bm{f}}{\bm{f}}_{E'}. \label{eq:pmslam}
\end{multline}
Now, we could equally define a second state $\rho_{\bm{R}EE'|\bm{0}}^{k'}$ where $k' \neq k,N$ in the exact same way, except we choose $k'$ instead of $k$ as the self-testing party. Since both states are a rewriting of the same post measurement state $\rho_{\bm{R}EE|\bm{0}}$ given by \cref{eq:finalRho}, they must be equal,
and we establish the equalities 
\begin{equation}
    \rho_{\bm{R}EE'|\bm{0}}^{k} = \rho_{\bm{R}EE'|\bm{0}}^{k'}, \ \forall k,k' \in \{1,...,N-1\}, \ k' \neq k. \label{eq:densityEq}
\end{equation}
These equalities will allow us to place constraints on the sets $S_{\bm{\mu}_{\overline{kN}}}^{k}$, and ultimately show that they are all equal. Specifically, \cref{eq:densityEq} implies\footnote{Note that we have implicitly equated the rewriting of the same outcome string $\bm{a}=a_{k}a_{N}\bm{\mu}_{\overline{kN}} = a_{k'}a_{N}\bm{\mu}_{\overline{k'N}}$, and only considered cases that appear in the summation, i.e., strings with nonzero probability, $p(\bm{a}|\bm{0}) > 0$. }
\begin{multline}
    \sum_{\bm{f}\in S_{\bm{\mu}_{\overline{kN}}}^{k}}p_{\bm{f}} \ketbra{\lambda^{\bm{f}}_{\bm{\mu}_{\overline{kN}},k}}{\lambda^{\bm{f}}_{\bm{\mu}_{\overline{kN}},k}}_{E}\otimes \ketbra{\bm{f}}{\bm{f}}_{E'} \\
    = \sum_{\bm{f}\in S_{\bm{\mu}_{\overline{k'N}}}^{k'}}p_{\bm{f}} \ketbra{\lambda^{\bm{f}}_{\bm{\mu}_{\overline{k'N}},k'}}{\lambda^{\bm{f}}_{\bm{\mu}_{\overline{k'N}},k'}}_{E}\otimes \ketbra{\bm{f}}{\bm{f}}_{E'} .\label{eq:eveEq}
\end{multline}
By tracing out system $E$, and using the fact that $p_{\bm{f}} \neq 0$, \cref{eq:eveEq} can only hold if
\begin{equation}
    S_{\bm{\mu}_{\overline{kN}}}^{k} = S_{\bm{\mu}_{\overline{k'N}}}^{k'}, \ \forall a_{1},...,a_{N-1}. 
\end{equation}
Consider the case $k = 1$ and $k' = 2$. Then we have for a fixed $a_{3},...,a_{N-1}$,
\begin{equation}
    S_{(a_{2},a_{3},...,a_{N-1})}^{1} = S_{(a_{1},a_{3},a_{4},...,a_{N-1})}^{2}, \ \forall a_{2}.
\end{equation}
Since the RHS is independent of $a_{2}$, we must have that $S^{1}_{(a_{2},a_{3},...,a_{N-1})} \equiv S^{1}_{(a_{3},...,a_{N-1})}$, i.e., the LHS is also independent of $a_{2}$. We can apply the same argument for $k' = 3,4,...,N-1$, deducing that 
\begin{equation}
    S_{(a_{2},a_{3},...,a_{N-1})}^{1} \equiv S,
\end{equation}
which is independent of the outcome string $\bm{\mu}_{\overline{1N}} = (a_{2},a_{3},...,a_{N-1})$.
There is nothing unique about choosing $k = 1$, allowing us to write
\begin{multline}
    \rho_{\bm{R}EE'|\bm{0}}^{k} =  \sum_{a_{k},a_{N},\bm{\mu}_{\overline{kN}}} p(a_{k},a_{N},\bm{\mu}_{\overline{kN}}|\bm{0}) \\ \cdot \ketbra{a_{k}a_{N}\bm{\mu}_{\overline{kN}}}{a_{k}a_{N}\bm{\mu}_{\overline{kN}}}_{\bm{R}} \\  \otimes  \sum_{\bm{f}\in S}p_{\bm{f}} \ketbra{\lambda^{\bm{f}}_{\bm{\mu}_{\overline{kN}},k}}{\lambda^{\bm{f}}_{\bm{\mu}_{\overline{kN}},k}}_{E}\otimes \ketbra{\bm{f}}{\bm{f}}_{E'},
\end{multline}
Hence Eve can learn nothing about any of the outcomes by measuring her register $E'$. What remains is to deal with the vectors $\ket{\lambda^{\bm{f}}_{\bm{\mu}_{\overline{kN}},k}}$. We will show that the set $\{\ket{\lambda^{\bm{f}}_{\bm{\mu}_{\overline{kN}},k}}\}_{\bm{\mu}_{\overline{kN}}}$ contains one linearly independent vector, hence Eve can learn nothing about the outcome $\bm{\mu}_{\overline{kN}}$ by measuring her register $E$.

Using the same approach as before, \cref{eq:densityEq} now implies, for every $\bm{f} \in S$,
\begin{multline}
    \sum_{a_{k}a_{N}\bm{\mu}_{\overline{kN}}} \bra{\Psi}\tilde{P}_{\bm{\mu}_{\overline{kN}}|\bm{0}}^{\overline{(k,N)}} \ket{\Psi} \bra{\Phi} \Big( P_{a_{k}|0}^{(k)} \otimes P_{a_{N}|0}^{(N)}\Big) \ket{\Phi} \\ \cdot \ketbra{a_{k}a_{N}\bm{\mu}_{\overline{kN}}}{a_{k}a_{N}\bm{\mu}_{\overline{kN}}} \otimes \ketbra{\lambda^{\bm{f}}_{\bm{\mu}_{\overline{kN}},k}}{\lambda^{\bm{f}}_{\bm{\mu}_{\overline{kN}},k}} \\
    = \sum_{a_{k'}a_{N}\bm{\mu}_{\overline{k'N}}} \bra{\Psi}\tilde{P}_{\bm{\mu}_{\overline{k'N}}|\bm{0}}^{\overline{(k',N)}} \ket{\Psi} \bra{\Phi} \Big( P_{a_{k'}|0}^{(k')} \otimes P_{a_{N}|0}^{(N)}\Big) \ket{\Phi} \\ \cdot \ketbra{a_{k'}a_{N}\bm{\mu}_{\overline{k'N}}}{a_{k'}a_{N}\bm{\mu}_{\overline{k'N}}} \otimes \ketbra{\lambda^{\bm{f}}_{\bm{\mu}_{\overline{k'N}},k'}}{\lambda^{\bm{f}}_{\bm{\mu}_{\overline{k'N}},k'}}. \label{eq:denstConst2}
\end{multline}
Now equating for a given value of $\bm{a} = (a_{1},...,a_{k},...,a_{k'},...,a_{N})$, we obtain
\begin{multline}
    \bra{\Psi}\tilde{P}_{\bm{\mu}_{\overline{kN}}|\bm{0}}^{\overline{(k,N)}} \ket{\Psi} \bra{\Phi} \Big( P_{a_{k}|0}^{(k)} \otimes P_{a_{N}|0}^{(N)}\Big) \ket{\Phi} \\ \cdot \ketbra{\lambda^{\bm{f}}_{\bm{\mu}_{\overline{kN}},k}}{\lambda^{\bm{f}}_{\bm{\mu}_{\overline{kN}},k}} \\ = \bra{\Psi}\tilde{P}_{\bm{\mu}_{\overline{k'N}}|\bm{0}}^{\overline{(k',N)}} \ket{\Psi} \bra{\Phi} \Big( P_{a_{k'}|0}^{(k)} \otimes P_{a_{N}|0}^{(N)}\Big) \ket{\Phi} \\ \cdot \ketbra{\lambda^{\bm{f}}_{\bm{\mu}_{\overline{k'N}},k'}}{\lambda^{\bm{f}}_{\bm{\mu}_{\overline{k'N}},k'}}.
\end{multline}
We can hence conclude the two vectors are linearly dependent,
\begin{equation}
    \ket{\lambda^{\bm{f}}_{\bm{\mu}_{\overline{kN}},k}} \sim \ket{\lambda^{\bm{f}}_{\bm{\mu}_{\overline{k'N}},k'}}, 
\end{equation}
where $\sim$ denotes equality up to a constant\footnote{Note it is sufficient to just consider the cases for which the probabilities $\bra{\Psi}\tilde{P}_{\bm{\mu}_{\overline{kN}}|\bm{0}}^{\overline{(k,N)}} \ket{\Psi} \bra{\Phi} \Big( P_{a_{k}|0}^{(k)} \otimes P_{a_{N}|0}^{(N)}\Big) \ket{\Phi}$ are nonzero. If they are zero, then the vector $\ket{\lambda^{\bm{f}}_{\bm{\mu}_{\overline{kN}},k}}$ will not appear in the summation, and Eve will never observe it.}. Notice that on the left hand side of the above equation, one has the freedom to choose any value of $a_{k'}$ without changing the right hand side up to a constant. Therefore 
\begin{equation}
    \ket{\lambda^{\bm{f}}_{\bm{\mu}_{\overline{kN}}|a_{k'}=0,k}} \sim \ket{\lambda^{\bm{f}}_{\bm{\mu}_{\overline{kN}}|a_{k'}=1,k}}.
\end{equation}
There are $2^{N-3}$ such equations written above, corresponding to the different choices of $\bm{\mu}_{\overline{kN}}$ once $a_{k'}$ is fixed. We have $2^{N-2}$ unknowns that we want to fix, hence this is sufficient to show linear dependence for the $N=3$ case. If $N > 3$, we can select another party $k''$ not equal to $k,k',N$ and add another $2^{N-3}$ equations, 
\begin{equation}
    \ket{\lambda^{\bm{f}}_{\bm{\mu}_{\overline{kN}},k}} \sim \ket{\lambda^{\bm{f}}_{\bm{\mu}_{\overline{k''N}},k''}}, 
\end{equation}
allowing us to write
\begin{align}
    \ket{\lambda^{\bm{f}}_{\bm{\mu}_{\overline{kN}}|a_{k'}=0,k}} &\sim \ket{\lambda^{\bm{f}}_{\bm{\mu}_{\overline{kN}}|a_{k'}=1,k}} \nonumber \\
    \ket{\lambda^{\bm{f}}_{\bm{\mu}_{\overline{kN}}|a_{k''}=0,k}} &\sim \ket{\lambda^{\bm{f}}_{\bm{\mu}_{\overline{kN}}|a_{k''}=1,k}}.
\end{align}
The combination of the two will solve the $N=4$ case. We can see that for the first choice of $k'$, we get $2^{N-3}$ equations, which halves the number of unknowns to $2^{N-3}$. Each new choice of $k'$, of which there are $N-2$ in total, halves the number of unknowns, until we are left with one, which shows that every vector can be written as a scalar multiple of a fixed vector, i.e.,
\begin{equation}
    \ket{\lambda^{\bm{f}}_{\bm{\mu}_{\overline{kN}},k}} = \beta_{\bm{\mu}_{\overline{kN}}}^{k} \ket{\lambda^{\bm{f}}_{k}}, \label{eq:lamEq}
\end{equation}
for some constants $\beta_{\bm{\mu}_{\overline{kN}}}^{k}$, which, due to normalization, must be equal to a phase. Substituting the above into \eqref{eq:denstConst2} (noting that any phase factors will cancel when taking the projector), we find $\ket{\lambda^{\bm{f}}_{k}} = \ket{\lambda^{\bm{f}}_{k'}}$, hence we can drop the $k$ label.

Now we finally conclude
\begin{multline}
    \rho_{\bm{R}EE'|\bm{0}} \equiv \rho_{\bm{R}EE'|\bm{0}}^{k} =  \Bigg( \sum_{a_{k},a_{N},\bm{\mu}_{\overline{kN}}} p(a_{k},a_{N},\bm{\mu}_{\overline{kN}}|\bm{0}) \\ \cdot  \ketbra{a_{k}a_{N}\bm{\mu}_{\overline{kN}}}{a_{k}a_{N}\bm{\mu}_{\overline{kN}}}_{\bm{R}} \Bigg) \\ \otimes \Bigg(   \sum_{\bm{f}\in S} p_{\bm{f}} \ketbra{\lambda^{\bm{f}}}{\lambda^{\bm{f}}}_{E} \otimes \ketbra{\bm{f}}{\bm{f}}_{E'} \Bigg),  \label{eq:finalRho2}
\end{multline}
which is of the desired tensor product form, and completes the proof. 
\end{proof}

\subsection{Proof of \cref{lem:rate}}

\noindent \textbf{Corollary 1.} \textit{Let $I,\eta_{N}^{\mathrm{Q}}$ be defined as in \cref{lem:BIent}. Then 
    \begin{equation}
        R_{I}(\eta_{N}^{\mathrm{Q}})
        = H(\{ p(\bm{a}|\bm{0}) \}),
    \end{equation}}
    where $H(\{p_{i}\})$ is the Shannon entropy of a distribution $\{p_{i}\}_{i}$.

\begin{proof}
\cref{lem:rate} directly follows from \cref{lem:BIent} since the state on which the entropy is evaluated has Eve's part in tensor product. This implies, for all compatible post measurement states, $H(\bm{R}|\bm{X}=\bm{0},E)_{\rho_{\bm{R}E|\bm{0}}} = H(\bm{R}|\bm{X}=\bm{0})_{\rho_{\bm{R}|\bm{0}}}$. Since $\rho_{\bm{R}|\bm{0}}$, derived explicitly in the proof of \cref{lem:BIent}, is the same for all compatible states and measurements, we obtain the claim:
\begin{multline}
    \inf_{\substack{\ket{\Psi}_{\tilde{\bm{Q}}E},\big\{\{\tilde{P}_{a_{k}|x_{k}}^{(k)}\}_{a_{k}} \big\}_{k} \\ \mathrm{compatible \ with} \ \langle I \rangle  \\ = \eta^{\mathrm{Q}}_{N} }} H(\bm{R}|\bm{X}=\bm{0},E)_{\rho_{\bm{R}E|\bm{0}}} \\
        = H(\{ p(a_{k},a_{l},\bm{\mu}|\bm{0}) \}).
\end{multline}
\end{proof}

\section{Proofs and details for DI randomness certification} 
Recall the bipartite Bell-inequalities $I_\theta$ which are defined as
\begin{multline}
    \langle I_{\theta}^{(k,l)} \rangle = \cos \theta \cos 2\theta \langle A_{0}^{(k)}A_{0}^{(l)}\rangle - \\ \cos 2\theta \big( \langle A_{0}^{(k)}A_{1}^{(l)} \rangle + \langle A_{1}^{(k)}A_{0}^{(l)}\rangle \big) + \cos \theta \langle A_{1}^{(k)}A_{1}^{(l)} \rangle
\end{multline}
for $\theta \in \mathcal{G}$ where
\begin{equation}
    \mathcal{G} = (\pi/4,\pi/2) \cup (\pi/2,3\pi/4) \cup (5\pi/4,3\pi/2) \cup (3\pi/2,7\pi/4).
\end{equation}
We now prove a few results about these Bell-inequalities that were stated in the main text. 
\subsection{Proof of \cref{lem:NevenBI}}
\label{app:lem5proof}
\noindent \textbf{Lemma 4.} \textit{Let $\bm{\mu}$ be a tuple of $N-2$ measurement outcomes for all parties excluding $k,l$, and $n_{\bm{\mu}} \in \{0,1\}$ be the parity of $\bm{\mu}$. Let $\theta \in \mathcal{G}$ and $\{ I_{\bm{\mu}}^{(k,l)}\}_{\bm{\mu}}$ be a set of bipartite Bell expressions between parties $k,l$, where
\begin{equation}
    I_{\bm{\mu}}^{(k,l)} = (-1)^{n_{\bm{\mu}}}I_{\theta}^{(k,l)} .
\end{equation}
Then the expanded Bell expression given by
\begin{equation}
    I_{\theta} = \sum_{k = 1}^{N-1} \Bigg( \sum_{\bm{\mu}} \tilde{P}^{\overline{(k,N)}}_{\bm{\mu}|\bm{0}}I_{\bm{\mu}}^{(k,N)}\Bigg) \label{eq:NthetaBI_app}
\end{equation}
has quantum bounds $\pm\eta_{N,\theta}^{\mathrm{Q}}$ where $\eta_{N,\theta}^{\mathrm{Q}} = 2(N-1)\sin^{3}\theta$. Moreover, $\langle I_{\theta} \rangle = \eta_{N,\theta}^{\mathrm{Q}}$ is achieved up to relabelings by the strategy in \cref{eq:evenStrat}, and cannot be achieved classically. }

\begin{proof}
The above is an example of an expanded Bell expression, constructed according to \cref{def:expandedBI}, where we used the bipartite expression in \cref{lem:n2} as a seed, and chose $c_{k,l} = 1$ if $l = N$ and $k \in\{1,..,N-1\}$, and 0 otherwise. First we notice that if the bound $\eta_{N,\theta}^{\mathrm{Q}}$ is achievable, then $I_{\theta}$ must define a nontrivial Bell inequality. This is because the maximum local value is upper bounded by $(N-1)\eta_{\theta}^{\mathrm{L}} < \eta_{N,\theta}^{\mathrm{Q}}$, since $\eta_{\theta}^{\mathrm{L}} < \eta_{\theta}^{\mathrm{Q}}$. A similar argument holds for the minimum local value.

We now proceed to show $\langle I \rangle = \eta_{N,\theta}^{\mathrm{Q}}$ is achieved by the strategy in \cref{eq:evenStrat}. Notice that, for the ideal operators $P_{a_{k'}|0}^{(k')} = \ketbra{+}{+}$ ($\ketbra{-}{-}$) if $a_{k'} = 0$ ($a_{k'} = 1$), 
\begin{equation}
    P^{\overline{(k,N)}}_{\bm{\mu}|\bm{0}}\ket{\psi_{\mathrm{GHZ}}} = \sqrt{\bra{\psi_{\mathrm{GHZ}}}P^{\overline{(k,N)}}_{\bm{\mu}|\bm{0}}\ket{\psi_{\mathrm{GHZ}}}}\ket{\phi_{\bm{\mu}}} \otimes \ket{\Phi_{\bm{\mu}}},
\end{equation}
where $\ket{\phi_{\bm{\mu}}}$ spans the support of $P^{\overline{(k,N)}}_{\bm{\mu}|\bm{0}}$, and $\ket{\Phi_{\bm{\mu}}} = (\ket{00}+i(-1)^{n_{\bm{\mu}}}\ket{11})/\sqrt{2}$. We then have 
\begin{align}
    \sum_{\bm{\mu}} &\bra{\psi_{\mathrm{GHZ}}}P^{\overline{(k,N)}}_{\bm{\mu}|\bm{0}} \otimes I_{\bm{\mu}}^{(k,N)}\ket{\psi_{\mathrm{GHZ}}} \nonumber \\ &= \sum_{\bm{\mu}} \bra{\psi_{\mathrm{GHZ}}}P_{\bm{\mu}|\bm{0}}^{\overline{(k,N)}}\ket{\psi_{\mathrm{GHZ}}}  \bra{\Phi_{\bm{\mu}}} I_{\bm{\mu}}^{(k,N)}\ket{\Phi_{\bm{\mu}}}  \nonumber \\ &= 2 \sin^{3}\theta, \ \forall k \in \{1,...,N-1\},
\end{align}
where the final equality comes from the fact that when $n_{\bm{\mu}} = 0$, we recover the state and measurements in \cref{eq:thSt} which yields the maximum quantum value of $I^{(k,N)}_{\theta}$, and when $n_{\bm{\mu}} = 1$, we obtain a strategy equivalent to \cref{eq:thSt} up to a relabelling, which yields the maximum quantum value of $-I^{(k,N)}_{\theta}$. 

Since the maximum quantum bound is achievable, a nontrivial Bell inequality follows. The minimum quantum value $-\eta_{N,\theta}^{\mathrm{Q}}$ is simply achieved by relabelling the outcomes.
\end{proof}

\subsection{Proof of \cref{prop:theta}} \label{app:prop1}
\noindent \textbf{Proposition 3.} \textit{For even $N$, let
    \begin{equation}
        \theta^{*}_{N} = \frac{2\pi t_{N}}{N+1},
    \end{equation}
    where $t_{N}$ is the $(N/2)^{\mathrm{th}}$ element of the sequence $1,1,5,7,3,3,11,13,5,5,...$ given by
    \begin{multline}
        t_{N} = \begin{cases}
            N/4 + 1/2, \  \mathrm{if} \ N = 8n + 2, \\
            N/4,\  \mathrm{if} \ N = 8n + 4, \\
            3N/4 + 1/2,\  \mathrm{if} \ N = 8n + 6, \\
            3N/4 + 1,\  \mathrm{if} \ N = 8n + 8, \ n \in \mathbb{N}_{0}.
        \end{cases} \label{eq:m_app}
    \end{multline}
    Then $\theta^{*}_{N} \in \mathcal{G}$.}

\begin{proof}

The sequence $\theta_N^*$ splits naturally into four subsequences for $n \in \mathbb{N}_{0}$
\begin{equation}
    \begin{aligned}
        \theta_{8n +2}^* &= \frac{2 \pi (2 n +1)}{8 n + 3} \qquad 
        &\theta_{8n+4}^* = \frac{2 \pi (2 n + 1)}{8 n + 5} \\
        \theta_{8n+6}^* &= \frac{2 \pi (6 n + 5)}{8 n + 7} \qquad 
        &\theta_{8n + 8}^* = \frac{2 \pi (6 n + 7)}{8 n + 9}\,.
    \end{aligned}
\end{equation}
First notice that the first terms in each of these subsequences are contained in one of the subintervals of $\mathcal{G}$
\begin{equation}
    \begin{aligned}
        \theta_{2}^* &\in  (\pi/2, 3\pi/4) \qquad 
        &\theta_{4}^* \in (\pi/4, \pi/2) \\
        \theta_{6}^* &\in (5\pi/4, 3 \pi/2) \qquad 
        &\theta_{8}^* \in (3\pi/2, 7\pi/4)\,.
    \end{aligned}
\end{equation}
Looking at the ratio of subsequent terms in each subsequence 
\begin{equation*}
    \begin{aligned}
        \frac{\theta_{8n +2}^*}{\theta_{8n+10}^{*}} &= \frac{16 n^2 + 30 n + 11}{16 n^2 + 30 n + 9} \quad \frac{\theta_{8n +4}^*}{\theta_{8n+12}^*} = \frac{16 n^2 + 34 n + 13}{16 n^2 + 34 n + 15} \\
        \frac{\theta_{8n +6}^*}{\theta_{8n+14}^*} &= \frac{48 n^2 + 130 n + 75}{48 n^2 + 130 n + 77} \quad 
        \frac{\theta_{8n +8}^*}{\theta_{8n+16}^*} = \frac{48 n^2 + 158 n + 119}{48 n^2 + 158 n + 117}\,
    \end{aligned}
\end{equation*}
we see that the subsequences $\theta_{8n +2}^*$ and $\theta_{8n+8}^*$ are strictly monotonically decreasing sequences whereas $\theta_{8n +4}^*$ and $\theta_{8n+6}^*$ are strictly monotonically increasing sequences. Furthermore, their limits are
\begin{equation}
    \begin{aligned}
        \lim_{n\to \infty} \theta_{8n + 2}^* &= \lim_{n\to \infty} \theta_{8n + 4}^* = \pi/2 \\
        \lim_{n\to \infty} \theta_{8n + 6}^* &= \lim_{n\to \infty} \theta_{8n + 8}^* = 3\pi/2\,. \label{eq:asymp}
    \end{aligned}
\end{equation}
The limits correspond to the boundaries of the respective subinterval of $\mathcal{G}$ that contain the first terms of the subsequences. As they converge strictly monotonically to these limits, every term in the subsequences is contained within their respective subintervals and hence $\theta_N^* \in \mathcal{G}$ for all even $N$.
\end{proof}

\subsection{Proof of \cref{prop:even1,prop:odd1}} \label{app:evi2}

\noindent \textbf{Proposition 4.} \textit{Let $N$ be an even integer. For every MABK value $s$ in the range $(1,m_{N}^{*}]$, there exists a $\theta_{s} \in \mathcal{G}$ that satisfies $s = \langle M_{N}(\theta_{s})\rangle$.}
\begin{proof}
    We examine each subsequence independently. 
    
    \vspace{0.2cm}

    \noindent \textbf{Case 1.} For $N = 8n + 2$, recall that $\theta_{N}^{*} \in (\pi/2,3\pi/4)$. When $n=0$, the claim can be seen from \cref{fig:merm1}, which indicates MABK values below $1$ for $\theta\in(\pi/2,3\pi/4)$, and the fact that $M_N(\theta)$ is continuous.  We hence consider $n\geq 1$ in the following. First note
    \begin{equation}
        \langle M_{8n+2}(\pi/2)\rangle = 2^{4n}.
    \end{equation}
    Consider the point $\tilde{\theta}_{n} = \pi/2 - \pi/(4n+1) > \pi/4$. We have
    \begin{multline}
        \langle M_{8n+2}(\tilde{\theta}_{n})\rangle = -2^{4n} \Big( \sin^{8n+2}\Big( \frac{\pi}{8n+2}\Big) \\+ \cos^{8n+2}\Big( \frac{\pi}{8n+2}\Big) \Big) < 0.
    \end{multline}
    Therefore, by continuity of the function $\langle M_{N}(\theta)\rangle$ in $\theta$, the range of achievable MABK values corresponding to $\theta \in (\tilde{\theta}_{n},\pi/2) \subset \mathcal{G}$ contains the sub-interval $(1,2^{4n})$. By a similar argument, the sub-interval $(2^{4n},m_{8n+2}^{*})$ must also be achievable for $\theta \in (\pi/2,\theta_{8n+2}^{*}) \subset \mathcal{G}$. For the value $2^{4n}$, notice
    \begin{equation}
        \langle M_{8n+2}(3\pi/4)\rangle = 2^{(8n+1)/2}(-1)^{n}\sin^{8n+2}(3\pi/8),
    \end{equation}
    and $\sin^{8n+2}(3\pi/8) < 1/\sqrt{2}$. Hence there must exist a $\theta \in (\theta_{8n+2}^{*},3\pi/4) \subset \mathcal{G}$ such that $\langle M_{8n+2}(\theta)\rangle = 2^{4n}$.

    \vspace{0.2cm}

    \noindent \textbf{Case 2.} For $N = 8n + 4$, recall $\theta_{N}^{*} \in (\pi/4,\pi/2)$. As before, we only need to consider $n \geq 1$ by analyzing \cref{fig:merm1}. Let $\tilde{\theta}_{n} = \pi/2 + \pi/(4n+2) < 3\pi/4$. Then
    \begin{multline}
        \langle M_{8n+4}(\tilde{\theta}_{n})\rangle = 2^{4n+1} \Big( \sin^{8n+4}\Big( \frac{\pi}{8n+4}\Big) \\- \cos^{8n+4}\Big( \frac{\pi}{8n+4}\Big) \Big) < 0.
    \end{multline}
    We also note
    \begin{equation}
        \langle M_{8n+4}(\pi/2)\rangle = 2^{4n+1},
    \end{equation}
    and
    \begin{multline}
        \langle M_{8n+4}(\pi/4)\rangle = 2^{4n+1}(-1)^{n}\big(\cos^{8n+4}(3\pi/8)\\ +\sin^{8n+4}(3\pi/8)\big) < 2^{4n+1}. 
    \end{multline}
    We can therefore apply the same arguments as in the previous case. 

    \vspace{0.2cm}

    \noindent \textbf{Case 3.} For $N = 8n + 6$, recall $\theta_{N}^{*} \in (5\pi/4,3\pi/2)$. We observe that the claim holds for $n = 0$ by examining \cref{fig:merm1}, and consider $n\geq 1$. Let $\tilde{\theta}_{n} = 3\pi/2 + \pi/(4n+3)$, and note
    \begin{multline}
        \langle M_{8n+6}(\tilde{\theta}_{n})\rangle = -2^{4n+2} \Big( \sin^{8n+6}\Big( \frac{\pi}{8n+6}\Big) \\+ \cos^{8n+6}\Big( \frac{\pi}{8n+6}\Big) \Big) < 0.
    \end{multline}
    By checking
    \begin{equation}
        \langle M_{8n+6}(3\pi/2)\rangle = 2^{4n+2},
    \end{equation}
    and
    \begin{multline}
        \langle M_{8n+6}(5\pi/4)\rangle \\ = -2^{(8n+5)/2}(-1)^{n}\sin^{8n+6}(\pi/8) < 2^{4n+2},
    \end{multline}
    we can apply the same arguments as case 1. 

    \vspace{0.2cm}

    \noindent \textbf{Case 4.} Finally, for $N = 8n + 8$, we have $\theta_{N}^{*} \in (3\pi/2,7\pi/4)$. The case $n=0$ can be seen by examining \cref{fig:merm1}, and for $n \geq 1$ we define $\tilde{\theta}_{n} = 3\pi/2 - \pi/(4n+4) > 5\pi/4$. Then
    \begin{multline}
        \langle M_{8n+8}(\tilde{\theta}_{n})\rangle = 2^{4n+3} \Big( \sin^{8n+8}\Big( \frac{\pi}{8n+8}\Big) \\- \cos^{8n+8}\Big( \frac{\pi}{8n+8}\Big) \Big) < 0.
    \end{multline}
    We also have
    \begin{equation}
        \langle M_{8n+8}(3\pi/2) \rangle = 2^{4n+3},
    \end{equation}
    and
    \begin{multline}
        \langle M_{8n+8}(7\pi/4) \rangle = 2^{4n+3}(-1)^{n+1}\big( \cos^{8n+8}(\pi/8) \\ - \sin^{8n+8}(\pi/8)\big) < 2^{4n+3}.
    \end{multline}
    By the arguments in case 1, this completes the proof. 
\end{proof}

\noindent \textbf{Proposition 5.} \textit{Let $N$ be an odd integer. For every MABK value $s$ in the range $(1,2^{(N-1)/2})$, there exists a $\theta_{s} \in \mathcal{G}$ that satisfies $s = \langle M_{N}(\theta_{s})\rangle$.}
\begin{proof}
    As was done for the even case, we consider four subsequences.

    \vspace{0.2cm}

    \noindent \textbf{Case 1.} For $N = 8n + 3$, we note $\langle M_{8n+3}(\pi/2)\rangle = 2^{4n+1}$. When $n=0$, the claim can be verified from \cref{fig:merm2}, hence we consider $n \geq 1$. Consider the point $\tilde{\theta}_{n} = \pi/2 + \pi/(4n+2) \in (\pi/2,3\pi/4) \subset \mathcal{G}$. Then 
    \begin{multline}
        \langle M_{N}(\tilde{\theta}_{n})\rangle = -2^{4n+1}\Big(\sin^{8n+3}\Big( \frac{\pi}{8n+4}\Big) \sin\big(\pi f(n) \big) \\ + \cos^{8n+3}\Big(\frac{\pi}{8n+4}\Big) \cos\big( \pi f(n) \big)\Big),
    \end{multline}
    where $f(n) = \frac{16n^{2} + 24n + 7}{8n+4}$. Notice that
    \begin{multline}
        \sin\big(\pi f(n) \big) + \sin\Big (\frac{\pi}{8n+4}\Big) \\ = 2\sin\big(\pi(n+1)\big)\cos\Big( \frac{\pi}{2}\frac{8n^{2}+12n + 3}{4n+2}\Big) = 0.
    \end{multline}
    Similarly 
    \begin{multline}
        \cos\big(\pi f(n) \big) - \cos\Big (\frac{\pi}{8n+4}\Big) \\ = -2\sin\big(\pi(n+1)\big)\sin\Big( \frac{\pi}{2}\frac{8n^{2}+12n + 3}{4n+2}\Big) = 0.
    \end{multline}
    Hence 
    \begin{multline}
        \langle M_{N}(\tilde{\theta}_{n})\rangle = 2^{4n+1}\Big( \sin^{8n+4}\Big( \frac{\pi}{8n+4}\Big) \\ - \cos^{8n+4}\Big(\frac{\pi}{8n+4}\Big)\Big)<0.
    \end{multline}
    Therefore, the achievable MABK values from the interval $(\pi/2,\tilde{\theta}_{n}) \subset \mathcal{G}$ include the sub-interval $(2^{4n+1},1)$ by continuity of the function $\langle M_{N}(\theta)\rangle$ in $\theta$. 

    \vspace{0.2cm}

    \noindent \textbf{Case 2.} For $N = 8n + 7$, note $\langle M_{8n+7}(3\pi/2)\rangle = 2^{4n+3}$. The claim for $n=0$ can be verified using \cref{fig:merm2}, so we consider $n \geq 1$. Let $\tilde{\theta}_{n} = 3\pi/2 + \pi/(4n+4)< 7\pi/4$, and note
    \begin{multline}
        \langle M_{N}(\tilde{\theta}_{n})\rangle = 2^{4n+3}\Big(\sin^{8n+7}\Big( \frac{\pi}{8n+8}\Big) \cos\big(\pi f(n) \big) \\ - \cos^{8n+7}\Big(\frac{\pi}{8n+8}\Big) \sin\big( \pi f(n) \big)\Big),
    \end{multline}
    where $f(n) = \frac{48 n^{2} + 100 n + 51}{8n+8}$. Following the same steps as before, we have
    \begin{equation}
        \sin\big(\pi f(n) \big) - \cos\Big (\frac{\pi}{8n+8}\Big) = 0,
    \end{equation}
    and
    \begin{equation}
        \cos\big(\pi f(n) \big) - \sin\Big (\frac{\pi}{8n+8}\Big) = 0,
    \end{equation}
    implying 
    \begin{multline}
        \langle M_{N}(\tilde{\theta}_{n})\rangle = 2^{4n+3}\Big(\sin^{8n+8}\Big( \frac{\pi}{8n+8}\Big)  \\ - \cos^{8n+8}\Big(\frac{\pi}{8n+8}\Big)\Big)< 0.
    \end{multline}
    The same arguments used in case 1 prove the claim. 

    \vspace{0.2cm}

    \noindent \textbf{Case 3.} For $N = 8n + 5$, we consider the MABK expression obtained by relabelling the inputs of every party, followed by the output of every party's first measurement, i.e., $\tilde{A}_{0}^{(k)} \mapsto \tilde{A}_{1}^{(k)}$ and $\tilde{A}_{1}^{(k)} \mapsto -\tilde{A}_{0}^{(k)}$. The resulting MABK value of the strategy in \cref{eq:evenStrat} is given by\footnote{Rather than relabelling the MABK expression, we could equivalently consider the strategy in \cref{eq:evenStrat_2}, which achieves the MABK value in \cref{eq:MABKval2} of the original expression (\cref{eq:MABK}), with the substitution $\phi \mapsto \theta, \ \theta \mapsto 0$. This also results in \cref{eq:MABK3}.}  
    \begin{multline}
        \langle \tilde{M}_{N}(\theta) \rangle = 2^{\frac{N-1}{2}}\Big(  \cos^{N}\big(\theta/2 + \pi/4\big) \sin \big(-N\theta/2 + \pi/4 \big) \\ - \cos^{N}\big(\theta/2 - \pi/4\big) \sin \big(N\theta/2 + \pi/4 \big) \Big).\label{eq:MABK3}
    \end{multline}
    We then have $\langle \tilde{M}_{8n+5}(\pi/2) \rangle = 2^{4n+2}$. We can verify the claim for $n=0$ from \cref{fig:merm2}, hence we consider $n \geq 1$. Let $\tilde{\theta} = \pi/2+\pi/(4n+3) < 3\pi/4$, and note
    \begin{multline}
        \langle \tilde{M}_{N}(\tilde{\theta}_{n})\rangle = -2^{4n+2}\Big(\sin^{8n+5}\Big( \frac{\pi}{8n+6}\Big) \cos\big(\pi f(n) \big) \\ +8 \cos^{8n+5}\Big(\frac{\pi}{8n+6}\Big) \sin\big( \pi f(n) \big)\Big),
    \end{multline}
    where $f(n) = \frac{8n^{2} + 16n + 7}{4n + 3}$. We have
    \begin{equation}
        \sin\Big( \frac{\pi}{8n+6} \Big) - \cos(\pi f(n)) = 0,
    \end{equation}
    and
    \begin{equation}
        \cos\Big( \frac{\pi}{8n+6} \Big) - \sin(\pi f(n)) = 0,
    \end{equation}
    which implies 
    \begin{multline}
        \langle \tilde{M}_{N}(\tilde{\theta}_{n})\rangle = -2^{4n+2}\Big(\sin^{8n+6}\Big( \frac{\pi}{8n+6}\Big) \\ + \cos^{8n+6}\Big(\frac{\pi}{8n+6}\Big)\Big) <0.
    \end{multline}
    The claim then follows from the same arguments as the previous cases. 

    \vspace{0.2cm}

    \noindent \textbf{Case 4.} For $N = 8n + 9$, we also consider the relabelled MABK expression in \cref{eq:MABK3}, and note $\langle \tilde{M}_{8n+5}(3\pi/2) \rangle = 2^{4n+4}$. Let $\tilde{\theta}_{n} = 3\pi/2 + \pi/(4n + 5) < 7\pi/4$, and consider
    \begin{multline}
        \langle \tilde{M}_{N}(\tilde{\theta}_{n})\rangle = 2^{4n+4}\Big(\sin^{8n+9}\Big( \frac{\pi}{8n+10}\Big) \sin\big(\pi f(n) \big) \\ - \cos^{8n+9}\Big(\frac{\pi}{8n+10}\Big) \cos\big( \pi f(n) \big)\Big),
    \end{multline}
    where $f(n) = \frac{48 n^{2} + 124 n + 79}{8n + 10}$. We have
    \begin{equation}
        \cos \Big( \frac{\pi}{8n + 10} \Big) - \cos(\pi f(n)) = 0,
    \end{equation}
    and
    \begin{equation}
        \sin \Big( \frac{\pi}{8n + 10} \Big) + \sin(\pi f(n)) = 0,
    \end{equation}
    implying 
    \begin{multline}
        \langle \tilde{M}_{N}(\tilde{\theta}_{n})\rangle = -2^{4n+4}\Big(\sin^{8n+10}\Big( \frac{\pi}{8n+10}\Big) \\ + \cos^{8n+10}\Big(\frac{\pi}{8n+10}\Big)\Big) < 0.
    \end{multline}
    Following the same steps as before establishes the claim. 
\end{proof}

\subsection{Evidence for \cref{conj:maxMerm}} \label{app:evC1}

\noindent \textbf{Conjecture 1.} \textit{The maximum MABK value achievable by quantum strategies that generate maximum randomness, i.e., strategies with $p(\bm{a}|\bm{0}) = 1/2^{N}, \ \forall \bm{a}$, is $m_{N}^{*}$.}

\vspace{0.2cm}
    
\noindent \cref{conj:maxMerm} is that the analytical lower bound, $m_{N}^{*}$, on the maximum MABK value achievable whilst generating maximum randomness, is optimal, in the sense there no other set of quantum correlations exists with a higher MABK value while achieving maximum randomness. To justify this claim, we developed a numerical technique to upper bound the maximum MABK value of quantum correlations achieving maximum randomness, and checked this against the achievable lower bounds $m_{N}^{*}$. We found close agreement between our numerical calculations and the analytic lower bounds, modulo a small discrepancy which pushed the numerical values slightly below the analytical ones. Since the SDP being solved is a minimization, we judge this to be caused by numerical precision, and an accumulation of errors pushing the global optimum below its true value. A comparison between our analytical lower bound and the numerics can be found in \cref{tab:evidence}, and details of the numerical method can be found in \cref{app:numMethod}. We also remark that this conjecture was proven analytically in the $N=2$ case in Ref.~\cite{WBC}. 

\begin{table*}[t]
\centering
\begin{tabular}{|c| c | c | c|} 
 \hline
 $N$ & Analytical lower bound, $m_{N}^{*}$ & Numerical upper bound  \\ [0.5ex] 
 \hline\hline
 2 & $3\sqrt{3}/4 \approx 1.299038105676658$ & $1.299038105676658$  \\ 
 \hline
 4 & $(5/8)\sqrt{(5/2)(5+\sqrt{5})} \approx 2.658283776100125$ & 2.658283776100125  \\
 \hline
 6 & 
 \begin{tabular}{@{}c@{}}$(7/8)\big(\cos(\pi/14)+3\sqrt{2}\cos(3\pi/28)-3\cos(\pi/7)$ \\ $+5\cos(3\pi/14)\big) \approx 5.412519474140316$\end{tabular}
& 5.412519474140316  \\
 \hline
 8 & 
 \begin{tabular}{@{}c@{}}$45/16+33\sqrt{3}/4 + (45\sqrt{2}/16)(3+\sqrt{3})\cos(\pi/36)$ \\ $- (1/16)(180\sqrt{3}+90)\cos^{2}(\pi/36) \approx 10.93208547970192$\end{tabular}
  & 10.93208547970192  \\
 \hline
 10 & 
 \begin{tabular}{@{}c@{}}$(11/32)\big( \cos(\pi/22)+30\sqrt{2}\cos(3\pi/44)-5\cos(\pi/11)+15\cos(3\pi/22)$ \\ $+5\sqrt{2}\cos(7\pi/44)-30\cos(2\pi/11)+42\cos(5\pi/22) \big) \approx 22.00126183851497$\end{tabular}
  & 22.00126183851497  \\ 
 \hline
 12 & 
 \begin{tabular}{@{}c@{}}$(429\sqrt{2}/16)\cos(\pi/52) + (39\sqrt{2}/32)\cos(9\pi/52)+(715\sqrt{2}/64)\cos(5\pi/52)$ \\ $+(13/64)\cos(\pi/26) + (1287/64)\cos(5\pi/26) -(715/64)\cos(2\pi/13)$ \\$+(143/32)\cos(3\pi/26) -(39/32)\cos(\pi/13)-(429/16)\cos(3\pi/13)\approx 44.19316043036512$\end{tabular}
  & 44.19316043036512\\ 
 \hline
\end{tabular}
\caption{Comparison between upper and lower bounds on the maximum MABK value which can be achieved by quantum correlations certifying maximum randomness, for different numbers of parties $N$, rounded to 16 digits. The analytical lower bound is provided by the family of strategies \cref{eq:evenStrat} in the main text, and the numerical upper bound is calculated with an SOS approach using the SDPA-GMP solver. The numerical upper bounds agree exactly with the analytical lower bounds up to 16 digits of precision, supporting the conjecture that our analytical lower bound is tight. }
\label{tab:evidence}
\end{table*}

\subsection{Proof of \cref{thm:asymp}} \label{app:asympP}
\noindent \textbf{Proposition 6.} \textit{In the limit of large even $N$, one can achieve arbitrarily close to the maximum quantum violation of the $N$ party MABK inequality, $2^{(N-1)/2}$, whilst certifying maximum device-independent randomness.}

\begin{proof}
We consider the asymptotic behaviour of $m_{N}^{*}$, which gives an achievable lower bound on the maximum MABK value compatible with maximum randomness. Our aim is to show that as $N$ becomes very large, whilst remaining an even integer, the MABK violation tends to the maximum quantum violation $2^{(N-1)/2}$, i.e.,
\begin{equation}\label{eq:lim}
    \lim_{\substack{N \to \infty \\ N \ \mathrm{even}}} \frac{m_{N}^{*} }{2^{\frac{N-1}{2}}} =  1.
\end{equation}
This can be established directly from the asymptotic behaviour of $\theta^{*}_{N}$ given in \cref{eq:asymp}, as follows.

We first insert the expression for $m_{N}^{*} = \langle M_{N}(\theta^{*}_{N}) \rangle$ in \cref{eq:mermSth} into the left hand side of~\eqref{eq:lim} to give
\begin{align}\label{lim:eq}
    \lim_{\substack{N \to \infty }} & \Big( \cos^{N}\big(\theta^{*}_{N}/2 + \pi/4\big) \sin \big(N\theta^{*}_{N}/2 + \pi/4 \big) \nonumber \\ &+ \cos^{N}\big(\theta^{*}_{N}/2 - \pi/4\big) \sin \big(N\theta^{*}_{N}/2 - \pi/4 \big) \Big).
\end{align} 

We then consider two cases:

\vspace{0.2cm}

\noindent \textbf{Case 1: $N= 8n +2$ or $N= 8n +4$.} For these values of $N$, we know $\lim_{N \to \infty } \theta_{N}^{*} = \pi/2$. One can then verify 
\begin{equation}
\begin{aligned}
    \lim_{\substack{N \to \infty }}  \cos^{N}\big(\theta_{N}^{*}/2 + \pi/4 \big) = 0, \\
    \lim_{\substack{N \to \infty }}  \cos^{N}\big(\theta_{N}^{*}/2 - \pi/4 \big) = 1.
\end{aligned}
\end{equation}
Now we split into two sub-cases to evaluate $N\theta^{*}_{N}$ in the limit~\eqref{lim:eq}.

\noindent \textbf{1a: $N=8n+2$.} Here we find
\begin{equation}
    \frac{N\theta^{*}_{N}}{2} = \frac{\pi (8n+2)(2n+1)}{8n+3},
\end{equation}
which has the oblique asymptote $2\pi n+3\pi/4$. We hence find
\begin{equation}
    \lim_{\substack{N \to\infty }} \sin \big(N\theta^{*}_{N}/2 - \pi/4\big) = \sin(\pi/2) = 1. \label{eq:case1a}
\end{equation}
\noindent \textbf{1b: $N=8n+4$.} Here we find
\begin{equation}
    \frac{N\theta^{*}_{N}}{2} = \frac{4\pi(2n+1)^2}{8n+5},
\end{equation}
which also has the oblique asymptote $2\pi n + 3\pi/4$, implying \cref{eq:case1a}.

\vspace{0.2cm}

\noindent \textbf{Case 2: $N= 8n +6$ or $N= 8n +8$.} For these values of $N$, we know $\lim_{N \to\infty } \theta_{N}^{*} = 3\pi/2$. One can then verify 
\begin{equation}
\begin{aligned}
    \lim_{\substack{N \to \infty }}  \cos^{N}\big(\theta_{N}^{*}/2 + \pi/4 \big) = 1, \\
    \lim_{\substack{N \to \infty }}  \cos^{N}\big(\theta_{N}^{*}/2 - \pi/4 \big) = 0.
\end{aligned}
\end{equation}
Splitting into two sub-cases:

\noindent \textbf{2a, $N=8n+6$:} Here 
\begin{equation}
     \frac{N\theta^{*}_{N}}{2} = \frac{\pi(8n+6)(6n+5)}{8n+7},
\end{equation}
which has the oblique asymptote $6\pi n + 17 \pi / 4$. Inserting into the limit we find
\begin{equation}
    \lim_{\substack{N \to\infty }} \sin \big(N\theta^{*}_{N}/2 + \pi/4\big) = \sin(9 \pi/2) = 1.
\end{equation}

\noindent \textbf{2b, $N=8n+8$:} For the final case
\begin{equation}
     \frac{N\theta^{*}_{N}}{2} = \frac{\pi (8n+8)(6n+7)}{8n + 9},
\end{equation}
which has the oblique asymptote $6\pi n + 25\pi/4$. Inserting into the limit we find
\begin{equation}
    \lim_{\substack{N \to\infty }} \sin \big(N\theta^{*}_{N}/2 + \pi/4\big) = \sin(13\pi /2) = 1.
\end{equation}

This shows that the MABK violation $m_{N}^{*}$ tends to the maximum quantum value in the regime of large $N$. Note however, in the limit $\theta$ tends to $\pi/2$ or $3\pi/2$; these values are not self-testable, since $\theta$ does not lie in a valid interval for there to be a gap between the local and quantum bound for the two party Bell inequality. However, the points either side of both angles are self-testable. We hence obtain the result that one can achieve arbitrarily close to maximum MABK violation whilst certifying maximum DI randomness, for an arbitrarily large finite $N$. 
\end{proof}

\subsection{Local dilution measure of nonlocality} \label{app:LP}
In this section we briefly outline the definition of local dilution (see also~\cite[Section I.B.1]{tavakoli2024} for a description of this technique) which was plotted in \cref{fig:NLvsTH}. Given a distribution $\{p(\bm{a}|\bm{x})\}$, we can represent it as a vector in a high dimensional space, $p \in \mathbb{R}^{2^{2N}}$. We denote the set of $2^{2N}$ local deterministic distributions as $\{d_{\lambda}\}_{\lambda}$, and the local polytope can be constructed by taking its convex hull, 
\begin{equation}
    \mathcal{L} = \Big\{ \sum_{\lambda}q_{\lambda} d_{\lambda} \ | \ \sum_{\lambda}q_{\lambda} = 1, \ q_{\lambda} \geq 0 \Big\}.
\end{equation}
To check whether $p$ admits a local description, we can search for a set $\{q_{\lambda}\}_{\lambda}$, where $\sum_{\lambda} q_{\lambda} =1$, $q_{\lambda} \geq 0$, using a standard linear program. 

To quantify the distance of a given $p$ from $\mathcal{L}$, we define the diluted sets, for $\epsilon \geq 0$,
\begin{equation}
    \mathcal{L}_{\epsilon} = \Big\{ \sum_{\lambda}q_{\lambda} d_{\lambda} \ | \ \sum_{\lambda}q_{\lambda} = 1, \ q_{\lambda} \geq -\epsilon \Big\}.
\end{equation}
As $\epsilon \rightarrow \infty$, $\mathcal{L}_{\epsilon}$ tends to the affine hull of the points $\{d_{\lambda}\}_{\lambda}$, and we have the inclusion:  $\mathcal{L} \subset \mathcal{L}_{\epsilon_{1}} \subset \mathcal{L}_{\epsilon_{2}} \subset \mathcal{L}_{\infty}$, where $0 < \epsilon_{1} < \epsilon_{2}$. We then define the following measure, which we call the local dilution:
\begin{equation}
    \mathcal{D}[p] := \min \{ \epsilon \ | \ p \in \mathcal{L}_{\epsilon} \}.
\end{equation}
$\mathcal{D}[p]$ can be interpreted as the smallest dilution of the local polytope that contains $p$. If $\mathcal{D}[p_{1}] < \mathcal{D}[p_{2}]$, one has to enlarge the local set more to include $p_{2}$ than to include $p_{1}$. We can hence understand $p_{2}$ to be more nonlocal than $p_{1}$. 

\subsection{Proof of \cref{lem:n2_2}} \label{app:selfTest}
In the main text, we detailed a two parameter family of bipartite Bell expressions which are used to build our $N$-partite expressions. These are generalizations of the two families introduced in \cite{WBC}, and can be recovered as sub-families of the self-tests studied in~\cite{barizien2024,WBC3,Le2023}. 

\vspace{0.2cm}

\noindent \textbf{Lemma 7.} \textit{Let $(\phi,\theta) \in \mathbb{R}^{2}$ such that $\cos(2\theta)\cos(2\phi)<0$ and $\cos(\theta-\phi)\neq0$. Define the family of Bell expressions parameterized by $\phi$ and $\theta$,
\begin{multline}
    \langle J_{\phi,\theta}^{(k,l)} \rangle = \cos 2\theta \cos (\theta - \phi) \langle A_{0}^{(k)}A_{0}^{(l)}\rangle \\- \cos 2\theta \cos 2 \phi \big( \langle A_{0}^{(k)}A_{1}^{(l)} \rangle + \langle A_{1}^{(k)}A_{0}^{(l)} \rangle \big) \\+ \cos 2\phi \cos (\theta - \phi) \langle A_{1}^{(k)}A_{1}^{(l)} \rangle.
\end{multline}
Then we have the following:
\begin{enumerate}[(i)]
    \item The local bounds are given by $\pm\eta_{\phi,\theta}^{\mathrm{L}}$, where 
    \begin{multline}
        \eta_{\phi,\theta}^{\mathrm{L}} = \max \big\{ |\cos(\theta - \phi)(\cos 2\theta  - \cos 2\phi) | , \\ | \cos(\theta - \phi)\big(\cos 2\theta + \cos 2\phi \big) | + |2\cos 2\phi \cos 2 \theta | \big \}.
    \end{multline}
    \item The quantum bounds are given by $\pm\eta_{\phi,\theta}^{\mathrm{Q}}$, where $\eta_{\phi,\theta}^{\mathrm{Q}} = 2\sin^{2}(\theta+\phi)\sin (\theta - \phi)$.
    \item $|\eta^{\mathrm{Q}}_{\phi,\theta}| > \eta^{\mathrm{L}}_{\phi,\theta}$.
    \item Up to local isometries, there exists a unique strategy that achieves $\langle J_{\phi,\theta}^{(k,l)} \rangle = \eta_{\phi,\theta}^{\mathrm{Q}}$:
    \begin{equation}
    \begin{aligned}
        \rho_{Q_{k}Q_{l}} &= \ketbra{\psi}{\psi}, \ \mathrm{where} \ \ket{\psi} = \frac{1}{\sqrt{2}}\big( \ket{00} + i\ket{11}\big),  \\
        A_{0}^{(k)} &= A_{0}^{(l)} = \cos \phi \, \sigma_{X} - \sin \phi \, \sigma_{Y},  \\
        A_{1}^{(k)} &= A_{1}^{(l)} = \cos \theta\, \sigma_{X} + \sin \theta \,\sigma_{Y}. 
    \end{aligned}\label{eq:thPhiSt_app}
    \end{equation}
\end{enumerate}    }

\begin{proof}
    The above is a direct corollary of~\cite[Proposition 1]{WBC3}. Specifically, let $\beta = -2\phi$, $\gamma = \theta - \phi$ and $\alpha = \theta + \phi$. Then the Bell expression $B_{\alpha,\beta,\gamma}$ of~\cite[Proposition 1]{WBC3} is equal to the expression $J_{\phi,\theta}$ up to a factor of $\cos(\theta-\phi)$. The condition for $\eta^{\mathrm{L}}_{\phi,\theta} < |\eta^{\mathrm{Q}}_{\phi,\theta}|$ is given by~\cite[Equation 4]{WBC3}, which reads
    \begin{multline}
        \cos(\alpha + \beta) \cos  \beta \cos(\alpha + \gamma) \cos  \gamma  = \\
        \cos^{2}(\theta - \phi) \cos  2\theta \cos  2 \phi < 0,
    \end{multline}
    and is satisfied if and only if $\cos(2\theta)\cos(2\phi)<0$ and $\cos(\theta-\phi)\neq0$. 
\end{proof}

\cref{lem:n2} can be proven as a special case of the above, by setting $\phi=0$. This follows from $\mathcal{G} \subset \mathcal{F}$, where $\mathcal{F}$ is a set of valid self-testing points defined below \cref{lem:n2_2}.

One can also recover the $I_{\delta}$ and $J_{\gamma}$ family of Bell expressions from \cite{WBC}, as special cases of the $J_{\phi,\theta}$ expressions. By choosing $\phi = 0$, and $\theta = \delta + \pi/2$, $\delta \in (0,\pi/6]$, we obtain
\begin{multline}
    \langle J_{0,\delta + \pi/2} \rangle = \sin(\delta)\cos(2\delta)\langle A_{0}^{(k)}A_{0}^{(l)} \rangle \\ + \cos (2\delta) \big( \langle A_{0}^{(k)}A_{1}^{(l)} \rangle + \langle A_{1}^{(k)}A_{0}^{(l)} \rangle \big) \\ - \sin (\delta) \langle A_{1}^{(k)}A_{1}^{(l)} \rangle \leq 2\cos^{3}(\delta),
\end{multline}
which is identical to the expression found in \cite{WBC}. One can obtain the $J_{\gamma}$ family by choosing $\phi = -3\gamma/2$ and $\theta = \phi/3 + 2\pi/3$ using \cref{eq:thPhi}, and multiplying by $4\cos^{2}(\gamma + \pi/6)/\sin^{2}(2\gamma + \pi/3)$:
\begin{multline}
    \frac{4\cos^{2}(\gamma + \pi/6)}{\sin^{2}(2\gamma + \pi/3)}\langle J_{-3\gamma/2,-\gamma/2 + 2\pi/3} \rangle = \langle A_{0}^{(k)}A_{0}^{(l)} \rangle \\ + \Big( 4\cos^{2}(\gamma + \pi/6) - 1\Big) \Big( \langle A_{0}^{(k)}A_{1}^{(l)} \rangle + \langle A_{1}^{(k)}A_{0}^{(l)} \rangle \\ - \langle A_{1}^{(k)}A_{1}^{(l)} \rangle \Big) \leq 8\cos^{3}(\gamma + \pi/6). 
\end{multline}

\subsection{Proof of \cref{lem:NevenBI_2}} 
\label{app:lem8proof}
\noindent \textbf{Lemma 8.} \textit{Let $\bm{\mu}$ be a tuple of $N-2$ measurement outcomes for all parties excluding $k,l$, and $n_{\bm{\mu}} \in \{0,1\}$ be the parity of $\bm{\mu}$. Let $(\phi,\theta) \in \mathbb{R}^{2}$,
\begin{equation}
    \phi':= \frac{\phi N}{2}, \ \theta' := \theta - \frac{N-2}{2}\phi, 
\end{equation}and 
\begin{equation}
    I_{\bm{\mu}}^{(k,l)} = (-1)^{n_{\bm{\mu}}} J_{\phi',\theta'}^{(k,l)}.
\end{equation}
Define the following Bell polynomial
\begin{equation}
    I_{\phi',\theta'} := \sum_{k = 1}^{N-1} \Bigg( \sum_{\bm{\mu}} \tilde{P}^{\overline{(k,N)}}_{\bm{\mu}|\bm{0}}I_{\bm{\mu}}^{(k,N)}\Bigg).  \label{eq:genGam_app}
\end{equation}
If $(\phi',\theta') \in \mathcal{F}$, $I_{\phi',\theta'}$ is an expanded Bell expression, and has quantum bounds $\pm\eta_{N,\phi',\theta'}^{\mathrm{Q}}$ where $\eta_{N,\phi',\theta'}^{\mathrm{Q}} = 2(N-1)\sin^{2}(\theta'+\phi')\sin (\theta'-\phi')$, which can be achieved up to relabellings by the strategy
\begin{equation}
\begin{gathered}
    \rho_{\bm{Q}} = \ketbra{\psi_{\mathrm{GHZ}}}{\psi_{\mathrm{GHZ}}},  \\
    A_{0}^{(k)} = \cos \phi \, \sigma_{X} - \sin \phi \, \sigma_{Y},  \\
    A_{1}^{(k)} = \cos \theta \, \sigma_{X} + \sin \theta \, \sigma_{Y}, \ k \in \{1,...,N\}.
\end{gathered} 
\end{equation}
In addition, this quantum bound cannot be achieved classically. }
\begin{proof}
Suppose $(\phi',\theta') \in \mathcal{F}$, and consider the ideal projection operators $P_{a_{k'}|0}^{(k')} = \ketbra{\varphi_{a_{k'}}}{\varphi_{a_{k'}}}$ derived from the observables in \cref{eq:evenStrat_2}, i.e., $\cos \phi \, \sigma_{X} - \sin \phi \, \sigma_{Y} = \ketbra{\varphi_{0}}{\varphi_{0}} - \ketbra{\varphi_{1}}{\varphi_{1}}$ where
\begin{equation}
    \ket{\varphi_{a_{k'}}} = \frac{\ket{0} + (-1)^{a_{k'}}e^{-i\phi}\ket{1}}{\sqrt{2}}. 
\end{equation}
We then have
\begin{multline}
    P_{\bm{\mu}|\bm{0}}^{\overline{(k,N)}} \ket{\psi_{\mathrm{GHZ}}} = \\ \sqrt{\bra{\psi_{\mathrm{GHZ}}}P_{\bm{\mu}|\bm{0}}^{\overline{(k,N)}}\ket{\psi_{\mathrm{GHZ}}}} \ket{\varphi_{\bm{\mu}}} \otimes \ket{\Phi_{\bm{\mu},\phi}}_{Q_{k}Q_{N}}, 
\end{multline}
where $\ket{\varphi_{\bm{\mu}}}$ spans the support of $P_{\bm{\mu}|\bm{0}}^{\overline{(k,N)}}$ and 
\begin{equation}
    \ket{\Phi_{\bm{\mu},\phi}} = \frac{\ket{00} + i(-1)^{n_{\bm{\mu}}} e^{i(N-2)\phi} \ket{11}}{\sqrt{2}}.
\end{equation}
Notice $\ket{\Phi_{\bm{\mu},\phi}} = (U_{\phi}\otimes U_{\phi})\ket{\Phi_{\bm{\mu}}}$ where $U_{\phi} = \ketbra{0}{0} + e^{i(N-2)\phi/2}\ketbra{1}{1}$. We now calculate 
\begin{multline}
     \bra{\Phi_{\bm{\mu},\phi}} I_{\bm{\mu}}^{(k,l)}\ket{\Phi_{\bm{\mu},\phi}} \\  
     = (-1)^{n_{\bm{\mu}}}\bra{\Phi_{\bm{\mu}}} (U_{\phi}\otimes U_{\phi})^{\dagger} J_{\phi',\theta'}^{(k,N)} (U_{\phi}\otimes U_{\phi})\ket{\Phi_{\bm{\mu}}}.   
\end{multline}
We can now calculate how $I_{\bm{\mu}}^{(k,N)}$ transforms under $(U_{\phi} \otimes U_{\phi})^{\dagger}$. Let $O=\cos(t)\,\sigma_{X} + \sin(t)\, \sigma_{Y}$ for $t \in \mathbb{R}$ be an arbitrary observable. Then 
\begin{equation}
    U_{\phi}^{\dagger}O U_{\phi} = \cos\Big( -\frac{N-2}{2}\phi + t\Big) \, \sigma_{X} + \sin\Big( -\frac{N-2}{2}\phi + t\Big) \, \sigma_{Y}.
\end{equation}
Substituting $t = -\phi$ and $t = \theta$ into the above, we find
\begin{align}
    U_{\phi}^{\dagger} A_{0}^{(k)} U_{\phi} &= \cos \phi' \, \sigma_{X} - \sin \phi' \sigma_{Y} =: A_{0}^{'(k)}\nonumber \\
    U_{\phi}^{\dagger} A_{1}^{(k)} U_{\phi} &= \cos \theta' \, \sigma_{X} + \sin \theta' \sigma_{Y} =: A_{1}^{'(k)},
\end{align}
and similarly for party $N$, where we used the definitions of $A_{x}^{(k)}$ in \cref{eq:thPhiSt_app}. We therefore find
\begin{multline}
    J':=(U_{\phi}\otimes U_{\phi})^{\dagger} J_{\phi',\theta'}^{(k,N)} (U_{\phi}\otimes U_{\phi}) \\ = \cos 2\theta' \cos (\theta' - \phi')  A_{0}^{'(k)}\otimes A_{0}^{'(N)} \\- \cos 2\theta' \cos 2 \phi' \big(  A_{0}^{'(k)} \otimes A_{1}^{'(N)}  + A_{1}^{'(k)} \otimes A_{0}^{'(N)}  \big) \\+ \cos 2\phi' \cos (\theta' - \phi')  A_{1}^{'(k)} \otimes A_{1}^{'(N)},  
\end{multline}
The largest and smallest eigenvalues of $J'$ are $\pm\eta^{\mathrm{Q}}_{\theta',\phi'}$, and are associated with the eigenvectors $(\ket{00} \pm i \ket{11})/\sqrt{2}$. We therefore have
\begin{equation}
    \bra{\Phi_{\bm{\mu},\phi}} I_{\bm{\mu}}^{(k,N)}\ket{\Phi_{\bm{\mu},\phi}}= (-1)^{n_{\bm{\mu}}}\bra{\Phi_{\bm{\mu}}} J' \ket{\Phi_{\bm{\mu}}} = \eta_{\phi',\theta'}^{\mathrm{Q}}.
\end{equation}

Putting everything together, we obtain
\begin{align}
    \sum_{\bm{\mu}} & \bra{\psi_{\mathrm{GHZ}}}P^{\overline{(k,N)}}_{\bm{\mu}|\bm{0}} \otimes I_{\bm{\mu}}^{(k,N)}\ket{\psi_{\mathrm{GHZ}}} \nonumber \\ 
    &= \sum_{\bm{\mu}} \bra{\psi_{\mathrm{GHZ}}}P_{\bm{\mu}|\bm{0}}^{\overline{(k,N)}}\ket{\psi_{\mathrm{GHZ}}} \bra{\Phi_{\bm{\mu},\phi}} I_{\bm{\mu}}^{(k,N)}\ket{\Phi_{\bm{\mu},\phi}}  \nonumber \\ 
    &= \eta_{\phi',\theta'}^{\mathrm{Q}}, \ \forall k \in \{1,...,N-1\}. 
\end{align}
The quantum bound $\eta_{N,\phi',\theta'}^{\mathrm{Q}}$ then follows by summing over $k$, which shows achievability. The minimum quantum value $-\eta_{N,\phi',\theta'}^{\mathrm{Q}}$ is achieved by relabelling the outcomes. Moreover, the quantum bound cannot be achieved classically since $I_{\phi',\theta'}$ corresponds to an expanded Bell inequality when $(\phi',\theta') \in \mathcal{F}$ (cf.\ the discussion below \cref{eq:psd1}).
\end{proof} 

\subsection{Proof of \cref{prop:phi}} \label{app:prop2}

\noindent \textbf{Proposition 7.} \textit{Let $\phi \in [0,\phi_{N}^{*}]$, and $\theta = \theta(\phi)$ as defined in \cref{eq:thPhi,eq:thPhi_2}. Then $(\phi',\theta') \in \mathcal{F}$, where $\phi'$ and $\theta'$ are defined in \cref{lem:NevenBI_2}.}
\begin{proof}
For convenience, we restate the set $\mathcal{F}$ below,
\begin{equation}
    \mathcal{F} =  \Big[ (-\pi/4,\pi/4) \times \mathcal{G} \Big] \cup \Big[ (-\pi/4,\pi/4) \setminus \{0\} \times \{\pi/2,3\pi/2\}  \Big].
\end{equation}
First we will consider the range of $\phi'$ as we vary $\phi$. By definition, $\phi' = \phi N/2$, which takes values over the interval $[0,\phi^{*}_{N}N/2]$. Then 
\begin{equation}
    \frac{\phi^{*}_{N}N}{2} = \mathrm{sgn}[\sin(2\theta_{N}^{*})]\frac{\pi}{8},
\end{equation}
which implies $-\pi/4 < \phi' < \pi/4$.

Next we consider $\theta'$ as a function of $\phi$,
\begin{equation}
    \theta' = \theta(\phi) - \frac{N-2}{2}\phi = -\frac{N(N-3)}{2(N+1)}\phi + \theta_{N}^{*}.
\end{equation}
When $\phi = 0$, $\theta' = \theta_{N}^{*}$, and when $\phi = \phi_{N}^{*}$,
\begin{align}
    \theta' &= -\mathrm{sgn}[\sin(2\theta_{N}^{*})]\frac{\pi}{8}\frac{N-3}{N+1} + \theta_{N}^{*} =: \tilde{\theta}'. \label{eq:thetap}
\end{align}
We first consider $N=8n +2$, which satisfies $\mathrm{sgn}[\sin(2\theta_{N}^{*})] = -1$ and $\pi/2 < \theta_{N}^{*} < 3\pi/4$. By inserting the definition of $\theta^{*}_{N}$ from \cref{prop:theta}, 
\begin{equation}
    \tilde{\theta}' = \frac{\pi}{8}\frac{8n-1}{8n+3} + \frac{2\pi}{8n+3}\big(\frac{8n+2}{4} + \frac{1}{2}\big) = \frac{5\pi}{8} \in (\pi/2,3\pi/4).
\end{equation}
We therefore see $\theta' \in (\pi/2,3\pi/4)$ (since $\theta'$ is a linear function of $\phi$ and lies in that range for the extremal values of $\phi$). 

For $N= 8n + 8$, $\mathrm{sgn}[\sin(2\theta_{N}^{*})] = -1$ and $3\pi/2 < \theta_{N}^{*} < 7\pi/4$. We have
\begin{multline}
    \tilde{\theta}' = \frac{\pi}{8}\frac{8n+5}{8n+9} + \frac{2\pi}{8n+9}\big(\frac{3(8n+8)}{4} + 1\big) \\ = \frac{13\pi}{8} \in (3\pi/2,7\pi/4),
\end{multline}
implying $3\pi/2 < \theta' < 7\pi/4$. For $N= 8n + 4$, we have $\mathrm{sgn}[\sin(2\theta_{N}^{*})] = +1$, $\pi/4 < \theta_{N}^{*} < \pi/2$ and
\begin{equation}
    \tilde{\theta}' = -\frac{\pi}{8}\frac{8n+1}{8n+5} + \frac{2\pi}{8n+5}\frac{8n+4}{4} = \frac{3\pi}{8} \in (\pi/4,\pi/2),
\end{equation}
implying $\pi/4<\theta'<\pi/2$. Finally, for $N = 8n + 6$, $\mathrm{sgn}[\sin(2\theta_{N}^{*})] = +1$, $5\pi/4 < \theta_{N}^{*} < 3\pi/2$ and
\begin{multline}
    \tilde{\theta}' = -\frac{\pi}{8}\frac{8n+3}{8n+7} + \frac{2\pi}{8n+7}\Big(\frac{3(8n+6)}{4}+\frac{1}{2}\Big) \\ = \frac{11\pi}{8} \in (5\pi/4,3\pi/2),
\end{multline}
implying $\pi/4<\theta'<\pi/2$. For every $N$, we have shown $\theta' \in \mathcal{G}$ and $\phi' \in (-\pi/4,\pi/4)$ for $\phi \in [0,\phi^{*}_{N}]$, hence $(\phi',\theta')\in \mathcal{F}$. 
\end{proof}

\subsection{Evidence for \cref{conj:maxMerm_2}} \label{app:conj2E}

\noindent \textbf{Conjecture 2.} \textit{For $N$ even, the maximum randomness unconditioned on Eve, $r$, that can be generated by quantum strategies achieving an MABK value, $s$, is given by  
\begin{equation}
    r(s) = \begin{cases}
        N, \ s \in (1,m_{N}^{*}], \\
        r(\phi_{s}), \ s \in (m_{N}^{*} ,2^{(N-1)/2} ],
        \end{cases}\label{eq:maxRand2}
\end{equation}
where $r(\phi)$ is defined in \cref{eq:rphi}, and 
\begin{equation}
    \phi_{s} = \mathrm{arg \, min} \big\{ |\phi| \ | \ \phi \in [0,\phi_{N}^{*}], \langle M_{N}(\phi,\theta(\phi))\rangle = s \big\}.
\end{equation}
} 

\subsubsection{Numerical approach for upper bounding maximum randomness versus MABK value} \label{app:numMethod}
The objective of this section is to generalize the technique introduced in \cite{WBC}, which found upper bounds on the maximum randomness versus CHSH value using SOS decompositions. A python implementation of this technique is provided in our GitHub repository~\cite{code}. We begin by defining the problem in detail.

Let $\mathcal{Q}$ denote the set of quantum distributions, and $\mathcal{M}(P)$ denote the MABK value of a distribution $P \in \mathcal{Q}$. Moreover, let $H(\bm{R}|\bm{X}=\bm{0},E)_{P}$ be the conditional von Neumann entropy of the outputs $\bm{R}$ for inputs $\bm{X} = \bm{0}$ given the observed distribution $P$, taking the infimum over all quantum strategies that could give rise to $P$, i.e.,
\begin{multline}
   H(\bm{R}|\bm{X} = \bm{0},E)_{P} \\ := \inf_{\substack{ \ket{\Psi}_{\tilde{\bm{Q}}E}, \\ \big\{\{\tilde{P}_{a_{k}|x_{k}}^{(k)}\}_{a_{k}} \big\}_{k}, \\ \text{compatible with} \ P }} H(\bm{R}|\bm{X} = \bm{0},E)_{\rho_{\bm{R}E|\bm{0}}}.
\end{multline}
Similarly, let $H(\bm{R}|\bm{X}=\bm{0})_{P}$ be the Shannon entropy of the distribution on $\bm{R}$ for inputs $\bm{X} = \bm{0}$.  Then the curve $R:[m_{N}^{*},2^{(N-1)/2}] \to [0,N], \ s \mapsto R(s)$ we want to find is defined by the optimization
\begin{align}
    R(s) = \max_P \ &H(\bm{R}|\bm{X} = \bm{0},E)_{P} \nonumber \\
     \text{s.t.} \  \ & \mathcal{M}(P) = s, \nonumber \\
    & P \in \mathcal{Q}.
\end{align}
Our technique proceeds by defining a sequence of upper bounds on $\text{Graph}[R(s)] = \{(s,r) \ | \ r = R(s) \}$. [We speak of upper bounds on $\text{Graph}[R(s)]$ because we either upper bound $R(s)$ or $R^{-1}(r)$, and, because the functions turn out to be monotonic, either gives an upper bound.] 

We begin by removing the dependence on Eve, which results in an upper bound $R(s) \leq \bar{R}(s)$ on the previous problem via strong subadditivity of the von Neumann entropy. Explicitly, 
\begin{align}
    R(s) \leq \bar{R}(s) = \max_P \ &H(\bm{R}|\bm{X} = \bm{0})_{P} \nonumber \\
     \text{s.t.} \  \ & \mathcal{M}(P) = s, \nonumber \\
    & P \in \mathcal{Q}. \label{eq:Rsbar}
\end{align}

Next we state the following two lemmas, adapted from Ref.~\cite{WBC}. 
\begin{lemma}\label{lem:monot}[Monotonicity of $\bar{R}(s)$] Suppose $m_{N}^{*}$ is the largest MABK value compatible with $p(\bm{a}|\bm{0}) = 2^{-N}$ (cf.\ Conjecture~\ref{conj:maxMerm}). Then the function $\bar{R}(s)$ is strictly decreasing on $[m_N^*,2^{(N-1)/2}]$.
\label{lem:mon}
\end{lemma}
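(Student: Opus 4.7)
The plan is to establish that $\bar{R}(s)$ is a concave function on $[m_N^*, 2^{(N-1)/2}]$ which attains its maximum value $N$ exactly at the left endpoint, and then to deduce strict monotonicity elementarily from these two facts.

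First I would prove that $\bar{R}$ is concave on the interval of admissible MABK values. Given any two quantum behaviours $P_1, P_2 \in \mathcal{Q}$ with MABK values $s_1, s_2$ and any $\lambda \in [0,1]$, the mixture $P_\lambda = \lambda P_1 + (1-\lambda) P_2$ lies in $\mathcal{Q}$ (convexity of the quantum set), its MABK value is $\mathcal{M}(P_\lambda) = \lambda s_1 + (1-\lambda) s_2$ (linearity of $\mathcal{M}$), and $H(\bm{R}|\bm{X}=\bm{0})_{P_\lambda} \geq \lambda H(\bm{R}|\bm{X}=\bm{0})_{P_1} + (1-\lambda) H(\bm{R}|\bm{X}=\bm{0})_{P_2}$ by concavity of the Shannon entropy in the distribution. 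Optimising over $P_1, P_2$ yields $\bar{R}(\lambda s_1 + (1-\lambda) s_2) \geq \lambda \bar{R}(s_1) + (1-\lambda) \bar{R}(s_2)$.

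Second I would verify the two boundary facts about $\bar{R}$. At the left endpoint, $\bar{R}(m_N^*) = N$: by \cref{prop:theta} combined with the target strategy in \cref{eq:evenStrat}, there is an explicit quantum behaviour with MABK value $m_N^*$ and $p(\bm{a}|\bm{0}) = 2^{-N}$, attaining the maximal Shannon entropy $N$ of an $N$-bit string. For any $s \in (m_N^*, 2^{(N-1)/2}]$, the standing assumption that $m_N^*$ is the largest MABK value compatible with a uniform marginal implies that every admissible $P$ has $p(\bm{a}|\bm{0}) \neq 2^{-N}$ and hence $H(\bm{R}|\bm{X}=\bm{0})_P < N$; combined with compactness of the quantum set at fixed $\mathcal{M}(P)=s$ (and continuity of $H$), this gives $\bar{R}(s) < N$ strictly.

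Finally I would combine the two ingredients by contradiction. Suppose $m_N^* \leq s_1 < s_2 \leq 2^{(N-1)/2}$ with $\bar{R}(s_1) \leq \bar{R}(s_2)$. If $s_1 = m_N^*$, then $\bar{R}(s_1) = N > \bar{R}(s_2)$, a contradiction. If $s_1 > m_N^*$, write $s_1 = \lambda m_N^* + (1-\lambda) s_2$ with $\lambda = (s_2 - s_1)/(s_2 - m_N^*) \in (0,1)$; concavity gives $\bar{R}(s_1) \geq \lambda N + (1-\lambda)\bar{R}(s_2) > \bar{R}(s_2)$, contradicting $\bar{R}(s_1) \leq \bar{R}(s_2)$. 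The only real obstacle I anticipate is ensuring the strict inequality $\bar{R}(s) < N$ for $s > m_N^*$ rather than a non-strict one: this requires that the supremum in the definition of $\bar{R}$ is attained (or approached by a sequence that does not tend to the uniform distribution), and rests on closedness of the relevant slice of $\mathcal{Q}$ together with continuity of the Shannon entropy. Once that point is handled, everything else is a clean application of the concavity of $\bar{R}$ together with the explicit strategy from Section~\ref{sec:ex1}.
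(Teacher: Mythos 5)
Your proposal is correct and follows essentially the same route as the paper: concavity of $\bar{R}$ via the mixing argument (convexity of $\mathcal{Q}$, linearity of $\mathcal{M}$, concavity of Shannon entropy), the boundary facts $\bar{R}(m_N^*)=N$ and $\bar{R}(s)<N$ for $s>m_N^*$ from the standing assumption, and then strict monotonicity from concavity plus the initial decrease. Your explicit convex-combination contradiction at the end, and your attention to attainment of the supremum (which the paper handles via compactness of $\mathcal{Q}_s$ and continuity of the entropy in a footnote to the inverse-function lemma), simply spell out steps the paper leaves terse.
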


We will prove Lemma~\ref{lem:monot} as a corollary of the following.
\begin{lemma}\label{lem:concavity}
    Let $\mathcal{Q}'$ be a convex set, $H:\mathcal{Q}'\to\mathbb{R}$ be concave, $\mathcal{M}':\mathcal{Q}'\to\mathbb{R}$ be linear and $\bar{R}'(s)=\max_P H(P)$ subject to $\mathcal{M}'(P)=s$, $P\in\mathcal{Q}'$. Then $\bar{R}'(s)$ is convex.
\end{lemma}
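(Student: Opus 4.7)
The plan is the standard Jensen/convex-combination argument applied to the joint structure of $H$, $\mathcal{M}'$, and $\mathcal{Q}'$. Fix $s_1, s_2$ in the image of $\mathcal{M}'$ restricted to $\mathcal{Q}'$ and $\lambda \in [0,1]$, and write $s_\lambda = \lambda s_1 + (1-\lambda) s_2$. For each $\epsilon > 0$ I would pick near-optimizers $P_i \in \mathcal{Q}'$ feasible for $s_i$, that is $\mathcal{M}'(P_i) = s_i$, with $H(P_i) \ge \bar{R}'(s_i) - \epsilon$ (or exact optimizers when the supremum is attained). Forming the mixture $P_\lambda = \lambda P_1 + (1-\lambda) P_2$, convexity of $\mathcal{Q}'$ places $P_\lambda$ in $\mathcal{Q}'$, while linearity of $\mathcal{M}'$ gives $\mathcal{M}'(P_\lambda) = s_\lambda$, so $P_\lambda$ is feasible for the program that defines $\bar{R}'(s_\lambda)$.

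Concavity of $H$ then supplies the Jensen bound $H(P_\lambda) \ge \lambda H(P_1) + (1-\lambda) H(P_2)$. Chaining this with the definition of $\bar{R}'$ as a supremum and letting $\epsilon \to 0$ produces the functional inequality
\begin{equation}
\bar{R}'(\lambda s_1 + (1-\lambda) s_2) \;\ge\; \lambda\, \bar{R}'(s_1) + (1-\lambda)\, \bar{R}'(s_2),
\end{equation}
which is the Jensen-type relation that feeds into the monotonicity argument of Lemma~\ref{lem:monot}. I would then check that $\bar{R}'$ is well-defined on the whole interval of interest by arguing that every $s$ in that interval is attained by some $P \in \mathcal{Q}'$: since $\mathcal{Q}'$ is convex and hence connected, and $\mathcal{M}'$ is linear and therefore continuous, exhibiting explicit feasibles at the endpoints (which we already have in Sections~\ref{sec:ex1} and~\ref{sec:ex2} for the MABK application) gives surjectivity onto the whole interval by the intermediate value theorem.

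The main obstacle is careful bookkeeping of the inequality direction supplied by concavity of $H$ together with the outer $\max$, and in particular making sure the $\epsilon$-arguments are uniform so that the limit at the end of the chain is legitimate. A secondary technical point is that $\mathcal{Q}'$ is generally not closed (the quantum set in a fixed input/output scenario is known to be non-closed in general), so strictly speaking the supremum need not be a maximum; I would address this either by working with an NPA relaxation of $\mathcal{Q}'$ (where the feasible set is compact and $H$ is upper semicontinuous, so that the sup is attained and the $P_i$ can be chosen exactly), or by threading $\epsilon$ through the whole argument. Neither issue is conceptually hard, but both should be recorded explicitly so that the final inequality is inherited at every $s$ in the relevant interval and can be invoked in the corollary establishing strict monotonicity of $\bar{R}$.
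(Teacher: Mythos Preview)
Your proposal is correct and follows essentially the same route as the paper: pick (near-)optimizers $P_1,P_2$ at $s_1,s_2$, form the convex combination $P_\lambda$, use linearity of $\mathcal{M}'$ to verify feasibility at $s_\lambda$, and then apply concavity of $H$ to obtain $\bar{R}'(\lambda s_1+(1-\lambda)s_2)\ge\lambda\bar{R}'(s_1)+(1-\lambda)\bar{R}'(s_2)$. The paper writes this as a single chain of inequalities over the $\max$ operators rather than via explicit $\epsilon$-approximants, but the argument is the same. Your extra care about attainment of the supremum and well-definedness on the interval is reasonable bookkeeping that the paper simply takes for granted; note also that the inequality you (and the paper) derive is concavity of $\bar{R}'$, which is what is actually used in the proof of Lemma~\ref{lem:monot}, despite the word ``convex'' in the lemma statement.
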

\begin{proof}
We directly calculate
    \begin{align}
    \bar{R}'[\lambda s_1 + (1-\lambda)s_2] = \max_P \ &H(P) \nonumber \\
     \text{s.t.} \  \ & \mathcal{M}'(P) = \lambda s_1 + (1-\lambda) s_2, \nonumber \\
    & P \in \mathcal{Q}', \nonumber \\
    \geq \max_{P_1,P_2}\  & H(\lambda P_1 + (1-\lambda)P_2) \nonumber \\
    \text{s.t.} \  \ & \mathcal{M}'(P_1) = s_1,\ \mathcal{M}'(P_2) = s_2, \nonumber \\
    & P_1,P_2 \in \mathcal{Q}', \nonumber \\
    \geq \max_{P_1,P_2}\  & \lambda H(P_1) + (1-\lambda)H(P_2) \nonumber \\
    \text{s.t.} \  \ & \mathcal{M}'(P_1) = s_1,\ \mathcal{M}'(P_2) = s_2, \nonumber \\
    & P_1,P_2 \in \mathcal{Q}', \nonumber \\
    = \lambda \bar{R}'(s_1 & )  +  (1-\lambda)\bar{R}'(s_2),
\end{align}
where the first inequality follows due to the linearity of $\mathcal{M}'$ and convexity of $\mathcal{Q}'$, and the second inequality due to the concavity of $H$.
\end{proof}
    
\begin{proof}[Proof of Lemma~\ref{lem:monot}]
Note that 
$H(\bm{R}|\bm{X} = \bm{0})_{P}$ is concave in $P$, the quantum set, $\mathcal{Q}$, is convex and the MABK value, $\mathcal{M}$, is linear. Lemma~\ref{lem:concavity} hence implies that $\bar{R}(s)$ is concave in $s$.

Then note that $\bar{R}(m_{N}^{*}) > \bar{R}(s) \ \forall s\in (m_{N}^{*},2^{(N-1)/2}]$. This follows from the assumption that the largest MABK value achievable when $p(\bm{a}|\bm{0}) = 2^{-N}$ is $m_{N}^{*}$, and hence $\bar{R}(s)$ is initially decreasing, which implies the claim. 
\end{proof}
In the case $N=2$, it is known that $m_{2}^{*}$ is the largest MABK value compatible with $p(a,b|0,0) = 1/4$; see~\cite[Corollary 1]{WBC}. For $N = 4,6,8,10,12$, we have verified tightness numerically, as shown in \cref{tab:evidence}\footnote{Monotonicity of $\bar{R}(s)$ is not needed to establish \cite[Corollary 1]{WBC} or the upper bounds presented in \cref{tab:evidence}.}

\begin{lemma}[Inverse function of $\bar{R}(s)$]\label{lem:inv}
    Suppose $r = \bar{R}(s)$. The function $\bar{R}(s)$ has the following inverse, denoted $\bar{R}^{-1}$, that satisfies $s = \bar{R}^{-1}(r)$, given by
    \begin{align}
    \bar{R}^{-1}(r) = \max_{P} \ &\mathcal{M}(P) \nonumber \\
     \mathrm{s.t.} \  \ & H(\bm{R}|\bm{X} = \bm{0})_{P} = r, \nonumber \\
    & P \in \mathcal{Q}. \label{eq:invR}
\end{align}
\end{lemma}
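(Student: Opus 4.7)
The plan is to verify that the optimization problem in~\eqref{eq:invR} genuinely inverts $\bar{R}(s)$ on the interval where $\bar{R}$ is strictly monotonic, by establishing both inequalities $\bar{R}^{-1}(r)\geq s$ and $\bar{R}^{-1}(r)\leq s$. The essential ingredient is the strict monotonicity of $\bar{R}(s)$ on $[m_{N}^{*},2^{(N-1)/2}]$ guaranteed by Lemma~\ref{lem:mon}, so that the functional inverse is well-defined in the first place.

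For $\bar{R}^{-1}(r)\geq s$: because $H(\bm{R}|\bm{X}=\bm{0})_{P}$ depends only on the marginal of $P$ on $\bm{R}$ under the fixed input $\bm{0}$, it is a continuous function of $P$, and the feasible set $\{P\in\mathcal{Q}\,:\,\mathcal{M}(P)=s\}$ is compact (after passing to the closure of $\mathcal{Q}$, or at any fixed NPA level). Hence the maximum defining $\bar{R}(s)=r$ is attained at some behaviour $P^{*}$ with $\mathcal{M}(P^{*})=s$ and $H(\bm{R}|\bm{X}=\bm{0})_{P^{*}}=r$. This $P^{*}$ is then feasible in the optimization~\eqref{eq:invR}, so $\bar{R}^{-1}(r)\geq\mathcal{M}(P^{*})=s$. (If the supremum is not attained in $\mathcal{Q}$ itself, one argues with an $\epsilon$-optimal sequence and takes a limit.)

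For $\bar{R}^{-1}(r)\leq s$: I would argue by contradiction. Suppose $\bar{R}^{-1}(r)=s'>s$, so there exists $P'\in\mathcal{Q}$ with $\mathcal{M}(P')=s'$ and $H(\bm{R}|\bm{X}=\bm{0})_{P'}=r$. Since $s\geq m_{N}^{*}$, we have $s'>m_{N}^{*}$, so both $s$ and $s'$ lie in the strictly decreasing region of $\bar{R}$. On the one hand, $P'$ is feasible in the $\bar{R}(s')$ problem with objective value $r$, giving $\bar{R}(s')\geq r=\bar{R}(s)$. On the other hand, by Lemma~\ref{lem:mon}, $s'>s$ implies $\bar{R}(s')<\bar{R}(s)$, a contradiction. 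Combining the two inequalities gives $\bar{R}^{-1}(r)=s$, as required.

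The main obstacle I expect is the technical issue of attainment of the suprema, since the quantum set $\mathcal{Q}$ is not generally known to be closed; working instead with its closure (or an NPA-type relaxation where compactness holds) and checking that the entropy and MABK value are continuous suffices to make the attainment argument rigorous. Once that point is settled, the proof reduces to a direct application of the strictly monotonic structure of $\bar{R}(s)$, which is a purely one-variable statement about inverting a continuous strictly monotonic function.
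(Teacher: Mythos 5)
Your proof is correct and follows essentially the same route as the paper's: both directions rest on the strict monotonicity of $\bar{R}$ from \cref{lem:mon}, with the lower bound coming from feasibility of the maximizer $P^{*}$ of $\bar{R}(s)$ (whose existence the paper likewise justifies via continuity of the entropy and compactness of $\mathcal{Q}$, citing~\cite{GohGeometry}) and the upper bound from the same contradiction argument the paper uses to show that any $P$ with $H(\bm{R}|\bm{X}=\bm{0})_{P}=\bar{R}(s)$ must satisfy $\mathcal{M}(P)\leq s$. The only cosmetic difference is that the paper additionally verifies the other composition $\bar{R}(\bar{R}^{-1}(r))=r$, which is not needed for the stated claim.
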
 
\begin{proof}
We prove \cref{lem:inv} by showing $\bar{R}^{-1}(\bar{R}(s)) = s$, and $\bar{R}(\bar{R}^{-1}(r)) = r$, and using \cref{lem:mon}. First consider $\bar{R}^{-1}(\bar{R}(s)) = s$,
\begin{align}
    \bar{R}^{-1}(\bar{R}(s)) = \max_{P} \ &\mathcal{M}(P) \nonumber \\
     \text{s.t.} \  \ & H(\bm{R}|\bm{X} = \bm{0})_{P} = \bar{R}(s), \nonumber \\
    & P \in \mathcal{Q}.
\end{align}
The constraint $H(\bm{R}|\bm{X} = \bm{0})_{P} = \bar{R}(s)$ implies that the achievable MABK values for the distribution $P$ lies to the left of $s$, i.e., $\mathcal{M}(P) \leq s$, since the curve $\bar{R}(s)$ is strictly decreasing (cf.\ \cref{lem:mon}). To see this, suppose $s' = \mathcal{M}(P) > s$. Then $\bar{R}(s') < \bar{R}(s)$ by \cref{lem:mon}. But, $\bar{R}(s') \geq H(\bm{R}|\bm{X} = \bm{0})_{P}$, implying $H(\bm{R}|\bm{X} = \bm{0})_{P} < \bar{R}(s)$, which by assumption cannot hold, giving a contradiction. Therefore $\bar{R}^{-1}(\bar{R}(s)) \leq \max_{\{P\in \mathcal{Q} \ \text{s.t.} \ \mathcal{M}(P) \leq s\}}\mathcal{M}(P) = s$. Moreover, this bound is saturated by the behaviour $P^{*}$, which achieves the maximum defining $\bar{R}(s)$\footnote{Note that the function $P \mapsto H(\bm{R}|\bm{X}=\bm{0})_{P}$ is continuous, and the quantum set $\mathcal{Q}$ is compact (see the discussion in~\cite[Appendix B]{GohGeometry}), from which it follows that $\mathcal{Q}_{s} = \{ P \in \mathcal{Q} \ : \ \mathcal{M}(P) = s\}$ is also compact. This implies that the objective function in \eqref{eq:Rsbar} attains its maximum over its domain.}, i.e., satisfies $\bar{R}(s) = H(AB|X=0,Y=0)_{P^{*}}$ and $\mathcal{M}(P^{*}) = s$. This establishes $\bar{R}^{-1}(\bar{R}(s)) = s$.

For the other direction $\bar{R}(\bar{R}^{-1}(r))$, the same reasoning holds. The constraint $\mathcal{M}(P) = \bar{R}^{-1}(r)$ implies that $H(\bm{R}|\bm{X} = \bm{0})_{P} \leq r$ since any distribution that achieves a CHSH value of $\bar{R}^{-1}(r)$ can generate no more than $r$ bits of randomness. Hence $\bar{R}(\bar{R}^{-1}(r)) \leq r$. This bound is saturated by the behaviour $P^{*}$ that achieves the maximum defining $\bar{R}^{-1}(s)$, completing the proof.
\end{proof}

From the above lemma, we can solve for upper bounds on the points $(s,R(s)) \in \text{Graph}[R(s)]$ using the inverse function, i.e., $(s,R(s)) \leq (s,\bar{R}(s)) = (\bar{R}^{-1}(r),r)$ where $\bar{R}(s) = r$. What remains is to upper bound $\bar{R}^{-1}(r)$, which will correspond to an upper bound on $R(s)$ due to the monotonicity argument. Next we obtain an upper bound by applying a symmetrization step to the distribution $\{p(\bm{a}|\bm{0})\}$:
\begin{lemma}
    Let $\mathcal{E}$ be the local channel that maps all even (odd) parity outcome strings to a uniform mixture of all even (odd) parity outcome strings, 
    \begin{equation}
        \mathcal{E}[p(\bm{a}|\bm{0})] = \begin{cases}
            \frac{1}{2^{N-1}}\sum_{\bm{b} \ \mathrm{even}} p(\bm{b}|\bm{0}), \ \mathrm{if} \ \bm{a} \ \mathrm{has \ even \ parity,} \\
            \frac{1}{2^{N-1}}\sum_{\bm{b} \ \mathrm{odd}} p(\bm{b}|\bm{0}), \ \mathrm{if} \ \bm{a} \ \mathrm{has \ odd \ parity.} 
        \end{cases}
    \end{equation}
    The entropy after applying $\mathcal{E}$ is non-decreasing, and the MABK value is invariant under $\mathcal{E}$.  
    \label{lem:entSimp}
    \end{lemma}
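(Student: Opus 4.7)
The plan is to realize $\mathcal{E}$ as a classical mixture of local relabelings, which makes both claims essentially immediate. Specifically, for each bit string $\bm{b}\in\{0,1\}^{N}$ of even parity, let $\tau_{\bm{b}}$ denote the relabeling that maps outcomes $\bm{a}\mapsto\bm{a}\oplus\bm{b}$ applied at every measurement setting. This is implementable as a local post-processing assisted by shared randomness: the parties pick $\bm{b}$ uniformly among the $2^{N-1}$ even-parity strings, and each party $k$ flips its outcome $a_{k}$ iff $b_{k}=1$. Then
\begin{equation}
\mathcal{E}[p](\bm{a}\mid\bm{x}) = \frac{1}{2^{N-1}}\sum_{\bm{b}\ \text{even}} p(\bm{a}\oplus\bm{b}\mid\bm{x}).
\end{equation}
For $\bm{x}=\bm{0}$ this formula, applied to an even (resp.\ odd) $\bm{a}$, averages $p(\cdot\mid\bm{0})$ uniformly over all even (resp.\ odd) strings, matching the description of $\mathcal{E}$ in the lemma.

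For MABK invariance, the key observation is that every term in $\langle M_{N}\rangle$ has the form of a full $N$-party correlator $\langle\tilde{A}_{\bm{x}}\rangle=\sum_{\bm{a}}(-1)^{\sum_{k}a_{k}}p(\bm{a}\mid\bm{x})$, which depends on $p(\cdot\mid\bm{x})$ only through the parity of $\bm{a}$. Under the relabeling $\tau_{\bm{b}}$, this correlator transforms as
\begin{equation}
\langle\tilde{A}_{\bm{x}}\rangle \mapsto \sum_{\bm{a}}(-1)^{\sum_{k}a_{k}}p(\bm{a}\oplus\bm{b}\mid\bm{x}) = (-1)^{\sum_{k}b_{k}}\langle\tilde{A}_{\bm{x}}\rangle,
\end{equation}
so each individual $\tau_{\bm{b}}$ with $\bm{b}$ of even parity leaves every full correlator (and hence $\langle M_{N}\rangle$) unchanged. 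Averaging over such $\bm{b}$ preserves invariance.

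For entropy, the argument uses concavity of the Shannon entropy together with the fact that each $\tau_{\bm{b}}$ is a deterministic bijection on outcome strings, so $H(\{p\circ\tau_{\bm{b}}(\bm{a}\mid\bm{0})\})=H(\{p(\bm{a}\mid\bm{0})\})$. Then
\begin{equation}
H(\{\mathcal{E}[p](\bm{a}\mid\bm{0})\}) \geq \frac{1}{2^{N-1}}\sum_{\bm{b}\ \text{even}} H(\{p\circ\tau_{\bm{b}}(\bm{a}\mid\bm{0})\}) = H(\{p(\bm{a}\mid\bm{0})\}),
\end{equation}
establishing the claim. The only subtlety worth double-checking is that the realization via shared randomness is legitimate in our setting; but since $\mathcal{E}$ is applied as classical post-processing on the outputs for every input, shared randomness is free and the resulting distribution remains quantum-achievable by enlarging the local Hilbert spaces with a classical register. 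No step should present a genuine obstacle.
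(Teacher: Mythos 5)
Your proof is correct and reaches the same conclusions as the paper's, but by a more constructive route. The paper's proof is a two-line appeal to the data processing inequality (for the entropy) and to the invariance of the full correlators $\langle \tilde{A}_{\bm{x}}\rangle$ under $\mathcal{E}$ (for the MABK value). You instead realize $\mathcal{E}$ explicitly as a uniform mixture of local output relabelings $\bm{a}\mapsto\bm{a}\oplus\bm{b}$ over even-parity $\bm{b}$, from which both claims follow: entropy non-decrease via concavity of $H$ plus permutation-invariance (which is exactly why the doubly stochastic map that the paper feeds into data processing increases entropy), and MABK invariance because each relabeling multiplies every full correlator by $(-1)^{\sum_k b_k}=+1$. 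Your version buys two things the paper leaves implicit: it makes precise how $\mathcal{E}$ acts on the whole behaviour $p(\bm{a}|\bm{x})$ (not just the $\bm{x}=\bm{0}$ slice), which is needed for the statement $\mathcal{M}(\mathcal{E}(P))=\mathcal{M}(P)$ to be well posed, and it shows directly that $\mathcal{E}(P)$ remains quantum-achievable since the relabelings are local post-processings assisted by shared randomness. No gaps.
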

\begin{proof}
The first claim follows from the data processing inequality, $H(\bm{R}|\bm{X}=\bm{0})_{P} \leq H(\bm{R}|\bm{X}=\bm{0})_{\mathcal{E}(P)}$. The second claim comes from the fact that the correlators $\langle A_{\bm{x}} \rangle$ are invariant under $\mathcal{E}$.
\end{proof}
After the map $\mathcal{E}$ is applied, the probabilities are symmeterized, i.e., $p(\bm{a}|\bm{0}) = \epsilon$ if $\bm{a}$ is even and $p(\bm{a}|\bm{0}) = (1 - 2^{N-1}\epsilon)/2^{N-1} = (2-2^{N}\epsilon)/2^{N}$ if $\bm{a}$ is odd, for $\epsilon \in [0,2^{1-N}]$. We can hence define the following upper bound on $\bar{R}(s) \leq \bar{\bar{R}}(s)$ where
\begin{align}
    \bar{\bar{R}}(s) =  \max \ &H(\bm{R}|\bm{X}=\bm{0})_{\mathcal{E}(P)} \nonumber \\
     \text{s.t.} \  \ & \mathcal{M}(\mathcal{E}(P)) = s, \nonumber \\
    & P \in \mathcal{Q} \nonumber \\
    = \max \ &H(\bm{R}|\bm{X}=\bm{0})_{P} \nonumber \\
     \text{s.t.} \  \ & \mathcal{M}(P) = s, \nonumber \\
     & p(\bm{a}|\bm{0}) = \begin{cases}
         \epsilon \ \mathrm{if} \ \bm{a} \ \mathrm{is \ even,} \\
         (2-2^{N}\epsilon)/2^{N} \ \mathrm{if} \ \bm{a} \ \mathrm{is \ odd,}
     \end{cases} \nonumber \\
    & P \in \mathcal{Q}.
\end{align}
We can now apply \cref{lem:inv} (which still holds after applying $\mathcal{E}$, since it preserves the convexity of $\mathcal{Q}$), and define the function 
\begin{align}
    \bar{\bar{R}}^{-1}(r) = \max \ &\mathcal{M}(P) \nonumber \\
     \text{s.t.} \  \ & H(\bm{R}|\bm{X}=\bm{0})_{P} = r, \nonumber \\
     & p(\bm{a}|\bm{0}) = \begin{cases}
         \epsilon \ \mathrm{if} \ \bm{a} \ \mathrm{is \ even,} \\
         (2-2^{N}\epsilon)/2^{N} \ \mathrm{if} \ \bm{a} \ \mathrm{is \ odd,}
     \end{cases} \nonumber \\
    & P \in \mathcal{Q}. \label{eq:invSym}
\end{align}
We can further relax this by considering the correlator after symmetrization, $\langle A_{\bm{0}} \rangle  = 2^{N-1}\epsilon - 2^{N-1}(2-2^{N}\epsilon)/2^{N} = 2^{N}\epsilon - 1$. Then we have $\bar{\bar{R}}^{-1}(r) \leq \bar{\bar{\bar{R}}}^{-1}(r)$ where
\begin{align}
    \bar{\bar{\bar{R}}}^{-1}(r) = \max \ &\mathcal{M}(P) \nonumber \\
     \text{s.t.} \  \ & H(\bm{R}|\bm{X}=\bm{0})_{P} = r, \nonumber \\
     & \langle A_{\bm{0}} \rangle = 2^{N}\epsilon - 1, \nonumber \\
    & P \in \mathcal{Q}. \label{eq:invSym2}
\end{align}
This follows from the fact that the feasible region of the former optimization is a subset of the latter. 

Next, consider the expression for the entropy $H(\bm{R}|\bm{X}=\bm{0})_{P}$ of a symmetrized distribution $P$. By direct computation we find
\begin{align}
    H(\bm{R}|\bm{X} = \bm{0})_{\mathcal{E}(P)} &= - \sum_{\bm{a} \ \mathrm{even}}\epsilon\log \epsilon \nonumber \\
    &\ \ \ \ \ \ \ \ \ \ -\sum_{\bm{a} \ \mathrm{odd}}\frac{2-2^{N}\epsilon}{2^N}\log \frac{2-2^{N}\epsilon}{2^N}\nonumber \\
    &= - 2^{N-1} \Big( \epsilon\log \epsilon + \frac{2-2^{N}\epsilon}{2^N} \log \frac{2-2^{N}\epsilon}{2^N} \Big) \nonumber \\
    &= -2^{N-1} \Big( \epsilon \log 2^{N-1}\epsilon - \epsilon(N-1) \nonumber \\
    &+ \frac{1}{2^{N-1}}(1-2^{N-1}\epsilon) \nonumber \\ 
    & \ \ \ \ \ \ \ \ \ \ \cdot (\log (1-2^{N-1}\epsilon) - (N-1) ) \Big) \nonumber \\
    &= N - 1 +  H_{\mathrm{bin}}(2^{N-1}\epsilon). \label{eq:symEnt}
\end{align}
Since $2^{N-1}\epsilon \in [0,1/2]$ for $\epsilon \in [0,1/2^{N}]$, there is a one-to-one correspondence between $H(\bm{R}|\bm{X} = \bm{0})_{\mathcal{E}(P)}$ and $\epsilon$. Hence if $H(\bm{R}|\bm{X} = \bm{0})_{\mathcal{E}(P)} = r$, there exists a unique $\epsilon_{r} \in [0,1/2^{N}]$ that satisfies $r = N - 1 +  H_{\mathrm{bin}}(2^{N-1}\epsilon_{r})$. We can therefore write 
\begin{align}
    \bar{\bar{\bar{R}}}^{-1}(r) = \max \ &\mathcal{M}(P) \nonumber \\
     \text{s.t.} \  \ & 
     \langle A_{\bm{0}} \rangle = 2^{N}\epsilon_{r} - 1, \nonumber \\
    & P \in \mathcal{Q}. \label{eq:invSym3}
\end{align}

Next we introduce the function
\begin{align}
    \tilde{R}^{-1}(r) = \min_{t,z,S} \ &t \nonumber \\
    \mathrm{s.t.} \ &t\mathbb{I} - M_{N} = S 
      + z\Big( A_{\bm{0}} &- \big( 2^{N}\epsilon_{r} - 1 \big) \mathbb{I} \Big)\nonumber \\
    &S \succeq 0.
\end{align}
Here $S$ is an SOS decomposition for the operator expression
\begin{equation}
    S = \Big( t + z\big( 2^{N}\epsilon_{r} - 1 \big)\Big)\mathbb{I} - \Big( M_{N} + zA_{\bm{0}} \Big) \succeq 0,
\end{equation}
hence, for any feasible point $(t,z,S)$, all quantum correlations satisfying $\langle A_{\bm{0}} \rangle = 2^{N}\epsilon_{r} - 1$ also satisfy $t \geq \langle M_{N} \rangle$. This implies $\bar{\bar{\bar{R}}}^{-1}(r) \leq \tilde{R}^{-1}(r)$. Now, by choosing a basis of monomials for $S$, one can evaluate the function $\tilde{R}^{-1}(r)$ numerically using a standard SDP solver. We refer the reader to Ref.~\cite{BampsPironio} for details of how to find such formulation, and provide a self-contained description below. 

If the monomial basis is chosen to be $\{R_{\mu}\}_{\mu}$, where each $R_{\mu}$ is a product of operators from the set $\{A_{x_{k}}^{(k)}\}_{x_{k},k}$, then one can construct polynomials of the form $M_{i} = \sum_{\mu}q^{\mu}_{i}R_{\mu}$, $q_{i}^{\mu} \in \mathbb{C}$. $S$ then admits an SOS decomposition if $q^{\mu}_{i},R_{\mu}$ are chosen such that
\begin{equation}
    S = \sum_{i} M_{i}^{\dagger}M_{i} = \sum_{\mu\nu} M^{\mu\nu} R_{\mu}^{\dagger}R_{\nu},
\end{equation}
where $M^{\mu\nu} = \sum_{i}(q_{i}^{\mu})^{*}q_{i}^{\nu}$ are matrix elements $\bm{M}$, the Gram matrix of the set of vectors $\{\bm{q}^{\mu}\}_{\mu}$, where $\bm{q}^{\mu} = [q_{1}^{\mu},...,q_{i}^{\mu},...]^{\mathrm{T}}$. One now arrives at the following optimization problem
\begin{align}
    \min_{t,z,\bm{M}} \ &t \nonumber \\
    \mathrm{s.t.} \ &\sum_{\mu\nu}M^{\mu\nu}R_{\mu}^{\dagger}R_{\nu} = \Big( t + z\big( 2^{N}\epsilon_{r} - 1 \big)\Big)\mathbb{I} \nonumber \\
      & \ \ \ \ \ \ \ \ - \Big( M_{N} + zA_{\bm{0}} \Big),\nonumber \\
    &\bm{M} \succeq 0.
\end{align}
Now one can write $R_{\mu}^{\dagger}R_{\nu} = \sum_{i}F^{i}_{\mu\nu}E_{i}$, where $\bm{F}^{i}$ is some complex matrix with elements $F^{i}_{\mu\nu}$, for $E_{i} \in \mathrm{Canonical}\big[\{R_{\mu}^{\dagger}R_{\nu}\}_{\mu\nu}\big]$, which denotes the  canonical reduction of the set $\{R_{\mu}^{\dagger}R_{\nu}\}_{\mu\nu}$ using the commutation and projective relations (see Ref.~\cite{BampsPironio} for details). Similarly, one can express the operator expression in this basis, $\Big( t + z\big( 2^{N}\epsilon_{r} - 1 \big)\Big)\mathbb{I} - \Big( M_{N} + zA_{\bm{0}}\Big) = \sum_{i}s^{i}E_{i}$, and arrive at an SDP of the form  
\begin{align}
    \min_{t,z,\bm{M}} \ &t \nonumber \\
    \mathrm{s.t.} \ &\sum_{\mu\nu}F_{\mu\nu}^{i}M^{\mu\nu} = s^{i} \ \forall i,\nonumber \\
    &\bm{M} \succeq 0.
\end{align}
For our numerical calculations, we choose the monomial basis $\Big\{ \prod_{k=1}^{N/2}A_{x_{k}}^{(k)},\prod_{k=N/2+1}^{N}A_{x_{k}}^{(k)}\Big\}_{x_{k}}$, consisting of $2^{N/2+1}$ elements.

\subsubsection{Numerical results}
We present the numerical results which support our conjecture in \cref{fig:numEv}, for $N=4,6,8,10,12$. We find the two curves coincide very well, suggesting the lower bounds are indeed tight. The case of $N=2$ was proven in Ref.~\cite{WBC}. The technique also generates the numerical results in \cref{tab:evidence}.

\subsection{Other Bell inequalities for $N=3$} \label{app:S1S2}
The above technique can be directly applied to find bounds on other Bell inequalities. We applied this to the tripartite expressions in \cref{eq:S1,eq:S2}, which, along with the MABK expression, trivial extensions of CHSH and positivity cover all inequivalent classes of facets for the local polytope with uniform marginals (see Ref.~\cite{Sliwa_2003} for the general case). The exact trade-offs we found are detailed in \cref{fig:S1S2}, and we summarize various bounds of each class in \cref{tab:S1S2}. We restate the Bell expressions studied below~\cite{Werner01}:
\begin{align}
    M_{3} &= A_{0}(B_{0}C_{1}+B_{1}C_{0})/2 + A_{1}(B_{0}C_{0}- B_{1}C_{1})/2, \nonumber \\
    S_{1} &= \frac{1}{4} \sum_{x,y,z}A_{x}B_{y}C_{z} - A_{1}B_{1}C_{1}, \nonumber\\
    S_{2} &= A_{0}B_{0}(C_{0} + C_{1}) - A_{1}B_{1}(C_{0} - C_{1}), \nonumber \\
    S_{3} &= A_{0}(B_{0}C_{0}+B_{0}C_{1}+B_{1}C_{0}-B_{1}C_{1}), \nonumber \\
    S_{4} &= A_{0}B_{0}C_{0}, \label{eq:facets}
\end{align}
where we have made the substitution $A_{x_{0}}^{(0)} \rightarrow A_{x}$, $A_{x_{1}}^{(1)} \rightarrow B_{y}$ etc.\ for readability. For $S_{1}$ and $S_{2}$, we applied our numerical technique by substituting them in for the MABK expression. When $\epsilon = 1/8$, the SDP returns an upper bound on the maximum violation with maximum randomness, which we label $s^{*}$ (the facet class in question will be clear from context). For $S_{1}$, we numerically identified that $\epsilon = 1/6$ achieves the maximum quantum value of $5/3$, for which the corresponding entropy is given by \cref{eq:symEnt}, which evaluates to $2 + H_{\mathrm{bin}}(2/3)$; by varying $\epsilon \in [1/8,1/6]$ we achieve the trade off in \cref{fig:S1S2}, which corresponds to an upper bound. Similarly, for $S_{2}$ we found $\epsilon = (2+\sqrt{2})/16$ achieves the maximum quantum value $2\sqrt{2}$, so by varying $\epsilon \in [1/8,(2+\sqrt{2})/16]$ we find the curve in \cref{fig:S1S2}. The corresponding entropy is $2 + H_{\mathrm{bin}}((2+\sqrt{2})/4)$ which is exactly 1 plus the amount certified by maximum CHSH violation~\cite{Bhavsar2023,BrownDeviceIndependent,WBC}.  

We checked the value of our constructions in \cref{eq:evenStrat}, with $N=3$, for the expressions $\langle S_{1} \rangle$ and $\langle S_{2} \rangle$. We find
\begin{align}
    \langle S_{1} \rangle &= 2 \sin (3\theta/2) \cos^{3} (\theta/2), \nonumber \\
    \langle S_{2} \rangle &= \sin \theta - \sin 2\theta + \sin 3\theta.
\end{align}
By varying $\theta$, we find one can achieve a wide range of values from the local bound, however the values of $s^{*}$ are inaccessible using this parameterization; one will need to consider a larger family of strategies to further investigate if $s^{*}$ is achievable with maximum randomness.   

\onecolumngrid

\begin{figure}[h]
\includegraphics[width=8.4cm]{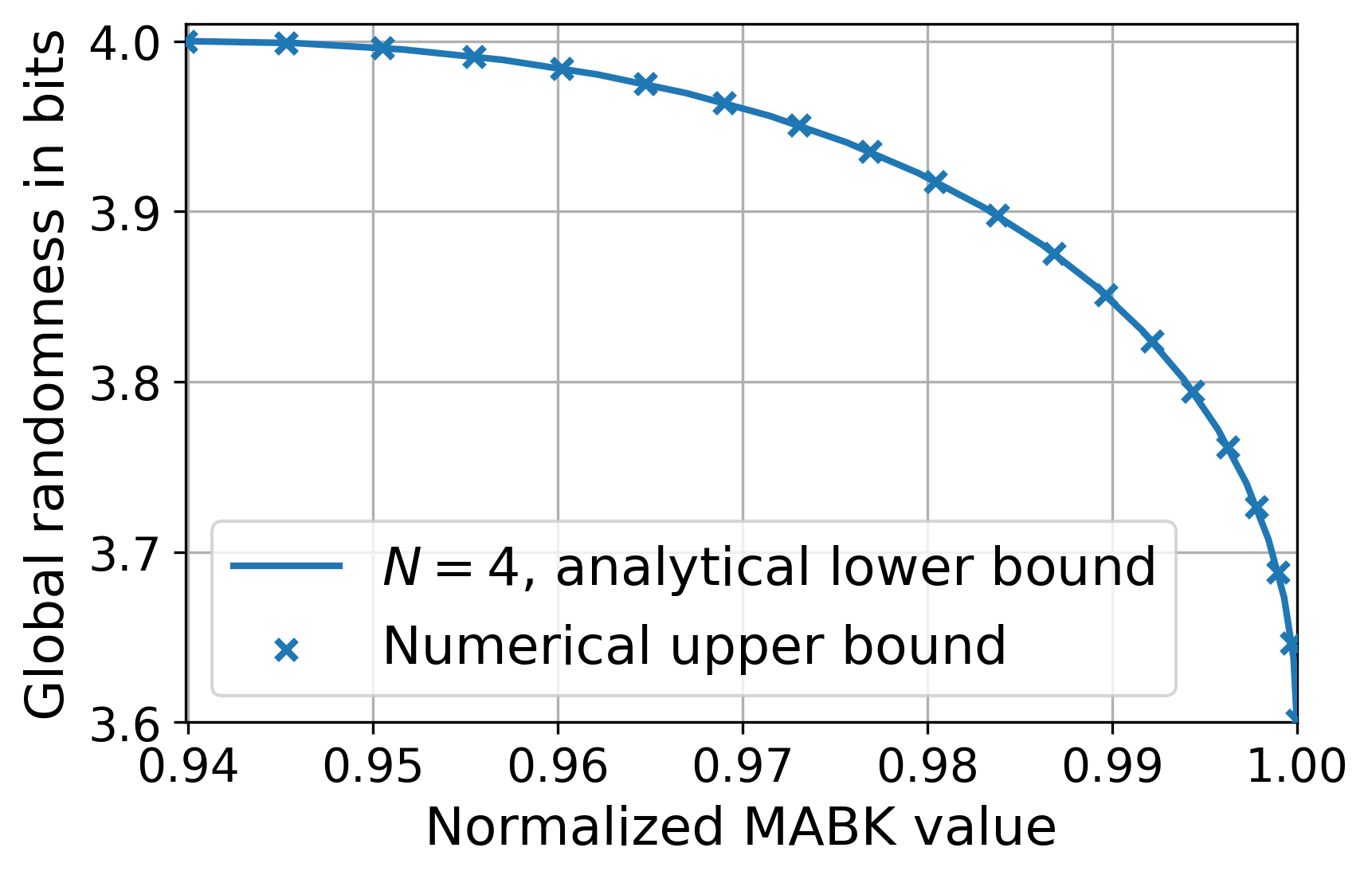}
\includegraphics[width=8.4cm]{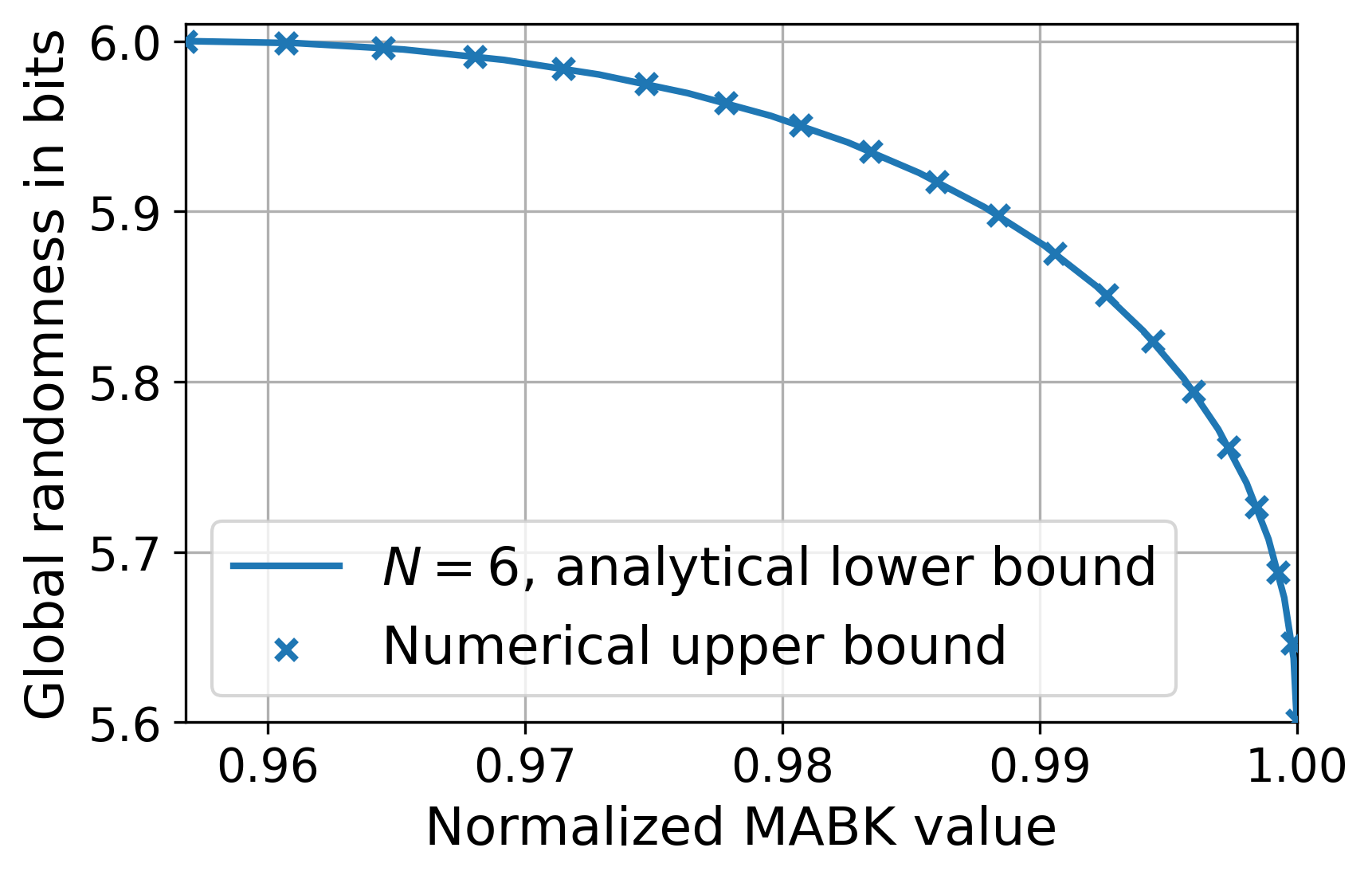}
\includegraphics[width=8.4cm]{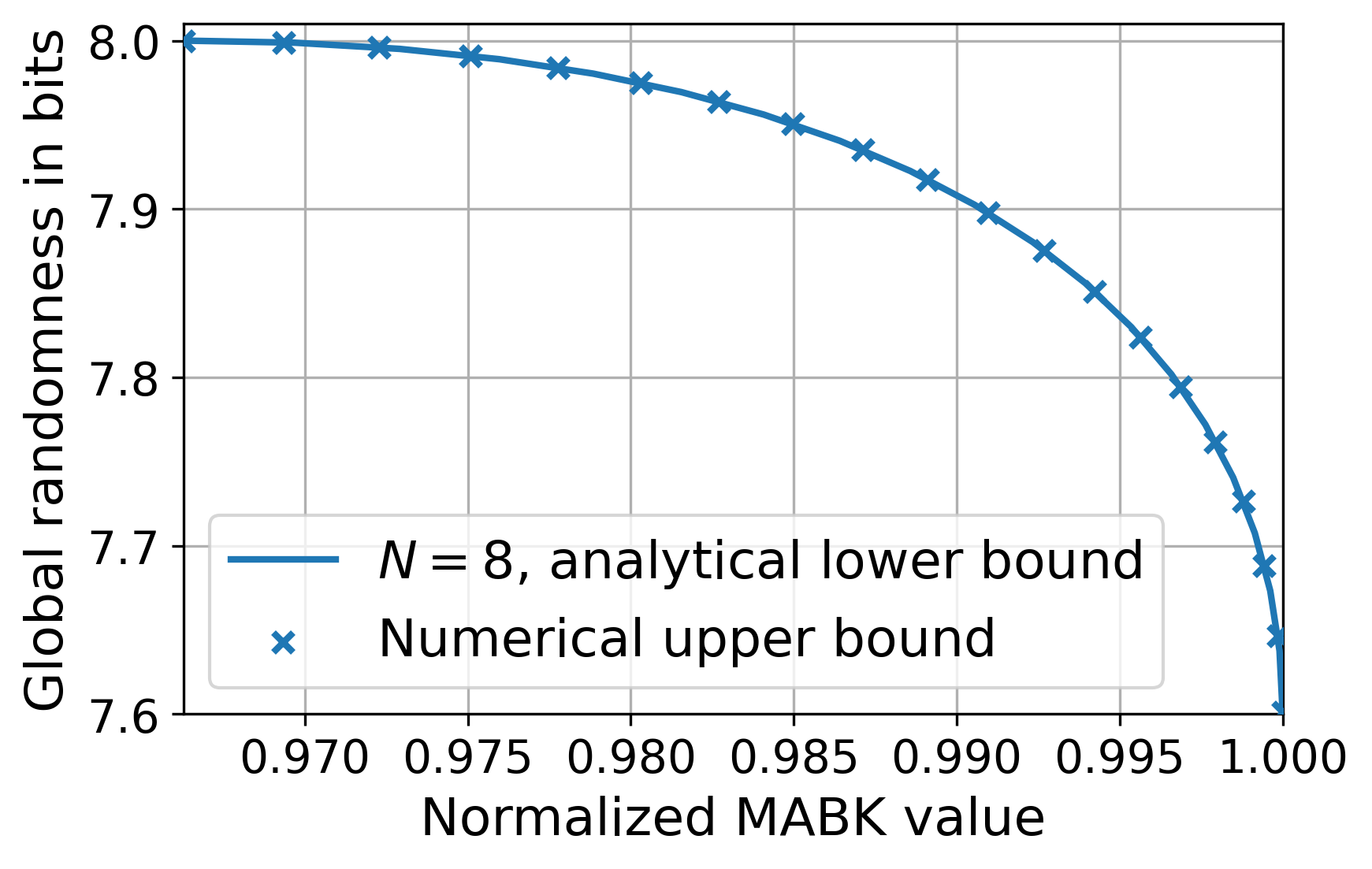}
\includegraphics[width=8.4cm]{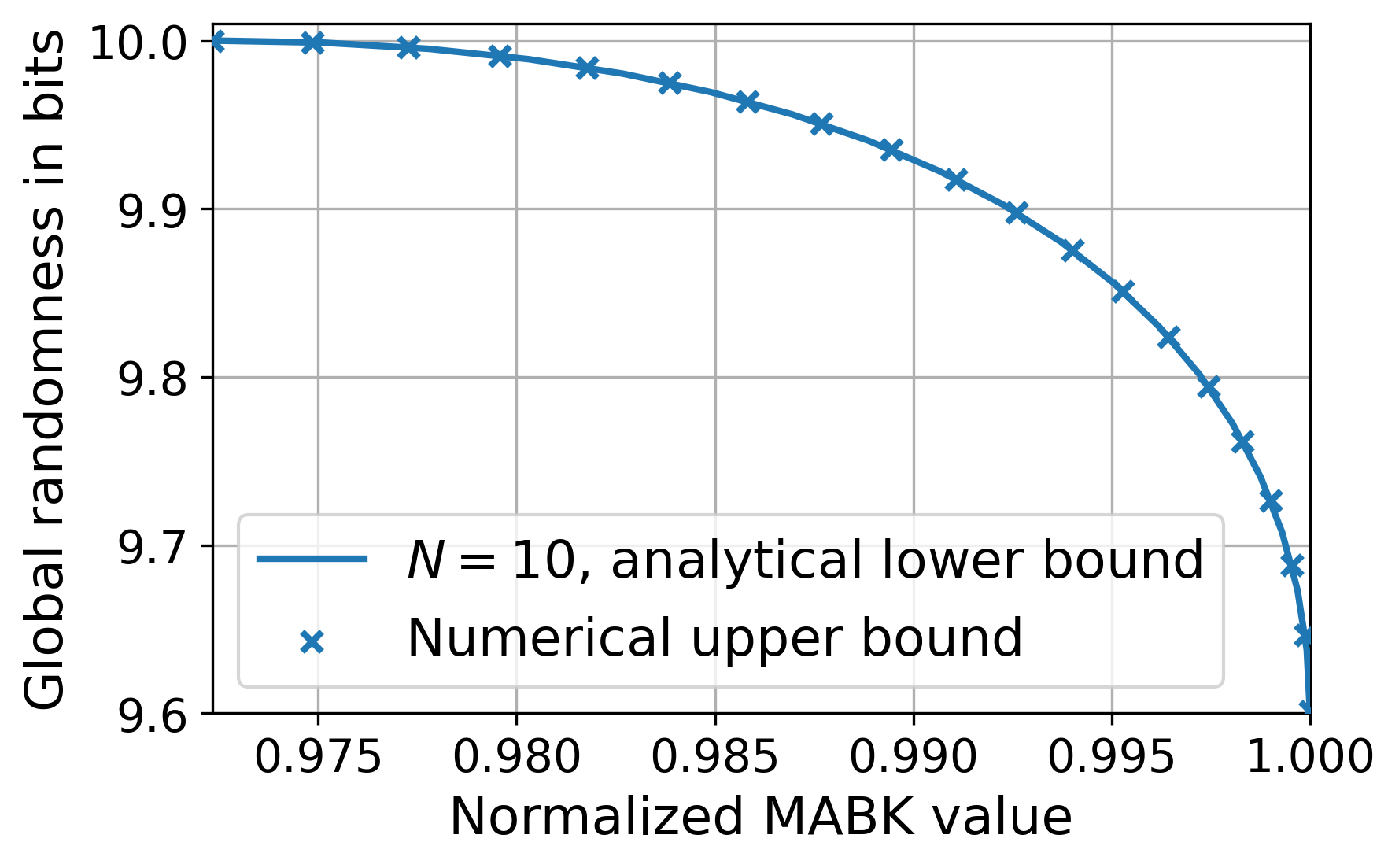}
\includegraphics[width=8.4cm]{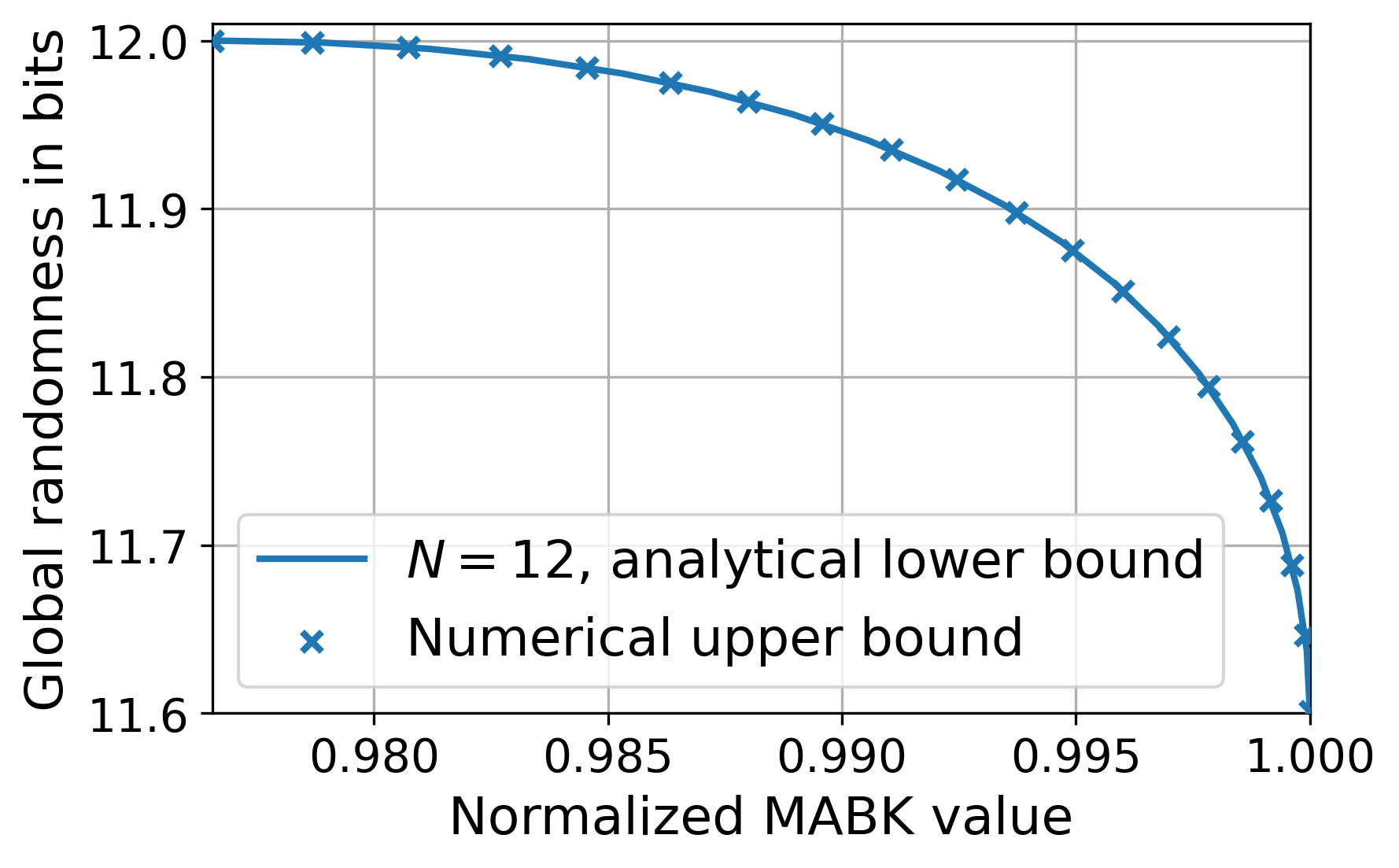}
\centering
\caption{Lower bounds on the maximum DI randomness versus MABK value, normalized by the maximum quantum value $2^{(N-1)/2}$, for even $N$. These are compared to numerical upper bounds computed via the SOS technique. Each plot starts at the conjectured maximum MABK value achievable with maximum randomness, $m_{N}^{*}$, and ends with the maximum quantum MABK value. The points coincide closely with each curve, supporting our conjecture that the analytical lower bound provided in the text is tight.}
\label{fig:numEv}
\end{figure}

\begin{table}[h!]
\begin{tabular}{|c | c | c | c| c |} 
 \hline
 Facet & $\eta^{\mathrm{L}}$ & $s^{*}$ & $\eta^{\mathrm{Q}}$  & $r(\langle S_{3} \rangle = \eta^{\mathrm{Q}})$ \\ [0.5ex] 
 \hline\hline
 $M_{3}$ & 1 & 2 & 2 & 3 \\ 
 \hline
 $S_{1}$ & 1  & 1.64621108 & $5/3 \approx 1.66666667$ & $2 + H_{\mathrm{bin}}(2/3) \approx 2.91829583$\\
 \hline
 $S_{2}$ & 2 & 2.59807617 & $2\sqrt{2} \approx 2.82842712$ & $7/2 - \log_{2}(1 + \sqrt{2})/\sqrt{2} \approx 2.60087604 $\\
 \hline
 $S_{3}$ & 2 & $3\sqrt{3}/2 \approx 2.59807621$ & $2\sqrt{2} \approx 2.82842712$ & $5/2 - \log_{2}(1 + \sqrt{2})/\sqrt{2} \approx 1.60087604$ \\
 \hline
 $S_{4}$ & 1 & 0 & 1 & 0  \\
 \hline
\end{tabular}
\caption{Relevant bounds on violations of all inequivalent facet classes of the local polytope in the $N=3$ scenario, with two binary measurement per party, given in \cref{eq:facets}. $M_{3}$ is the MABK expression, $S_{1},S_{2}$ are the other non trivial facets for this scenario, $S_{3}$ is the trivial extension of CHSH and $S_{4}$ is a positivity facet. $\eta^{\mathrm{L}}$ and $\eta^{\mathrm{Q}}$ are the local and quantum bounds respectively, and $s^{*}$ is an upper bound on the maximum value achievable with maximum randomness, computed numerically for $S_{1}$ and $S_{2}$ while all other cases are analytic. $r(\langle S_{3} \rangle = \eta^{\mathrm{Q}})$ is an upper bound on the asymptotic global DI randomness when the maximum quantum violation is achieved. The cases $M_{3}$ and $S_{3}$ are known to be tight~\cite{DharaMaxRand,WBC}. Due to its structure, $S_{2}$ retains the same characteristics as CHSH~\cite{WBC}, whereas $S_{1}$ exhibits an unexplored trade-off. Since $S_{4}$ cannot be violated, it cannot be used for DI randomness certification.}
\label{tab:S1S2}
\end{table}

\begin{figure}[h]
\includegraphics[width=8.4cm]{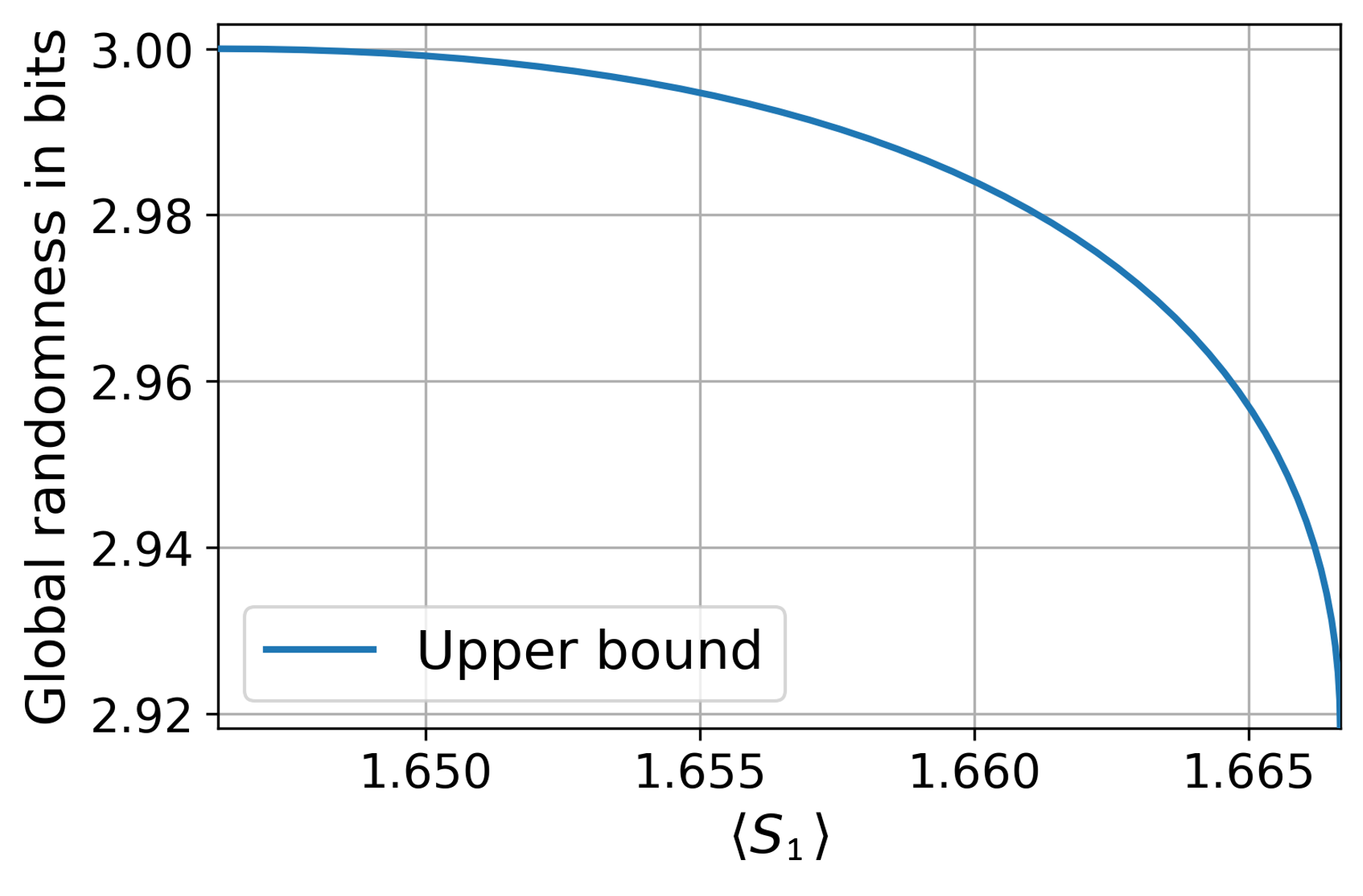}
\includegraphics[width=8.4cm]{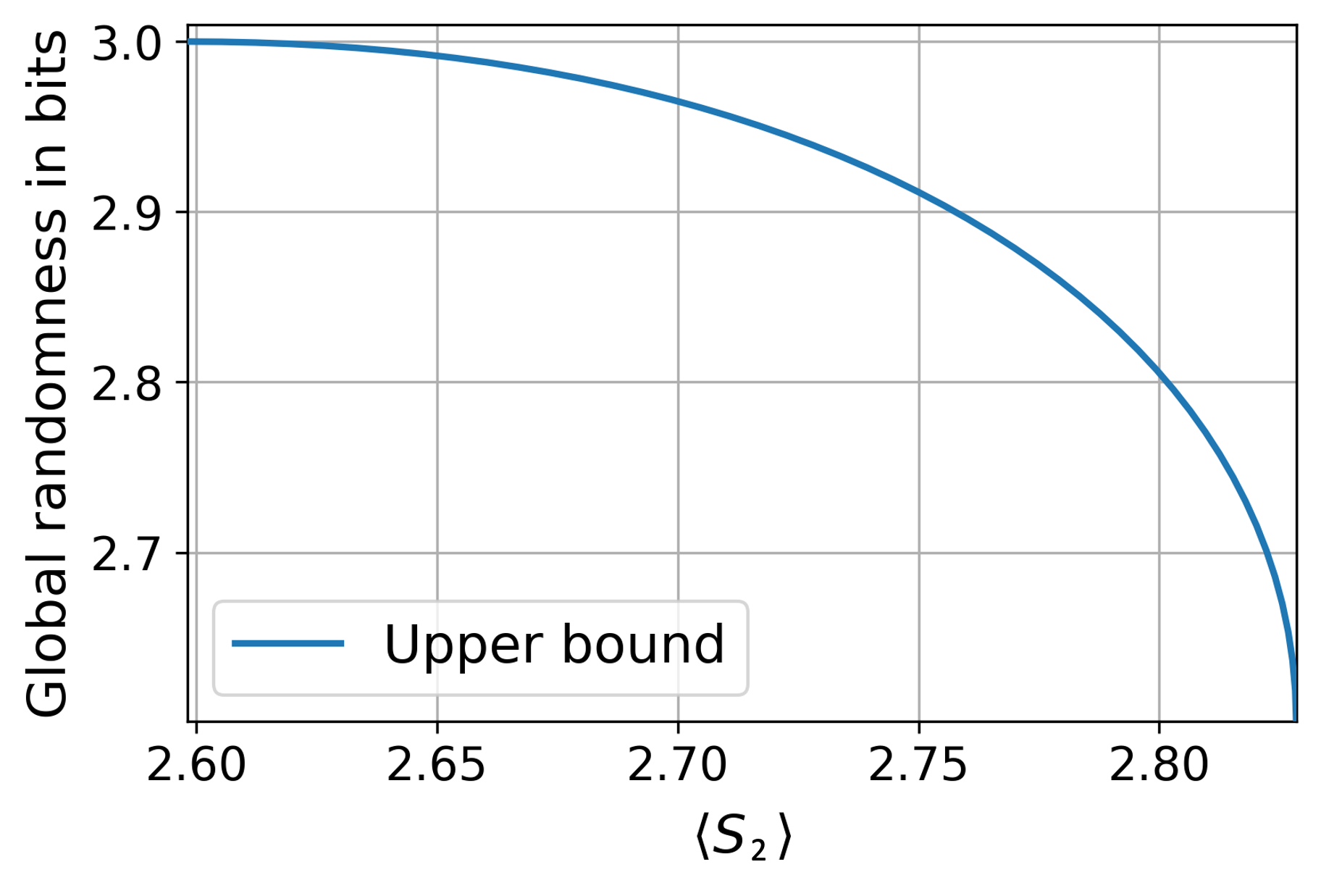}
\centering
\caption{Upper bounds on the trade-off between global DI randomness generation and violation of non-trivial facets in the tripartite scenario with two binary measurements per party. The bounds are numerical and generated by our SOS technique. $S_{1}$ and $S_{2}$ are two non-trivial extremal Bell expressions given in \cref{eq:facets}. Details of the exact numerical values are given in \cref{tab:S1S2}. }
\label{fig:S1S2}
\end{figure}

\twocolumngrid

\end{document}